\newcommand{\ind}{\mathbbm{1}}
\newtheorem{theorem}{Theorem}
\newtheorem{corollary}{Corollary}
\newtheorem{lemma}{Lemma}
\newtheorem{definition}{Definition}
\newtheorem{observation}{Observation}
\newtheorem{remark}{Remark}
\newcommand{\junk}[1]{}
\newcommand{\costasnote}[1]{{\color{red}{#1}}}
\newcommand{\mattnote}[1]{{\color{blue}{#1}}}
\newenvironment{prevproof}[2]{\noindent {\em {Proof of {#1}~\ref{#2}:}}}{$\Box$\vskip \belowdisplayskip}
\newenvironment{prevtheorem}[2]{\medskip \noindent {\bf {{#1}~\ref{#2}.}} \em}  {\medskip}
\newcommand{\notshow}[1]{}
\definecolor{MyGray}{rgb}{0.8,0.8,0.8}
\begin{document}
\title {On Optimal Multi-Dimensional Mechanism Design}
\author {Constantinos Daskalakis\thanks{Supported by a Sloan Foundation Fellowship and NSF Award CCF-0953960 (CAREER) and CCF-1101491.}\\
EECS, MIT \\
\tt{costis@mit.edu}
\and
S. Matthew Weinberg\thanks{Supported by a NSF Graduate Research Fellowship and a NPSC Graduate Fellowship.}\\
EECS, MIT\\
\tt{smw79@mit.edu}
}
\addtocounter{page}{-1}
\maketitle
\begin{abstract}
\noindent We efficiently solve the {\em optimal multi-dimensional mechanism design problem} for independent bidders with arbitrary demand constraints when either the number of bidders is a constant or the number of items is a constant. In the first setting, we need that each bidder's values for the items are sampled from a possibly correlated,  {\em item-symmetric} distribution, allowing different distributions for each bidder. In the second setting, we allow the values of each bidder for the items to be arbitrarily correlated, but assume that the distribution of bidder types is {{\em bidder-symmetric}}. These symmetric distributions include i.i.d. distributions, as well as many natural correlated distributions. E.g., an item-symmetric distribution can be obtained by taking an arbitrary distribution, and ``forgetting'' the names of items; this could arise when different members of a bidder population have various sorts of correlations among the items, but the items are {``the same''} with respect to a random bidder from the population.

For all $\epsilon>0$, we obtain a computationally efficient additive $\epsilon$-approximation, when the value distributions are bounded, or a multiplicative $(1-\epsilon)$-approximation when the value distributions are unbounded, but satisfy the Monotone Hazard Rate condition, covering a widely studied class of distributions in Economics. Our running time is polynomial in $\max\{\text{\#items,\#bidders}\}$, and {\em not}  the size of the support of the joint distribution of all bidders' values for all items, which is typically exponential in both the number of items and the number of bidders. Our mechanisms are randomized, explicitly price bundles, and in some cases can also accommodate budget constraints.

Our results are enabled by establishing several new tools and structural properties of Bayesian mechanisms. In particular, we provide a {\em symmetrization technique} that turns any truthful mechanism into one that has the same revenue and respects all symmetries in the underlying value distributions. We also prove that item-symmetric mechanisms satisfy a natural {\em strong-monotonicity property} which, unlike cyclic-monotonicity, can be harnessed algorithmically. Finally, we provide a technique that turns any given $\epsilon$-BIC mechanism (i.e. one where incentive constraints are violated by $\epsilon$) into a truly-BIC mechanism at the cost of {$O(\sqrt{\epsilon})$ revenue.} We expect our tools to be used beyond the settings we consider here. Indeed there has already been follow-up research~\cite{CDW,CJ} making use of our tools.
\end{abstract}

\thispagestyle{empty}

\newpage
\section{Introduction} \label{sec:introduction}

How can a seller auction off a set of items to a group of interested buyers to maximize profit? This problem, dubbed the {\em optimal mechanism design} problem, has gained central importance in mathematical Economics over the past decades. The seller could certainly auction off the items sequentially, using her favorite single-item auction, such as the English auction. But this is not always the best idea, as it is easy to find examples where this approach leaves money on the table.~\footnote{A simple example is this: Suppose that an auctioneer is selling a Picasso and a Dali painting and there are two bidders  of which one loves Picasso and does not care about Dali and vice versa. Running a separate English auction for each painting will result in small revenue since there is going to be no serious competition for either painting. But bundling the paintings together will induce competition and drive the auctioneer's revenue higher.} The chief challenge that the auctioneer faces is that the values of the buyers for the items, which determine how much each buyer is willing to pay for each item, is information that is private to the buyers, at least at the onset of the auction. Hence, the mechanism needs to provide the appropriate incentives for the buyers to reveal ``just enough'' information for the optimal revenue to be extracted.

Viewed as an optimization problem, the optimal mechanism design problem is of a rather intricate kind. First, it is  a priori not clear how to evaluate the revenue of an arbitrary mechanism because it is not clear how rational bidders will play. One way to cope with this is to only consider mechanisms where rational bidders are properly incentivized to tell the designer their complete {\em type}, i.e. how much they would value each possible outcome of the mechanism (i.e. each bundle of items they may end up getting). Such mechanisms can be \emph{Incentive Compatible} (IC), where each bidder's strategy is to report a type and the following {\em worst-case guarantee} is met: regardless of the types of the other bidders, it is in the best interest of a bidder to truthfully report her type. Or the mechanism can be \emph{Bayesian Incentive Compatible} (BIC), where it is assumed that the bidders' types come from a known distribution and the following {\em average-case guarantee} is met: in expectation over the other bidders' types, it is in the best interest of a bidder to truthfully report her type, if the other bidders  report truthfully. See Sec~\ref{sec:notation} for formal definitions. We only note here that, under very weak assumptions, restricting attention to IC/BIC mechanisms in the aforementioned settings of without/with prior information over bidders' types  is without loss of generality~\cite{AGTbook}.

But even once it is clear how to evaluate the revenue of a given mechanism, it is not necessarily clear what benchmark to compare it against. For example, it is not hard to see that the {\em social welfare}, i.e. the sum of the values of the buyers for the items they are allocated, is {\em not} the right benchmark to use, as in general one cannot hope to achieve revenue that is within any constant factor of the optimal social welfare: why would a buyer with a large value for an item pay an equally large price to the auctioneer to get it, if there is no competition for this item? Given the lack of a useful revenue benchmark (i.e. one that upper bounds the revenue that one may hope to achieve but is not too large to allow any reasonable approximation), the task of the mechanism designer can only be specified in generic terms as follows: come up with an IC/BIC auction whose revenue is at least as large as the revenue of any other IC/BIC auction.

Finally, even after restricting the search space to IC/BIC auctions and only comparing to the optimal revenue achievable by any IC/BIC auction, it is still easy to show that it is impossible to guarantee any finite approximation if no prior is known over the bidders' types. Instead, many solutions in the literature adopt a Bayesian viewpoint, assuming that a prior does exist and is known to both the auctioneer and the bidders, and targeting the optimal achievable \emph{expected revenue}. Once the leap to the Bayesian setting is made the goal is typically this: {\em Design a BIC, possibly randomized, mechanism whose expected revenue is optimal among all BIC, possibly randomized, mechanisms.}~\footnote{In view of the results of~\cite{BCKW,CMS}, to achieve optimal, or even near-optimal, revenue in correlated settings, or even i.i.d. multi-item settings, we are forced to explore randomized mechanisms.}

\smallskip One of the most celebrated results in this realm is {\em Myerson's optimal auction}~\cite{myerson}, which achieves optimal revenue via an elegant design that spans several important settings. Despite its significance, Myerson's result is limited to the case where bidders are {\em single-dimensional}. In simple terms, this means that each bidder can be characterized by a single number (unknown to the auctioneer), specifying the value of the bidder per item received. This is quite a strong assumption when the items are heterogeneous, so naturally, after Myerson's work, a large body of research has been devoted to the {\em multi-dimensional problem}, i.e. the setting where the bidders may have different values for different items/bundles of items. Even though progress has been made in certain restricted settings, it seems that we are far from an optimal mechanism, generalizing Myerson's result; see survey~\cite{optimal:econ} and its references for work on this problem by Economists.

Algorithmic Game Theory has also studied this problem, with an extra eye on the computational efficiency of mechanisms. Chawla et al.~\cite{CHK} study the case of a single (multidimensional) unit-demand bidder with independent values for the items. They propose an elegant reduction of this problem to Myerson's single-dimensional setting, resulting in  a mechanism that achieves a constant factor approximation to the optimal revenue among all BIC, possibly randomized~\cite{CMS}, mechanisms. For the same problem, Cai and Daskalakis~\cite{CD} recently closed the constant approximation gap against all deterministic mechanisms by obtaining polynomial-time approximation schemes for optimal item-pricing. As for the case of correlated values, it had been known that finding the optimal pricing (deterministic mechanism) is highly inapproximable by~\cite{BK}, although no hardness results are known for randomized mechanisms.    In the multi-bidder setting, Chawla et al.~\cite{CHMS}, Bhattacharya et al.~\cite{BGGM} and recently Alaei~\cite{alaei} obtain constant factor approximations in the case of additive bidders or unit-demand bidders and matroidal constraints on the possible allocations.

While our algorithmic understanding of the optimal mechanism design problem is solid, at least as far as constant factor approximations go, there has been virtually no result in designing computationally efficient revenue-optimal mechanisms for multi-dimensional settings, besides the single-bidder result of~\cite{CD}. In particular, one can argue that the previous approaches~\cite{alaei,BGGM,CHK,CHMS} are inherently limited to constant factor approximations, as ultimately the revenue of these mechanisms is compared against the optimal revenue in a related single-dimensional setting~\cite{CHK,CHMS}, or a convex programming relaxation of the problem~\cite{alaei,BGGM}. Our focus in this work is to fill this important gap in the algorithmic mechanism design literature, i.e. to {\em obtain computationally efficient near-optimal multi-dimensional mechanisms}, {coming $\epsilon$-close to the optimal revenue in polynomial time}, for any desired accuracy $\epsilon>0$. We obtain a Polynomial-Time Approximation Scheme (PTAS) for the following two important cases of the general problem.

\smallskip\noindent  \framebox{
\begin{minipage}{\hsize}
{\bf The BIC $k$-items problem.} Given as input an arbitrary (possibly correlated) distribution $\mathcal{F}$ over valuation vectors for $k$ items, a demand bound $C$, and an integer $m$, the number of bidders, output a BIC mechanism $M$ whose expected revenue is optimal relative to any other, possibly randomized, BIC mechanism, when played by $m$ additive bidders with demand constraint $C$ whose valuation vectors are sampled independently from $\mathcal{F}$.
\end{minipage}}

\smallskip\noindent  \framebox{
\begin{minipage}{\hsize} {\bf The BIC $k$-bidders problem.} Given as input $k$ item-symmetric\footnote{A distribution over $\mathbb{R}^n$ is symmetric if, for all $\vec{v} \in \mathbb{R}^n$, the probability it assigns to $\vec{v}$ is equal to the probability it assigns to any permutation of $\vec{v}$.} distributions $\mathcal{F}_1,\ldots,\mathcal{F}_k$, demand bounds $C_1,\ldots,C_k$ (one for each bidder), and an integer $n$, the number of items, output a BIC mechanism $M$ whose expected revenue is optimal relative to any other, possibly randomized, BIC mechanism, when played by $k$ additive bidders with demand constraints $C_1,\ldots,C_k$ respectively whose valuation vectors for the $n$ items are sampled independently from $\mathcal{F}_1,\ldots,\mathcal{F}_k$.
\end{minipage}}

\smallskip In other words, the problems we study are where either the number of bidders is large, but they come from the same population, i.e. each bidder's value vector is sampled from the same, arbitrary, possibly correlated distribution, or the number of items is large, but each bidder's value distribution is item-symmetric (possibly different for each bidder). While these do not capture the problem of Bayesian mechanism design in its complete generality, they certainly represent important special cases of the general problem and indeed  the first interesting cases for which computationally efficient near-optimal mechanisms have been obtained. Before stating our main result, it is worth noting that:

\medskip \noindent  ~~\begin{minipage}{16.5cm}$\bullet$ When the number of bidders is large, it does not make sense to expect that the auctioneer has a separate prior distribution for the values of each individual bidder for the items. So our assumption in the $k$-items problem that the bidders are drawn from the same population of bidders is a realistic one, and---in our opinion---the practically interesting case of the general problem. Indeed, there are hardly any practical examples of auctions using bidder-specific information (think, e.g., eBay, Sotheby's etc.) A reasonable extension  of our model would be to assume that bidders come from a constant number of different sub-populations of bidders, and that the auctioneer has a prior for each sub-population. Our results extend to this setting.\end{minipage}

\medskip \noindent  ~~\begin{minipage}{16.5cm}$\bullet$ When the number of items is large, it is still hard to imagine that the auctioneer has a distribution for each individual item. In the $k$-bidders problem, we assume that each bidder's value distribution is item-symmetric. This certainly contains the case where each bidder has i.i.d. values for the items, but there are realistic applications where values are correlated, but still item-symmetric. Consider the following scenario: the auctioneer has the same number of Yankees, Red Sox, White Sox, and  Mariners baseball caps to sell.  Each bidder is a fan of one of the four teams and has non-zero value for exactly one of the four kinds of caps, but it is unknown to the auctioneer which kind that is and what the value of the bidder for that kind is. Hence, the values of a random bidder for the caps are certainly non i.i.d., as if the bidder likes a Red Sox cap then she will equally like another, but will have zero value for a Yankees cap. Suppose now that we are willing to make the assumption that all teams have approximately the same number of fans and those fans have statistically the same passion for their team. Then a random bidder's values for the items is drawn from an item-symmetric distribution, or close to one, so we can handle such distributions. In this case too, our techniques still apply if we deviate from the item-symmetric model to models where there is a constant number of types of objects, e.g. caps and jerseys, and symmetries do not permute types, but permute objects within the same type.\end{minipage}

\begin{theorem}\label{thm:additive}(Additive approximation) For all $k$, if $\mathcal{F}$  samples values from $[0,1]^k$ there exists a PTAS with additive error $\epsilon$ for the BIC $k$-items problem. For all $k$, if $\mathcal{F}_i$  samples vectors from $[0,1]^n$, there exists a PTAS with additive error {$\epsilon \cdot \max\{C_i\}$} for the BIC $k$-bidders problem.
\end{theorem}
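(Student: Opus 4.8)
The plan is to reduce each of the two problems, for any fixed accuracy, to a linear program whose size is polynomial in $\max\{\#\text{items},\#\text{bidders}\}$, and to absorb every error incurred along the way into the claimed additive term. Three ingredients drive the reduction, all of which must first be established: a \emph{symmetrization} lemma, showing that the optimal BIC mechanism may be taken to respect every symmetry of the type distribution (bidder-symmetry in the $k$-items problem, since the $m$ bidders are i.i.d.; item-symmetry in the $k$-bidders problem); a \emph{discretization} step rounding every coordinate of every reported type down to a grid of mesh $\delta=\mathrm{poly}(\epsilon)$; and an \emph{$\epsilon$-BIC to BIC} transformation repairing approximately-satisfied incentive constraints at a cost of $O(\sqrt{\epsilon})$ in expected revenue. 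The point of combining symmetrization with discretization is that the number of ``type profiles up to the relevant symmetry'' becomes polynomial: with $k$ items and mesh $\delta$ there are only $N=(1/\delta)^{O(k)}=O(1)$ discretized single-bidder types, hence only $\binom{m+N-1}{N-1}=\mathrm{poly}(m)$ multisets of them; dually, with $k$ bidders there are $L=1/\delta=O(1)$ value levels, an item is described by one of $L^{k}=O(1)$ ``columns'' of levels, and an item-symmetric profile is a histogram over columns, of which there are $\binom{n+L^{k}-1}{L^{k}-1}=\mathrm{poly}(n)$. The \emph{strong-monotonicity} property of item-symmetric mechanisms is what I would invoke to make this compression lossless in the $k$-bidders case and to pass between reduced forms and explicit mechanisms there.

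For the discretized, symmetrized instance I would then write the following linear program. For each symmetry class $\sigma$ of type profiles and each pair $(t,i)$ of a discretized single-bidder type $t$ and an item $i$ -- respectively each pair of a column-type and a bidder index in the $k$-bidders case -- introduce a variable $x_{t,i}(\sigma)$ for the marginal probability that such a bidder receives such an item conditioned on profile $\sigma$, and a variable $p_t(\sigma)$ for the corresponding expected payment. For fixed $\sigma$, require $(x_{\cdot,\cdot}(\sigma))$ to lie in the fractional bipartite degree-constrained-subgraph polytope between bidders and items (item-degree at most $1$, bidder-degree at most the demand bound); this polytope is integral, so any feasible point is a convex combination of genuine allocations and is therefore \emph{exactly} realizable by a lottery, symmetrized within each type class -- which is how we sidestep a Border-type characterization of reduced forms. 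The individual-rationality and BIC constraints are linear in the $x$'s and $p$'s: a ``type $t$ misreports $t'$'' constraint is a single, precomputable convex combination, with multinomial weights over the polynomially many profiles $\sigma$, of linear terms, and there are only $N^{2}=O(1)$ such constraints. Expected revenue is linear. Since there are polynomially many variables and constraints of polynomially many bits, the LP is solved exactly in polynomial time, producing a revenue-optimal symmetric (approximately-)BIC mechanism $\widehat M$ for the discretized instance, output as the tables $x(\sigma),p(\sigma)$ together with the on-the-fly lottery decomposition.

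Finally I would transfer $\widehat M$ to the original instance and bound the loss. Having each bidder round her report down to the grid changes no payment of $\widehat M$ and perturbs any interim utility by at most the value she attaches to $O(k)$ items of size $\delta$, i.e.\ by $O(k\delta)$ in the $k$-items case and by $O(\delta\cdot\max_i C_i)$ in the $k$-bidders case; so $\widehat M\circ(\text{round down})$ is an $O(\delta)$-BIC mechanism for the original instance (up to the $\max_i C_i$ scaling) with the same revenue as $\widehat M$ on the discretized instance. In the other direction, given the true optimum $M^{*}$ one obtains a (nearly-)BIC mechanism for the discretized instance of revenue exactly $\mathrm{rev}(M^{*})$ by a ``perturb-up'' coupling: on grid report $t'$, privately resample a true type from $\mathcal{F}\mid(\text{rounds to }t')$ and then run $M^{*}$; drawing $t'$ from the discretized law and resampling reconstitutes $\mathcal{F}$, so revenue is preserved, while incentives are off by only $O(\delta)$. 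Hence the LP value is at least $\mathrm{OPT}-O(\delta)$, and after applying the $\epsilon$-BIC to BIC transformation to $\widehat M\circ(\text{round down})$ we obtain a genuinely BIC mechanism for the original instance with revenue at least $\mathrm{OPT}-O(\sqrt{\delta})$ (times $\max_i C_i$ in the $k$-bidders case); taking $\delta=\Theta(\epsilon^{2})$ (with an extra factor $1/k$ in the $k$-items case) makes the additive loss at most $\epsilon$, respectively $\epsilon\cdot\max_i C_i$. I expect the genuinely hard parts to be the two structural tools that are not mere bookkeeping: proving the symmetrization lemma at the required generality, and -- especially for many items -- establishing the strong-monotonicity property of item-symmetric mechanisms and showing it can actually be exploited (to collapse the effective type space and to rebuild explicit mechanisms), together with proving the $\epsilon$-BIC to BIC conversion with only an $O(\sqrt{\epsilon})$ revenue loss; the $\sqrt{\cdot}$ there is exactly what forces $\delta=\Theta(\epsilon^{2})$ and hence the particular polynomial running time.
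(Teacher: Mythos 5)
Your proposal follows the paper's proof in all essential respects: symmetrize the optimal mechanism to respect bidder- or item-symmetry, discretize to a $\delta$-grid, exploit strong-monotonicity to keep the $k$-bidders LP polynomial, solve the succinct LP, and repair the incentive constraints via the replica/surrogate $\epsilon$-BIC to BIC reduction, choosing $\delta=\Theta(\epsilon^2)$ because that reduction costs roughly $\eta\cdot\mathrm{OPT}+(\delta/\eta)T$ so that optimizing $\eta$ gives an $O(\sqrt{\delta})$ loss. The only cosmetic deviations are that you couple $\mathcal{D}$ to $\mathcal{D}'$ directly rather than via the rounded-up distribution $\mathcal{D}''$ as the paper does in Lemma~\ref{lem:deltaIC}, and you describe the final transformation as applied to $\widehat M\circ(\text{round down})$ rather than to $\widehat M$ on $\mathcal{D}'$; neither changes the structure or the bound.
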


\begin{remark}\label{rem:additive} Some qualifications on Theorem~\ref{thm:additive} are due.
\begin{itemize}
\item The mechanism output by our PTAS is truly BIC, not $\epsilon$-BIC, and there are no extra assumptions necessary to achieve this.
\item We make no assumptions about the size of the support of $\mathcal{F}_i$ or $\mathcal{F}$, as the runtime of our algorithms {\em does not} depend on the size of the support. This is an important distinction between our work and the literature where it is folklore knowledge that if one is willing to pay computational time polynomial in the size of the support of the value distribution, then the optimal mechanism can be easily computed via an LP (see, e.g.,~\cite{BGGM,BCKW,DFK}). However, exponential size supports are easy to observe. Take, e.g., our $k$-bidders problem and assume that every bidder's value for each of the items is i.i.d. uniform in $\{\$5,\$10\}$. The na\"ive LP based approach would result in time polynomial in $2^n$, while our solution needs time polynomial in $n$.

\item If we are willing to replace BIC by $\epsilon$-BIC (or $\epsilon$-IC) in Thm \ref{thm:additive} and compare our revenue to the best revenue achievable by any BIC (or IC) mechanism, then we can also accommodate budget constraints. The only step of our algorithm that does not respect budgets is the $\epsilon$-BIC to BIC reduction (Thm~\ref{thm:epsilon-BIC to BIC}). For space considerations, we restrict our attention to BIC throughout the main body of the paper and prove the related claims for IC in App~\ref{sec: IC results}.

\item If the value distributions are discrete and every marginal has constant-size support, then our algorithms achieve {\em exactly optimal revenue} in polynomial time, even though the support of such a distribution may well be exponential. For instance, in the example given in the second bullet our algorithm obtains exactly optimal revenue in time polynomial in $n$. In these cases, we can find optimal truly BIC or truly IC mechanisms that also accommodate budget constraints.

\item The mechanisms produced by our techniques satisfy the demand constraints of each bidder in a strong sense (and not in expectation). Moreover, the user of our theorem is free to choose whether they want to satisfy {\em ex-interim individual rationality}, that the expected value of a bidder for the received bundle of items is larger than the expected price she pays, or {\em ex-post individual rationality},  where this constraint is true with probability $1$ (and not just in expectation). We focus the main presentation on producing mechanisms that are ex-interim IR. In App~\ref{app:ex-post IR} we explain the required modification for producing ex-post IR mechanisms {\em without any loss in revenue. }


\item The assumption that $\mathcal{F}_i,\mathcal{F}$ sample from $[0,1]^n$ as opposed to some other bounded set is w.l.o.g. and previous work has made the same assumption~\cite{HKM,HL} on the input distributions.

\end{itemize}
\end{remark}


\noindent One might prefer to assume that the value distributions are not upper bounded, but  satisfy some tail condition, such as the Monotone Hazard Rate condition (see App~\ref{app:MHR}).~\footnote{The class of Monotone Hazard Rate distributions is a family of distributions that is commonly used in Economics applications, and contains such familiar distributions as the Normal, Exponential and Uniform distributions.} Using techniques from~\cite{CD}, we can  extend our theorems to MHR distributions. All the relevant remarks still apply.

\begin{corollary}\label{cor:MHR}(Multiplicative approximation for MHR distributions) For all $k$, if the $k$ marginals of $\mathcal{F}$ all satisfy the MHR condition, there exists a PTAS obtaining at least a $(1-\epsilon)$-fraction of the optimal revenue for the BIC $k$-items problem (whose runtime does not depend on $\mathcal{F}$ or $C$). Likewise, for all $k$, if every marginal of $\mathcal{F}_i$ is MHR for all $i$, there exists a PTAS obtaining at least a $(1-\epsilon)$-fraction of the optimal revenue for the BIC $k$-bidders problem (whose runtime does not depend on $\mathcal{F}_i$ or $C_i$).
\end{corollary}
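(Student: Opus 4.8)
\medskip
\noindent\textbf{Proof plan.} The idea is a truncation‑and‑rescaling reduction to Theorem~\ref{thm:additive}, following~\cite{CD}. I describe it for the BIC $k$-bidders problem; the $k$-items problem is analogous (there the number of items is constant while the number of bidders $m$ may be large, and the additive error of Theorem~\ref{thm:additive} carries no $\max_i C_i$ factor); write $m=k$. Two standard facts about a Monotone Hazard Rate marginal of ``scale'' $\beta$ (e.g.\ its $1/e$-quantile, computable from the input) will be used: (i) \emph{exponential tails}, $\Prob[v>t]\le e^{-\Omega(t/\beta)}$ for $t=\Omega(\beta)$, hence $\E[v\cdot\ind[v>t]]\le\beta\,e^{-\Omega(t/\beta)}$; and (ii) \emph{revenue lower bound}, the monopoly (optimal single-bidder single-item) revenue of an MHR marginal of scale $\beta$ is $\Omega(\beta)$, so running Myerson's auction~\cite{myerson} on one item shows the whole instance has optimal revenue $\mathrm{OPT}=\Omega(\beta)$. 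Let $\beta$ be the largest scale over all bidders' marginals and fix a truncation threshold $t:=\beta\cdot\mathrm{poly}(\log(nm),1/\epsilon)$, the polynomial chosen large enough for the estimates below.

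First I would dispose of the tail: by fact~(i) and a union bound over the $nm$ (bidder, item) pairs, the revenue any individually rational mechanism collects from value profiles in which some coordinate exceeds $t$ is at most $\sum_{i,j}\E[\,v_{ij}\,\ind[\max_{i',j'}v_{i'j'}>t]\,]$, which --- splitting on whether $v_{ij}$ itself exceeds $t/2$ and invoking facts~(i)--(ii) --- is at most $\epsilon\beta$, hence $O(\epsilon)\,\mathrm{OPT}$, for a large enough threshold. Then I would solve a bounded instance: let $\mathcal{F}_i^{t}$ be $\mathcal{F}_i$ with each coordinate capped at $t$ (still item-symmetric, as capping commutes with coordinate permutations; supported on $[0,t]^n$), and apply Theorem~\ref{thm:additive} to the rescaled instance $\{\mathcal{F}_i^{t}/t\}$ on $[0,1]^n$ (demands unchanged) with target accuracy $\epsilon'$. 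Since $t\le\beta\cdot\mathrm{poly}(\log(nm),1/\epsilon)$, $\max_i C_i\le n$, and $\mathrm{OPT}=\Omega(\beta)$, one may take $\epsilon'$ polynomially small in $1/n,1/m,\epsilon$ so that the rescaled-back error $\epsilon'\,t\,\max_i C_i\le\epsilon\,\mathrm{OPT}$, with the PTAS still running in time polynomial in $\max\{n,m\}$ and $1/\epsilon$ and independent of the $\mathcal{F}_i$ and $C_i$. Rescaling values back up by $t$ yields a BIC-for-$\{\mathcal{F}_i^{t}\}$ mechanism $\tilde M$ with revenue at least $\mathrm{OPT}(\{\mathcal{F}_i^{t}\})-O(\epsilon)\,\mathrm{OPT}$.

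Finally I would move between the truncated and original instances, in both directions, noting that each passage perturbs incentive compatibility and individual rationality only by the mass of $\mathcal{F}_i$ above $t$ --- exponentially small by fact~(i) --- so such perturbations are repaired to \emph{exactly} BIC/IR by the $\epsilon$-BIC-to-BIC reduction (Theorem~\ref{thm:epsilon-BIC to BIC}) at a revenue cost of order the square root of the incentive slack, which the choice of $t$ keeps below $\epsilon\,\mathrm{OPT}$. Concretely: (a) running $\tilde M$ on the true $\{\mathcal{F}_i\}$ with reports clipped coordinate-wise to $t$ preserves its revenue (true profiles and their clips agree off the tail) and makes honest reporting optimal up to a per-type slack $\sum_j(v_j-t)^+$, so after repair it is a truly BIC mechanism for $\{\mathcal{F}_i\}$ with revenue $\ge\mathrm{OPT}(\{\mathcal{F}_i^{t}\})-O(\epsilon)\,\mathrm{OPT}$; (b) conversely, the optimal mechanism for $\{\mathcal{F}_i\}$, run on $\{\mathcal{F}_i^{t}\}$ with each maxed-out coordinate resampled from its conditional tail, has revenue $\ge\mathrm{OPT}(\{\mathcal{F}_i\})$ and is $\epsilon$-BIC/IR for $\{\mathcal{F}_i^{t}\}$ by the same estimate, so $\mathrm{OPT}(\{\mathcal{F}_i^{t}\})\ge(1-O(\epsilon))\,\mathrm{OPT}$. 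Chaining (a), (b) and the previous paragraph and rescaling $\epsilon$ by a constant produces a truly BIC, individually rational, demand-feasible mechanism with revenue $\ge(1-\epsilon)\,\mathrm{OPT}$, computed in time polynomial in $\max\{n,m\}$ and $1/\epsilon$ and independent of the input distributions and demands; all items of Remark~\ref{rem:additive} persist because every step preserves or repairs BIC-ness, IR and demand feasibility.

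The main obstacle is coordinating the three $\epsilon$-losses --- the truncation loss, the additive-PTAS error, and the $\epsilon$-BIC repair cost --- under the constraint that $t$, and hence the demanded accuracy $1/\epsilon'$, remain only \emph{polynomial} in $\log(nm)$ and $1/\epsilon$ (and independent of the input), so that the reduction runs in the claimed time. That is exactly what the Monotone Hazard Rate assumption buys: its exponential tail makes a threshold only $\mathrm{poly}(\log(nm),1/\epsilon)$ times the scale $\beta$ suffice, while fact~(ii), $\mathrm{OPT}=\Omega(\beta)$, upgrades the additive guarantee of Theorem~\ref{thm:additive} into the multiplicative $(1-\epsilon)$ one.
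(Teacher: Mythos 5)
Your overall strategy---truncate the MHR distribution at a small multiple of its scale, reduce to the $[0,1]$-bounded case via Theorem~\ref{thm:additive}, and use a lower bound on $\mathrm{OPT}$ coming from MHR-ness to upgrade the additive guarantee into a multiplicative $(1-\epsilon)$---is the same high-level plan the paper follows, and your facts (i)--(ii) are essentially the Cai--Daskalakis lemmas the paper cites. The paper's threshold $\Xi=\max_i\{\alpha_{i,n^\zeta}\}$ with $\zeta=O(\log(k/\epsilon))$, and its trivial mechanism that posts $\Xi'=\max_i\{\alpha_{i,n}\}$ per item first-come-first-served to witness $\mathrm{OPT}=\Omega(\Xi/\zeta)$, is a sharper quantitative version of your $t$ and $\mathrm{OPT}=\Omega(\beta)$ (and note you need $\max_i C_i\le n$ to make $\epsilon'$ independent of $C_i$, which is fine). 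That bookkeeping in your second and third paragraphs is in the right ballpark.

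The genuine gap is step (a), the lift from a BIC mechanism $\tilde M$ for the truncated $\{\mathcal{F}_i^{t}\}$ to one for $\{\mathcal{F}_i\}$. You observe that clipping reports makes $\tilde M$ approximately BIC for $\mathcal{F}$ with per-type slack $\sum_j(v_j-t)^+$, then want to repair via Theorem~\ref{thm:epsilon-BIC to BIC} --- but that theorem is stated for $\mathcal{D}$ supported on $[0,1]^n$ with a coupling satisfying $v_{ij}\ge v'_{ij}\ge v_{ij}-\delta$ uniformly, and the paper's $\epsilon$-BIC notion is scaled by $v_{\max}$, which is infinite here. Your per-type slack is unbounded on the tail, not $\le\delta$, so Theorem~\ref{thm:epsilon-BIC to BIC} cannot be invoked as a black box. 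The paper resolves this not by a second pass of the repair theorem but by opening it up: it computes the optimal $\delta$-BIC mechanism $M_1$ for the discretized truncation $\mathcal{D}'_i$ (capped at $\Xi$), and then runs the surrogate-sale/surrogate-competition reduction \emph{sampling replicas directly from the unbounded $\mathcal{D}_i$} while surrogates come from $\mathcal{D}'_i$. The VCG phase is truthful regardless of the replicas' distribution, and the edge weights are computed against $\mathcal{D}'_{-i}$, so the output is exactly BIC for $\mathcal{D}$ with no residual slack to repair; the cap only enters the revenue analysis (finitely many replica/surrogate equivalence classes, and the tail contributes $O(\epsilon\Xi)$). So the right fix to your proof is to replace your two-step (Theorem~\ref{thm:additive}, then Theorem~\ref{thm:epsilon-BIC to BIC} again) lift with this one-line change inside the reduction; your remaining accounting of the three $\epsilon$-losses then goes through.
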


The rest of the paper is organized as follows: Sec~\ref{sec:notation} provides a  few standard definitions from Mechanism Design. Sec~\ref{sec:overview} gives an overview of our proof of Thm~\ref{thm:additive}, explaining the different components that get into our proof and guiding through the rest of the paper. The rest of the main body and the appendix provide all  technical details. App \ref{app:MHR} provides the proof of Corollary \ref{cor:MHR}.


\section{Preliminaries and notation}\label{sec:notation}

We assume that the seller has a single copy of $n$ (heterogeneous)  items that she wishes to auction to $m$ bidders. Each bidder $i$ has some non-negative value for item $j$ which we denote $v_{ij}$. We can think of bidder $i$'s {\em type} as an $n$-dimensional vector  $\vec{v}_i$, and denote the entire profile of bidders as $\vec{v}$, or sometimes  $(\vec{v}_i~;~\vec{v}_{-i})$ if we want to emphasize its decomposition to the type $\vec{v}_i$ of bidder $i$ and the joint profile $\vec{v}_{-i}$ of all other bidders.  We  denote by $\mathcal{D}$ the distribution from which $\vec{v}$ is sampled. We also denote by $\mathcal{D}_i$ the distribution of types for bidder $i$, and by $\mathcal{D}_{-i}$ the distribution of types for every bidder except $i$. The {\em value of a bidder} with demand $C$ for any subset of items is the sum of her values for her favorite $C$ items in the subset; that is, we assume bidders are {\em additive up to their demand}.

{Since we are shooting for BIC/IC mechanisms, we will only consider (direct revelation) mechanisms} where each bidder's strategy is to report a type. When the reported bidder types are $\vec{v}$, we denote the (possibly randomized) {\em outcome of mechanism $M$} as $M(\vec{v})$. The outcome can be summarized in: the {\em expected price} charged to each bidder (denoted $p_i(\vec{v})$), and a collection of {\em marginal probabilities} $\vec{\phi}(\vec{v}) = (\phi_{ij}(\vec{v}))_{ij}$, where $\phi_{ij}(\vec{v})$ denotes the marginal probability that bidder $i$ receives item $j$.

A collection of marginal probabilities $\vec{\phi}(\vec{v}):=(\phi_{ij}(\vec{v}))_{ij}$
is {\em feasible} iff there exists a consistent with them joint distribution over allocations of items to bidders
so that in addition, with probability $1$, no item is allocated to more than one bidder, and no bidder receives more items than her demand. A straightforward application of the Birkhoff-von Neumann theorem~\cite{JDM} reveals that a sufficient condition for the above to hold is that {\em in expectation} no item is given more than once, and all bidders receive an {\em expected number of items} less than or equal to their demand. Note that this sufficient condition is expressible in terms of the $\phi_{ij}$'s
only. Moreover, under the same conditions, we can efficiently sample a joint distribution with the desired $\phi_{ij}$'s.
 (See App~\ref{app:feasible} for details.)

The outcome of mechanism $M$ restricted to bidder $i$ on input $\vec{v}$ is denoted $M_i(\vec{v}) = (\vec{\phi}_i(\vec{v}), p_i({\vec{v}}))$. Assuming that bidder $i$ is additive (up to her demand) and risk-neutral and that the mechanism is feasible (so in particular it does not violate the bidder's demand constraint) the {\em value} of bidder $i$ for outcome $M_i(\vec{w})$ is just (her expected value) $\vec{v}_i \cdot \vec{\phi}_i(\vec{w})$, while the bidder's {\em utility} for the same outcome is $U(\vec{v}_i,M_i(\vec{w})):=\vec{v}_i \cdot \vec{\phi}_i(\vec{w}) - p_i(\vec{w})$. Such bidders subtracting price from expected value are called {\em quasi-linear.} Moreover, for a given value vector $\vec{v}_i$ for bidder $i$, we write: $\pi_{ij}(\vec{v}_i)=\mathbb{E}_{\vec{v}_{-i}\sim {\cal D}_{-i}}[\phi_{ij}(\vec{v}_i~;~\vec{v}_{-i})]$.

We proceed to formally define incentive compatibility of mechanisms in our notation:
\notshow{
\begin{definition}[\cite{BH,HKM,HL}](BIC/$\epsilon$-BIC/IC/$\epsilon$-IC Mechanism) \label{def:BIC} A mechanism $M$ is called $\epsilon$-BIC iff a bidder with type $\vec{v_i}$ can never expect to gain more than $\epsilon \cdot v_{\max}$ by lying about his type $\vec{v}_i$, i.e. for all $i$, $\vec{v}_i, \vec{w}_i$:
$$\mathbb{E}_{\vec{v}_{-i} \sim {\cal D}_{-i}}\left[U(\vec{v}_i,M_i(\vec{v}))\right] \ge \mathbb{E}_{\vec{v}_{-i} \sim {\cal D}_{-i}}\left[ U(\vec{v}_i,M_i(\vec{w}_i~;~\vec{v}_{-i})) \right] - \epsilon \cdot v_{\max},$$
where $v_{\max}$ is the maximum possible value of any bidder for any item in the support of the value distribution. Similarly, $M$ is called $\epsilon$-IC iff  for all $i$, $\vec{v}_i, \vec{w}_i, \vec{v}_{-i}$: $U(\vec{v}_i,M_i(\vec{v})) \ge  U(\vec{v}_i,M_i(\vec{w}_i~;~\vec{v}_{-i}))  - \epsilon \cdot v_{\max}$. A mechanism is called BIC iff it is $0$-BIC and IC iff it is $0$-IC.
~\footnote{Clearly, the definition of $\epsilon$-BIC is meaningless for unbounded distributions, for $\epsilon>0$. However, for such distributions, we shall never define/use $\epsilon$-BIC mechanisms. We may truncate and discretize an unbounded distribution and design $\epsilon$-BIC mechanisms for the resulting distribution. But then we always convert this mechanism to a $0$-BIC mechanism for the original distribution. We will never misuse this definition and try to claim that we have an $\epsilon$-BIC mechanism for an unbounded distribution, for $\epsilon>0$.}
\end{definition}

}
{
\begin{definition}(BIC/$\epsilon$-BIC/IC/$\epsilon$-IC Mechanism)\label{def:BIC} A mechanism $M$ is called $\epsilon$-BIC iff the following inequality holds for all $i,\vec{v}_i,\vec{w}_i$:
$$\mathbb{E}_{\vec{v}_{-i} \sim {\cal D}_{-i}}\left[U(\vec{v}_i,M_i(\vec{v}))\right] \ge \mathbb{E}_{\vec{v}_{-i} \sim {\cal D}_{-i}}\left[ U(\vec{v}_i,M_i(\vec{w}_i~;~\vec{v}_{-i})) \right] - \epsilon v_{\max} \cdot \sum_j \pi_{ij}(\vec{w}_i),$$
where $v_{\max}$ is the maximum possible value of any bidder for any item in the support of the value distribution. In other words, $M$ is $\epsilon$-BIC iff when a bidder lies by reporting $\vec{w}_i$ instead of $\vec{v}_i$, they do not expect to gain more than $\epsilon v_{\max}$ times the expected number of items that $\vec{w}_i$ receives. Similarly, $M$ is called $\epsilon$-IC iff  for all $i$, $\vec{v}_i, \vec{w}_i, \vec{v}_{-i}$: $U(\vec{v}_i,M_i(\vec{v})) \ge  U(\vec{v}_i,M_i(\vec{w}_i~;~\vec{v}_{-i}))  - \epsilon v_{\max} \cdot \sum_j \phi_{ij}(\vec{w}_i~;~\vec{v}_{-i})$. A mechanism is called BIC iff it is $0$-BIC and IC iff it is $0$-IC.~\footnote{Any feasible mechanism that we call $\epsilon$-BIC, respectively $\epsilon$-IC, by our definition is certainly an $\epsilon \cdot \max\{C_i\}$-BIC, respectively $\epsilon \cdot \max\{C_i\}$-IC, mechanism by the more standard definition, which omits the factors $\sum_j \pi_{ij}(\vec{w}_i)$, respectively $\sum_j \phi_{ij}(\vec{w}_i~;~\vec{v}_{-i})$, from the incentive error. We only include these factors here for convenience.}
\end{definition}
}

\noindent In our proof of Thm~\ref{thm:additive} throughout this paper we assume that $v_{\max}=1$. If $v_{\max}<1$, we can scale the value distribution so that this condition is satisfied.
{We also define individual rationality of BIC/$\epsilon$-BIC mechanisms:

\begin{definition}
A BIC/$\epsilon$-BIC mechanism $M$ is called {\em ex-interim individually rational (ex-interim IR)} iff for all $i$, $\vec{v}_i$:
$$\mathbb{E}_{\vec{v}_{-i} \sim {\cal D}_{-i}}\left[U(\vec{v}_i,M_i(\vec{v}))\right] \ge 0.$$
It is called {\em ex-post individually rational (ex-post IR)} iff for all $i$, $\vec{v}_i$ and $\vec{v}_{-i}$,  $U(\vec{v}_i,M_i(\vec{v})) \ge 0$ with probability $1$ (over the randomness in the mechanism).
\end{definition}
\noindent While we focus the main presentation to obtaining ex-interim IR mechanisms, in Appendix~\ref{app:ex-post IR} we describe how without any loss in revenue we can turn these mechanisms into ex-post IR.}

\smallskip For a mechanism $M$, we denote by $R^M(\mathcal{D})$ the expected revenue of the mechanism when bidders sampled from $\mathcal{D}$ play $M$ truthfully. We also let $R^{OPT}(\mathcal{D})$ (resp. $R^{OPT}_\epsilon(\mathcal{D})$) denote the maximum possible expected revenue attainable by any BIC (resp. $\epsilon$-BIC) mechanism when bidders are sampled from ${\cal D}$ and play truthfully. For all cases we consider, these terms are well-defined.

We state and prove our results assuming that we can exactly sample from all input distributions efficiently and exactly evaluate their {cumulative distribution} functions. Our results still hold {\em even if we only have oracle access to sample from the input distributions}, as this is sufficient for us to approximately evaluate the {cumulative} functions to within the right accuracy in polynomial time (by making use of our symmetry and discretization tools, described in the next section). The approximation error on evaluating the cumulative functions is absorbed into  loss in revenue. See discussion in App \ref{app:input model}.

Finally, we denote by $S_m, S_n$ the symmetric groups over the sets $[m]:=\{1,\ldots,m\}$ and $[n]$ respectively. Moreover, for $\sigma =(\sigma_1,\sigma_2) \in S_m \times S_n$, we assume that $\sigma$ maps element $(i,j) \in [m] \times [n]$ to $\sigma(i,j):=(\sigma_1(i), \sigma_2(j))$. We extend this definition to map a value vector $\vec{v}=(v_{ij})_{i\in [m], j \in [n]}$ to the vector $\vec{w}$ such that $\vec{w}_{\sigma(i,j)}=\vec{v}_{ij}$, for all $i,j$. Likewise, if ${\cal D}$ is a value distribution, $\sigma({\cal D})$ is the distribution that first samples $\vec{v}$ from $\cal D$ and then outputs $\sigma(\vec{v})$.

\section{Overview of our Approach}  \label{sec:techniques} \label{sec:overview}

{\bf A Na\"ive LP Formulation.} Let ${\cal D}$ be the distribution of all bidders' values for all items (supported on a subset of $\mathbb{R}^{m \times n}$, where $m$ is the number of bidders and $n$ is the number of items). For a mechanism design problem with unit-demand bidders whose values are distributed according to ${\cal D}$, it is folklore knowledge how to write a linear programming relaxation of size polynomial in $|{\rm supp}({\cal D})|$ optimizing revenue. The relaxation keeps track of the (marginal) probability $\phi_{ij}(\vec{v}) \in [0,1]$ that item $j$ is given to bidder $i$ if the bidders' values are $\vec{v}$, and enforces feasibility constraints (no item is given more than once in expectation, no bidder gets more than one item in expectation), incentive compatibility constraints (in expectation over the other bidders' values for the items, no bidder has incentive to misreport her values for the items, if the other bidders don't), while optimizing the expected revenue of the mechanism. Notice that all constraints and the objective function can be written in terms of the marginals $\phi_{ij}$. Moreover, using the Birkhoff-von Neumann decomposition theorem, it is possible to convert the solution of this LP to a mechanism that has the same revenue and satisfies the feasibility constraints strongly (i.e. not in expectation, but prob. $1$). We give the details of the linear program in App~\ref{app:LP}, and also describe how to generalize this LP to incorporate demand and budget constraints.

Despite its general applicability, the na\"ive LP formulation has a major drawback in that $|{\rm supp}(\mathcal{D})|$ could in general be infinite, and when it is finite it is usually exponential in both $m$ and $n$. For the settings we consider, this is always the case. For example, in the very simple setting where ${\cal D}$ samples each value i.i.d. uniformly from $\{\$5, \$10\}$, the support of the distribution becomes $2^{m \times n}$. Such support size is obviously prohibitive if we plan to employ the na\"ive LP formulation to optimize revenue.

\smallskip \noindent {\bf A Comparison to Myerson's Setting.} {\em What enables succinct and computationally efficient mechanisms in the single-item setting of Myerson?} Indeed, the curse of dimensionality discussed above arises even when there is a single item to sell; e.g., if every bidder's distribution has support $2$ and the bidders are independent, then the number of different bidder profiles is already $2^m$. What drives Myerson's result is the realization that there is structure in {a BIC mechanism} coming in the form of {\em monotonicity}: for all $i$, for all $v_{i1} \ge v_{i1}'$: $\mathbb{E}_{\vec{v}_{-i}}(\phi_{i1}(v_{i1}~;~\vec{v}_{-i})) \ge \mathbb{E}_{\vec{v}_{-i}}(\phi_{i1}(v'_{i1}~;~\vec{v}_{-i})),$ i.e. the expected probability that bidder $i$ gets the single item for sale in the auction increases with the value of bidder $i$, where the expectation is taken over the other bidders' values. Unfortunately, such crisp monotonicity property of {BIC mechanisms} fails to hold if there are multiple items, and even if it were present it would still not be sufficient in itself to reduce the size of the na\"ive LP to a manageable size.

 {\em So what next?} We argued earlier that the symmetric distributions considered in the BIC $k$-items and the BIC $k$-bidders problems are very natural cases of the general optimal mechanism design problem.  We argue next that they are natural for another reason: they enable enough structure for (i) the optimal mechanism to have small description complexity, instead of being an unusable, exponentially long list of what the mechanism ought to do for every input value vector $\vec{v}$; and (ii) the succinct solution to be  efficiently computable, bypassing the exponentially large na\"ive LP. Our structural results are discussed in the following paragraphs. The first is enabled by exploiting randomization to transfer symmetries from the value distribution to the optimal mechanism. The second is enabled by proving a {\em strong-monotonicity property} of all BIC mechanisms. Our notion of monotonicity is more powerful than the notion of {\em cyclic-monotonicity}, which holds more generally but can't be exploited algorithmically. Together our structural results bring to light how the item- and bidder-symmetric settings are mathematically more elegant than general settings with no {apparent} structure.

%

\smallskip \noindent {\bf Structural Result 1:}~{\em The Interplay Between Symmetries and Randomization.} Since the inception of Game Theory scientists were interested in the implications of symmetries in the structure of equilibria~\cite{GKT,BvN,N}. In his seminal paper~\cite{N}, Nash showed a rather interesting structural result, informally reading as follows: ``If a game has any symmetry, there exists a Nash equilibrium satisfying that symmetry." Indeed, something even more powerful is true: ``There always exists a Nash equilibrium that simultaneously satisfies all symmetries that the game may have.''

\notshow{
In his seminal paper~\cite{N}, Nash showed that the use of randomness in players' strategies enables the existence of an equilibrium in every game. In the same paper, Nash showed a less well-known, albeit deep, structural implication of randomness: if all players of the game are identical (i.e. they have identical strategy sets and payoff functions), there exists a symmetric equilibrium in the game, i.e. one in which every player uses the same randomized strategy. In fact, something stronger is true: there always exists a Nash equilibrium respecting any symmetry that the game has. For example, suppose that only two players of a multi-player game are identical (i.e. swapping their names does not affect any player's payoff); then there exists a Nash equilibrium in which these two players use the same mixed strategy. If two actions are interchangeable (i.e. replacing all occurrences of one strategy with the other in a strategy profile does not affect any player's payoff), then there exists a Nash equilibrium in which every player plays these two actions with equal probability. In a game that has both of the above symmetries, there exists a Nash equilibrium satisfying both of the symmetries described above, etc. }

Inspired by Nash's symmetry result, albeit in our different setting, we show a similar structural property of {\bf randomized} mechanisms.~\footnote{We emphasize `randomized', since none of the symmetries we describe holds for deterministic optimal mechanisms.} Our structural result is rather general, applying to settings beyond those addressed in Thm~\ref{thm:additive}, and even beyond MHR or regular distributions. The following theorem holds for \emph{any} (arbitrarily correlated) joint distribution $\mathcal{D}$.

\begin{theorem} \label{thm:symmetries} \label{thm:symmetry}  Let ${\cal D}$ be the distribution of bidders' values for the items (supported on a subset of $\mathbb{R}^{m \times n}$). Let also ${\cal S} \subseteq S_m \times S_n$ be an arbitrary set such that ${\cal D} \equiv \sigma(\cal D)$, for all $\sigma \in {\cal S}$; that is, assume that $\cal D$ is invariant under all permutations in $\cal S$. Then any BIC mechanism $M$  can be symmetrized into a mechanism $M'$ that respects all symmetries in $\cal S$ without any loss in revenue. I.E. for all bid vectors $\overrightarrow{v}$ the behavior of $M'$ under $\overrightarrow{v}$ and $\sigma(\overrightarrow{v})$ is identical (up to permutation by $\sigma$) for all $\sigma \in {\cal S}$. The same result holds if we replace BIC with $\epsilon$-BIC, IC, or $\epsilon$-IC.
\end{theorem}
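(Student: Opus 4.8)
The plan is to symmetrize $M$ by ``averaging over the group.'' Let $G \le S_m \times S_n$ be the subgroup generated by $\mathcal S$ (it is a subgroup because $\mathcal D$ is invariant under each $\sigma \in \mathcal S$, hence under compositions and inverses, since the $\sigma$'s act as bijections on a finite or well-behaved probability space). Define the new mechanism $M'$ as follows: on input bid vector $\vec v$, first draw $\sigma$ uniformly at random from $G$, then run $M$ on $\sigma(\vec v)$, obtaining an allocation and prices for the permuted instance, and finally ``un-permute'' by applying $\sigma^{-1}$: item $\sigma_2(j)$ going to bidder $\sigma_1(i)$ in the simulated run becomes item $j$ going to bidder $i$, and the price charged to bidder $\sigma_1(i)$ in the simulated run is charged to bidder $i$. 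In terms of the reduced description, $\phi'_{ij}(\vec v) = \frac{1}{|G|}\sum_{\sigma \in G} \phi_{\sigma_1(i)\,\sigma_2(j)}(\sigma(\vec v))$ and $p'_i(\vec v) = \frac{1}{|G|}\sum_{\sigma \in G} p_{\sigma_1(i)}(\sigma(\vec v))$.

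The steps, in order. (1) \emph{Feasibility.} Each simulated run of $M$ on $\sigma(\vec v)$ produces a feasible (randomized) allocation; relabeling bidders/items by a permutation preserves the property that no item goes to more than one bidder and no bidder exceeds her demand; a convex combination (the average over $\sigma$) of feasible allocation distributions is again feasible, so $M'$ is feasible. (2) \emph{Symmetry of $M'$.} For any $\tau \in \mathcal S \subseteq G$, compare the behavior of $M'$ on $\vec v$ and on $\tau(\vec v)$: in the definition of $M'(\tau(\vec v))$ the sum ranges over $\sigma(\tau(\vec v)) = (\sigma\tau)(\vec v)$, and since $\sigma \mapsto \sigma\tau$ is a bijection of $G$, reindexing shows $M'(\tau(\vec v))$ is exactly $M'(\vec v)$ relabeled by $\tau$ --- i.e. $\phi'_{ij}(\vec v) = \phi'_{\tau_1(i)\,\tau_2(j)}(\tau(\vec v))$ and similarly for prices. (3) \emph{Revenue preservation.} Since $\mathcal D \equiv \sigma(\mathcal D)$ for every $\sigma \in G$, sampling $\vec v \sim \mathcal D$ and then applying $\sigma$ gives a sample distributed as $\mathcal D$; hence $\mathbb E_{\vec v \sim \mathcal D}[\sum_i p_{\sigma_1(i)}(\sigma(\vec v))] = \mathbb E_{\vec v \sim \mathcal D}[\sum_i p_i(\vec v)] = R^M(\mathcal D)$ for each $\sigma$, and averaging gives $R^{M'}(\mathcal D) = R^M(\mathcal D)$. (4) \emph{Incentive compatibility.} Fix bidder $i$, true type $\vec v_i$, report $\vec w_i$. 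Condition on the random draw of $\sigma$. From bidder $i$'s perspective, reporting $\vec w_i$ into $M'$ is, conditioned on $\sigma$, equivalent to playing the role of bidder $\sigma_1(i)$ in $M$ with the other bidders' reports permuted by $\sigma$; because $\mathcal D_{-i}$ maps under $\sigma$ to the marginal distribution of the other coordinates of $\mathcal D$ (using $\mathcal D \equiv \sigma(\mathcal D)$), the conditional interim utility of bidder $i$ in $M'$ equals the interim utility of bidder $\sigma_1(i)$ in $M$ when reporting $\sigma_2$-permuted $\vec w_i$ against $\sigma$-permuted opponents. The BIC (resp. $\epsilon$-BIC) constraint for $M$ applied to each such reindexed instance, followed by averaging over $\sigma$, yields the BIC (resp. $\epsilon$-BIC) constraint for $M'$; for the $\epsilon$ versions one checks the error term (proportional to $\sum_j \pi_{ij}(\vec w_i)$, resp. $\sum_j \phi_{ij}$) is itself preserved under the same averaging, so the bound comes through with the same $\epsilon$. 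The IC / $\epsilon$-IC cases are identical but conditioning additionally on $\vec v_{-i}$ rather than taking expectations.

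I expect the main obstacle to be step (4), specifically bookkeeping the bijection between ``reporting a lie in the symmetrized mechanism'' and ``a reindexed lie in the original mechanism'' so that the averaged inequality has exactly the right form --- in particular making sure the permutation $\sigma_2$ acts on the \emph{reported} type $\vec w_i$ consistently with how it acts on the true type $\vec v_i$ in the utility expression $\vec v_i \cdot \vec\phi_i - p_i$, and that the error term in the $\epsilon$-BIC definition transforms correctly. Everything else (feasibility via convexity, symmetry via reindexing the group sum, revenue via distributional invariance) is routine. A minor point to address is that $G$ may be large, but $M'$ only needs to be \emph{described}/sampled, not enumerated: one samples $\sigma \sim G$ and runs one copy of $M$, so this is not an efficiency concern for the structural statement (efficiency is handled later via the reduced representation on symmetric distributions).
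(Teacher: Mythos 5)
Your proposal is correct and follows essentially the same route as the paper: average $M$ over the subgroup generated by $\mathcal{S}$ (the paper packages this as $\mathcal{G}(M)$ with $\mathcal{G}$ uniform on the subgroup), verify per-$\sigma$ preservation of IC/BIC and revenue via $\mathcal{D}\equiv\sigma(\mathcal{D})$ and the identity $U(\vec v_i,[\sigma(M)]_i(\vec x)) = U(\sigma^{-1}(\vec v_i),M_{\sigma^{-1}(i)}(\sigma^{-1}(\vec x)))$, and then use the group-reindexing argument to show the averaged mechanism respects every element of the subgroup. The one subtlety you flagged in step (4) is exactly the bookkeeping the paper handles with that displayed equality, so your instinct about where the work lies was right.
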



While we postpone further discussion of this theorem and what it means for $M$ to behave ``identically'' to Sec~\ref{sec:symmetries}, we give a quick example to illustrate the symmetries that randomization enables in the optimal mechanism. Consider a single bidder and two items. Her value for each item is drawn i.i.d. from the uniform distribution on $\{4,5\}$. It is easy to see that the only optimal deterministic mechanism assigns price $4$ to one item and $5$ to the other. However, there is an optimal randomized mechanism that offers each item at price $4\frac{1}{2}$, and the uniform lottery ($1/2$ chance of getting item $1$, $1/2$ chance of getting item $2$) at price $4$. While item $1$ and item $2$ need to be priced differently in the deterministic mechanism to achieve optimal revenue, they can be treated identically in the optimal randomized mechanism. Thm \ref{thm:symmetries} applies in an extremely general setting: distributions can be continuous with arbitrary support and correlation, bidders can have budgets and demands, we could be maximizing social welfare instead of revenue, etc. 

\smallskip \noindent {\bf Structural Result 2:} {\em Strong-Monotonicity.} Even though the na\"ive LP formulation is not computationally efficient, Thm \ref{thm:symmetries} certifies the existence of a compact solution for the cases we consider. This solution lies in the subspace of $\mathbb{R}^{m \times n}$ spanned by the symmetries induced by ${\cal D}$. Still Thm~\ref{thm:symmetries} does not inform us how to locate such a symmetric optimal solution. Indeed, the symmetry of the optimal solution is not a priori capable in itself to decrease the size of our na\"ive LP to a manageable one. For this purpose we establish a strong monotonicity property of {item-symmetric BIC} mechanisms (an item-symmetric mechanism is one that respects every item symmetry; see Sec \ref{sec:symmetries} for a definition).

\begin{theorem}\label{thm:monotone} If ${\cal D}$ is  item-symmetric, every item-symmetric BIC mechanism is {\em strongly monotone}:
$$\text{for all bidders $i$, and items $j, j'$:}~v_{ij} \ge v_{ij'} \implies \mathbb{E}_{\vec{v}_{-i}}(\phi_{ij}(\vec{v})) \ge  \mathbb{E}_{\vec{v}_{-i}}(\phi_{ij'}(\vec{v})).$$
\end{theorem}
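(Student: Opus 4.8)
The plan is one swapping (exchange) argument applied to the BIC constraints. Fix a bidder $i$, two items $j$ and $j'$, and a type $\vec{v}_i$ with $v_{ij}\ge v_{ij'}$; recalling that $\pi_{i\ell}(\vec{v}_i)=\mathbb{E}_{\vec{v}_{-i}}[\phi_{i\ell}(\vec{v}_i~;~\vec{v}_{-i})]$, the goal is $\pi_{ij}(\vec{v}_i)\ge\pi_{ij'}(\vec{v}_i)$. Let $\tau=(j~j')\in S_n$ be the transposition of $j$ and $j'$, let $\sigma=(\mathrm{id},\tau)\in S_m\times S_n$, and let $\vec{w}_i:=\tau(\vec{v}_i)$ be $\vec{v}_i$ with its $j$-th and $j'$-th coordinates exchanged. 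The claim will follow by writing down the BIC inequality asserting that bidder $i$ of true type $\vec{v}_i$ does not want to report $\vec{w}_i$, and simplifying it using the symmetries.

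The main step is a change-of-variables identity: all interim quantities at $\vec{w}_i$ are those at $\vec{v}_i$ with coordinates $j,j'$ transposed, and prices unchanged, i.e.
$$\pi_{ij}(\vec{w}_i)=\pi_{ij'}(\vec{v}_i),\qquad \pi_{ij'}(\vec{w}_i)=\pi_{ij}(\vec{v}_i),\qquad \pi_{i\ell}(\vec{w}_i)=\pi_{i\ell}(\vec{v}_i)\ \ (\ell\notin\{j,j'\}),$$
$$\mathbb{E}_{\vec{v}_{-i}}\bigl[p_i(\vec{w}_i~;~\vec{v}_{-i})\bigr]=\mathbb{E}_{\vec{v}_{-i}}\bigl[p_i(\vec{v}_i~;~\vec{v}_{-i})\bigr].$$
To prove, say, the first identity: since $\mathcal{D}$ is item-symmetric and bidders are independent, $\mathcal{D}_{-i}$ is item-symmetric, so the substitution $\vec{v}_{-i}\mapsto\tau(\vec{v}_{-i})$ (with $\tau$ acting item-wise on each of the other bidders) is measure-preserving; hence $\pi_{ij}(\vec{w}_i)=\mathbb{E}_{\vec{v}_{-i}}[\phi_{ij}(\tau(\vec{v}_i)~;~\tau(\vec{v}_{-i}))]=\mathbb{E}_{\vec{v}_{-i}}[\phi_{ij}(\sigma(\vec{v}_i~;~\vec{v}_{-i}))]$. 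Item-symmetry of $M$ (it respects $\sigma$, and $\sigma_1=\mathrm{id}$ fixes prices and bidder labels) gives $\phi_{ij}(\sigma(\vec{v}))=\phi_{i,\tau(j)}(\vec{v})=\phi_{ij'}(\vec{v})$ and $p_i(\sigma(\vec{v}))=p_i(\vec{v})$; substituting yields $\pi_{ij}(\vec{w}_i)=\pi_{ij'}(\vec{v}_i)$, and the remaining identities are obtained the same way.

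Feeding these into the BIC constraint, the price terms cancel and one is left with $\sum_\ell v_{i\ell}\pi_{i\ell}(\vec{v}_i)\ge\sum_\ell v_{i\ell}\pi_{i\ell}(\vec{w}_i)$. Since the vectors $(\pi_{i\ell}(\vec{v}_i))_\ell$ and $(\pi_{i\ell}(\vec{w}_i))_\ell$ agree outside $\{j,j'\}$ and are interchanged on $\{j,j'\}$, the difference of the two sums equals $v_{ij}\bigl(\pi_{ij}(\vec{v}_i)-\pi_{ij'}(\vec{v}_i)\bigr)+v_{ij'}\bigl(\pi_{ij'}(\vec{v}_i)-\pi_{ij}(\vec{v}_i)\bigr)=(v_{ij}-v_{ij'})\bigl(\pi_{ij}(\vec{v}_i)-\pi_{ij'}(\vec{v}_i)\bigr)\ge 0$; as $v_{ij}\ge v_{ij'}$, this forces $\pi_{ij}(\vec{v}_i)\ge\pi_{ij'}(\vec{v}_i)$, which is strong monotonicity. (Replacing BIC by $\epsilon$-BIC throughout yields the obvious approximate version, and working pointwise in $\vec{v}_{-i}$ yields the IC analogue.)

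I expect the only genuinely delicate part to be the bookkeeping in the change-of-variables identity: tracking how $\tau$ acts simultaneously on $\vec{v}_i$, on $\vec{v}_{-i}$, and on the allocation/price indices, and invoking item-symmetry of $M$ and of $\mathcal{D}_{-i}$ in the correct order and direction. Everything after that is a single use of BIC followed by arithmetic. It is worth flagging at the outset that item-symmetry of $\mathcal{D}$ enters only through the consequence that $\mathcal{D}_{-i}$ is item-symmetric, which is the sole distributional input to the proof.
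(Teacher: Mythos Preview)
Your argument is correct and is essentially the paper's own proof: both establish the identity $\pi_{ij}(\tau(\vec v_i))=\pi_{ij'}(\vec v_i)$ from item-symmetry of $M$ and of $\mathcal{D}$, and then feed the deviation $\vec v_i\to\tau(\vec v_i)$ into the BIC inequality (the paper phrases it by contradiction, you do it directly). One small logical slip to clean up: from $(v_{ij}-v_{ij'})\bigl(\pi_{ij}(\vec v_i)-\pi_{ij'}(\vec v_i)\bigr)\ge 0$ and $v_{ij}\ge v_{ij'}$ you cannot deduce $\pi_{ij}(\vec v_i)\ge\pi_{ij'}(\vec v_i)$ when $v_{ij}=v_{ij'}$, but in that case $\vec w_i=\vec v_i$ and your change-of-variables identity already gives $\pi_{ij}(\vec v_i)=\pi_{ij'}(\vec v_i)$ outright.
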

\noindent i.e., if $i$ likes item $j$ more than item $j'$, her expected probability (over the other bidders' values) of getting item $j$ is higher. We give an analogous monotonicity property of IC mechanisms in Appendix~\ref{sec: IC results}.

\notshow{While it is a priori not clear that we should be able to find such a solution efficiently, we can indeed write a polynomial-size linear program that outputs a succinct representation of a symmetric solution. The LP looks similar to the original, except it keeps a single representative valuation vector $\overrightarrow{v}$ per equivalence class (under the symmetries of $\cal D$) of valuation vectors, and computes a collection of price-lottery pairs for these representatives only. Clearly, this approach alone is problematic as the truthfulness constraints on representatives are not sufficient to guarantee truthfulness of the full-fledged mechanism: roughly speaking, the smaller LP accounts only for potential deviations to other representative valuation vectors, and not permutations thereof.  }

\smallskip \noindent {\bf From $\epsilon$- to truly-BIC.} Exploiting the aforementioned structural theorems we are able to efficiently compute {\em exactly optimal mechanisms} for value distributions ${\cal D}$ whose marginals on every item have constant-size support.  (${\cal D}$ itself can easily have exponentially-large support if, e.g., the items are independent.) To adapt our solution to continuous distributions {or distributions whose marginals have non-constant support}, we attempt the obvious rounding idea that changes $\mathcal{D}$ by rounding all values sampled from $\mathcal{D}$ down to the nearest multiple of some accuracy $\epsilon$, and solves the problem on the resulting distribution ${\cal D}_{\epsilon}$. While we can argue that the optimal BIC mechanism for ${\cal D}_{\epsilon}$ is also approximately optimal for ${\cal D}$, we need to also give up on the incentive compatibility constraints, resulting in an approximately-BIC mechanism where bidders may have an incentive to misreport their values, but the incentive to misreport is always smaller than some function of $\epsilon$. A natural approach to eliminate those incentives to misreport is to appropriately discount the prices for items or bundles of items charged by the mechanism computed for ${\cal D}_{\epsilon}$, generalizing the single-bidder rounding idea attributed to Nisan in~\cite{CHK}. Unfortunately, this approach fails to work in the multi-bidder settings, destroying both the revenue and truthfulness. Simply put, even though the discounts encourage bidders to choose more expensive options, these choices affect not only the price they pay us, but the prices paid by other bidders as well as the incentives of other bidders. Once we start rounding the prices, we could completely destroy any truthfulness the original mechanism had, leaving us with no guarantees on revenue.

Our approach is  entirely different, comprising a non-trivial extension of the main technique of~\cite{HKM}. We run simultaneous VCG auctions, one per bidder, where each bidder competes with make-believe replicas of himself, whose values are drawn from the same value distribution where his own values are drawn from. The goods for sale in these per-bidder VCG auctions are replicas of the bidder drawn from the modified distribution ${\cal D}_{\epsilon}$. These replicas are called surrogates. The intention is that the surrogates bought by the bidders in the per-bidder VCG auction will compete with each other in the optimal mechanism ${ M}$ designed for the modified distribution ${\cal D}_{\epsilon}$. Accordingly, the {value} of a bidder for a surrogate is the expected value of the bidder for the items that the surrogate is expected to win in ${ M}$ {\em minus} the price the surrogate is expected to pay. {This is exactly our approach,} except we modify mechanism $M$ to discount all these prices by a factor of $1-O(\epsilon)$. This is necessary to argue that bidders choose to purchase a surrogate with high probability, as otherwise we cannot hope to make good revenue. There are several technical ideas coming into the design and analysis of our two-phase auction (surrogate sale, surrogate competition). We describe these ideas in detail in Sec~\ref{sec:true PTAS}, emphasizing several important complications departing from the setting of~\cite{HKM}. Importantly, the approach of \cite{HKM} is brute force in $|\text{supp}(\mathcal{D}_i)|$. While this is okay for $k$-items, this takes exponential time for $k$-bidders. In addition to showing the following theorem, we show how to make use of Thm \ref{thm:monotone} to get the reduction to run in polynomial time in both settings.

\begin{theorem}\label{thm:epsilon-BIC to BIC}
{Consider a generic setting with $n$ items and $m$ bidders who are additive up to some capacity.} Let ${\cal D}:=\times_i {\cal D}_i$ and ${\cal D}':=\times_i {\cal D}'_i$ be product distributions, sampling every bidder independently from $[0,1]^n$. Suppose that, for all $i$, ${\cal D}'_i$ samples vectors whose coordinates are integer multiples of some $\delta \in (0,1)$ and that ${\cal D}_i$ and ${\cal D}'_i$ can be coupled so that, with probability $1$, a value vector $\vec{v}_i$ sampled from ${\cal D}_i$ and a value vector $\vec{v}'_i$ sampled from ${\cal D}'_i$ satisfy that $v_{ij} \ge v_{ij}' \ge v_{ij}-\delta, \forall j$. Then, for all {$\eta, \epsilon >0$}, any $\epsilon$-BIC mechanism $M_1$ for ${\cal D}'$ can be transformed into a BIC mechanism $M_2$ for ${\cal D}$ such that $R^{M_2}({\cal D}) \ge (1-\eta) \cdot R^{M_1}({\cal D}') - \frac{\epsilon+2\delta}{\eta}T$, where $T$ is the maximum number of items that can be awarded by a feasible mechanism. {Furthermore, if $\mathcal{D}$ and $\mathcal{D}'$ are both valid inputs to the BIC $k$-bidders or $k$-items problem, the transformation runs in time polynomial in $n$ and $m$. Moreover, for the BIC $k$-items problem, $T = k$ and, for the BIC $k$-bidders problem, $T \leq k \max_i C_i$, where $C_i$ is the demand of bidder $i$.}
\end{theorem}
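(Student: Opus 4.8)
The plan is to adapt the replica--surrogate construction of~\cite{HKM} to the present multi-dimensional, demand-constrained setting, upgrading it so that (i) the surrogate-selection step runs in time polynomial in $n$ and $m$ rather than brute-force in $|\mathrm{supp}(\mathcal D'_i)|$, and (ii) the revenue loss is tracked in terms of both the incentive slack $\epsilon$ of $M_1$ and the coupling error $\delta$. The mechanism $M_2$ has two phases. In the \emph{surrogate-selection} phase we treat each bidder $i$ separately and form a bipartite graph: the left side has $N$ \emph{agents}, namely the real bidder $i$ planted in a uniformly random slot together with $N-1$ i.i.d.\ \emph{replicas} drawn from $\mathcal D_i$; the right side has $N$ i.i.d.\ \emph{surrogates} drawn from $\mathcal D'_i$ together with $N$ null surrogates. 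The weight of the edge between an agent of type $\vec u$ and a surrogate $\vec s$ is its \emph{value for the surrogate}, $\vec u\cdot\vec\pi_i(\vec s)-(1-\eta)p_i(\vec s)$, where $\vec\pi_i(\cdot),p_i(\cdot)$ are the interim allocation and price of $M_1$ under $\mathcal D'_{-i}$ and all $M_1$-prices are discounted by $1-\eta$; edges to null surrogates have weight $0$. We run VCG on the maximum-weight matching, which assigns each real bidder a surrogate $\vec s_i$ (possibly null) and a VCG payment. In the \emph{surrogate-competition} phase the profile $(\vec s_1,\dots,\vec s_m)$ is submitted to $M_1$; real bidder $i$ receives the marginal allocation $M_1$ assigns to coordinate $i$ and pays $(1-\eta)p_i(\vec s_i)$ in addition to his VCG payment.

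\textbf{BIC and IR.} A real bidder's report influences only which surrogate he is matched to in the VCG phase, not how that surrogate later behaves inside $M_1$. The weight vector induced by reporting $\vec v_i$ is exactly the true value-for-surrogate vector of type $\vec v_i$, and any misreport merely swaps it for another vector in the same family --- a subset of all conceivable reports, among which truthtelling is dominant in VCG --- so reporting truthfully is weakly dominant pointwise and $M_2$ is BIC. Notably the $\epsilon$-slack of $M_1$ plays no role here. The null surrogate guarantees every agent VCG utility at least $0$, hence $M_2$ is ex-interim IR, and by construction it respects feasibility and demand constraints with probability $1$.

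\textbf{Revenue.} Write $R^{M_2}(\mathcal D)=\mathbb E[\text{VCG payments}]+(1-\eta)\sum_i\mathbb E[p_i(\vec s_i)]$, with $p_i(\text{null})=0$; since VCG payments are nonnegative it suffices to lower bound $\sum_i\mathbb E[p_i(\vec s_i)]$. Were the matching always a bijection on the non-null side, then since the real bidder occupies a uniformly random slot $\vec s_i$ would be distributed exactly as $\mathcal D'_i$ (and independent of $\vec v_i$), giving $\sum_i\mathbb E[p_i(\vec s_i)]=R^{M_1}(\mathcal D')$ on the nose; the only loss is the leakage of agents, equivalently of surrogates, to the null option. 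To bound it, fix $\vec v_i$ and its coupled discretized type $\vec v'_i$ with $v_{ij}\ge v'_{ij}\ge v_{ij}-\delta$; combining the ex-interim IR and $\epsilon$-BIC of $M_1$ at $\vec v'_i$ with $\vec v_i\ge\vec v'_i$ coordinatewise shows that the surrogate $\vec v'_i$ is an $(\epsilon+2\delta)T$-near-optimal choice for the true type $\vec v_i$ and in particular has value at least $\eta p_i(\vec v'_i)\ge 0$ to him, so an agent leaks to null only when its near-optimal diagonal surrogate is crowded out by other agents. Charging each leaked surrogate to the one that crowded it out and using the near-optimality bound with the discount $\eta$ in the denominator (a Markov-type averaging) caps the aggregate $M_1$-revenue lost to leakage at $\tfrac{\epsilon+2\delta}{\eta}T$. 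Choosing $N$ polynomially large (in $n$, $m$, $1/\eta$, $1/\delta$) makes the probability that $\vec v'_i$ is missing from its surrogate pool, and all other concentration errors, negligible; assembling the pieces yields $R^{M_2}(\mathcal D)\ge(1-\eta)R^{M_1}(\mathcal D')-\tfrac{\epsilon+2\delta}{\eta}T$.

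\textbf{Polynomial time.} For the BIC $k$-items problem this is immediate: $\mathcal D'_i$ is supported on $\{0,\delta,2\delta,\dots,1\}^k$, of size $(1+1/\delta)^k$, polynomial for constant $k$, so the interim quantities and the matching are computed directly and $T=k$. For the BIC $k$-bidders problem $|\mathrm{supp}(\mathcal D'_i)|$ is exponential in $n$, so we invoke Theorems~\ref{thm:symmetry} and~\ref{thm:monotone}: taking $M_1$ item-symmetric makes $\vec\pi_i(\vec s)$ a function only of the multiset of coordinates of $\vec s$, and strong monotonicity makes $\vec\pi_i(\vec s)$ comonotone with $\vec s$, so by the rearrangement inequality an agent's value $\vec u\cdot\vec\pi_i(\vec s)$ for a surrogate depends only on the sorted versions of $\vec u$ and $\vec s$. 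Since there are only polynomially many (for constant $\delta$) sorted discretized types --- histograms over $\{0,\delta,\dots,1\}$ --- the VCG matching collapses to one over these classes, with weights read off the compact description of $M_1$, and $T\le k\max_i C_i$. I expect the leakage accounting of the revenue step --- correctly isolating the $\epsilon$ and $2\delta$ contributions and converting null-leakage into the $\tfrac{\epsilon+2\delta}{\eta}T$ bound through the discount --- to be the main obstacle, with the symmetry-aware polynomial-time implementation of the matching for the $k$-bidders case a close second.
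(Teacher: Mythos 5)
Your construction is essentially the paper's (a per-bidder replica--surrogate VCG with rebated prices, and re-ordering the sampled replicas/surrogates for $k$-bidders to keep the sample size polynomial), and the polynomial-time implementation via item symmetry and strong monotonicity is the right move. But two steps need repair.

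First, a correctness wrinkle in the surrogate-selection phase. The BIC proof needs the following invariant: when all bidders report truthfully, the surrogate profile that plays the discretized mechanism is distributed exactly as $\mathcal{D}'$. This is what lets you compute edge weights against interim allocations $\vec\pi_i(\cdot)$ and have them equal the bidder's actual expected utility. Your null-surrogate variant breaks this invariant as stated: when the real bidder lands on a null, you never say which surrogate occupies slot $i$ when $M_1$ is run, so the other bidders' edge weights are computed against the wrong distribution. The paper's fix (assign an unmatched agent a uniformly random unmatched real surrogate, then zero out that agent's allocation and payment) preserves the distributional claim while still awarding nothing. You need something equivalent. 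Relatedly, "truthtelling is weakly dominant pointwise" overstates what holds: the edge weights are interim quantities, so the VCG argument only gives truthfulness in expectation over $\vec{v}_{-i}$, i.e.\ BIC.

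Second, and more substantively, the revenue accounting is a genuine gap rather than a compression. Your bound "the diagonal surrogate $\vec{v}'_i$ is an $(\epsilon+2\delta)T$-near-optimal choice" is not the inequality that makes the argument go through. The paper's key lemma is \emph{per-destination-surrogate}: a replica gains at most $(\delta+\epsilon)\sum_j\pi_{ij}(\vec{s}_i)$ by being matched to an arbitrary surrogate $\vec{s}_i$ instead of its coupled diagonal surrogate (this combines the $\epsilon$-BIC slack of $M_1$ \emph{at the diagonal type} with the $\delta$-coupling, not a uniform $T$ factor). The $T$ only appears after two further steps: comparing the VCG matching $V$ to a near-perfect positive-weight matching $X$ via a collection of \emph{disjoint augmenting paths and cycles} (not a one-to-one "charge the crowder-outer"), where the rebate $\eta p(\cdot)$ on an unmatched surrogate forces $p(\vec{s}_i') \le \frac{\delta+\epsilon}{\eta}\sum_{\vec{s}_i\in P,j}\pi_{ij}(\vec{s}_i)$ along each path $P$; and then summing over bidders using $\sum_i \mathbb{E}[\sum_{\text{surrogates } \vec s_i,j}\pi_{ij}(\vec{s}_i)]/r \le T$ because no feasible outcome awards more than $T$ items. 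Your "Markov-type averaging" gestures at the role of $\eta$ but does not produce this. Finally, the remaining $\frac{\delta}{\eta}T$ (giving $\epsilon+2\delta$ rather than $\epsilon+\delta$) is not free: it comes from the $\sqrt{\beta/r}$ cardinality deficit of the near-perfect matching $X$ (HKM's birthday-type bound), which is why the number of replicas/surrogates $r$ must be chosen as $(\eta/\delta)^2 m^2 \hat\beta$; this is the same place where the $k$-bidders reduction crucially relies on the re-ordering step to keep $\hat\beta$ polynomial.
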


\noindent Figure~\ref{fig:structure} shows how the various components discussed above interact with each other to prove Theorem~\ref{thm:additive}. The proof of Corollary~\ref{cor:MHR} is given in Appendix~\ref{app:MHR}.

  \begin{figure}[h!]
  \centering
  \includegraphics[width=12cm]{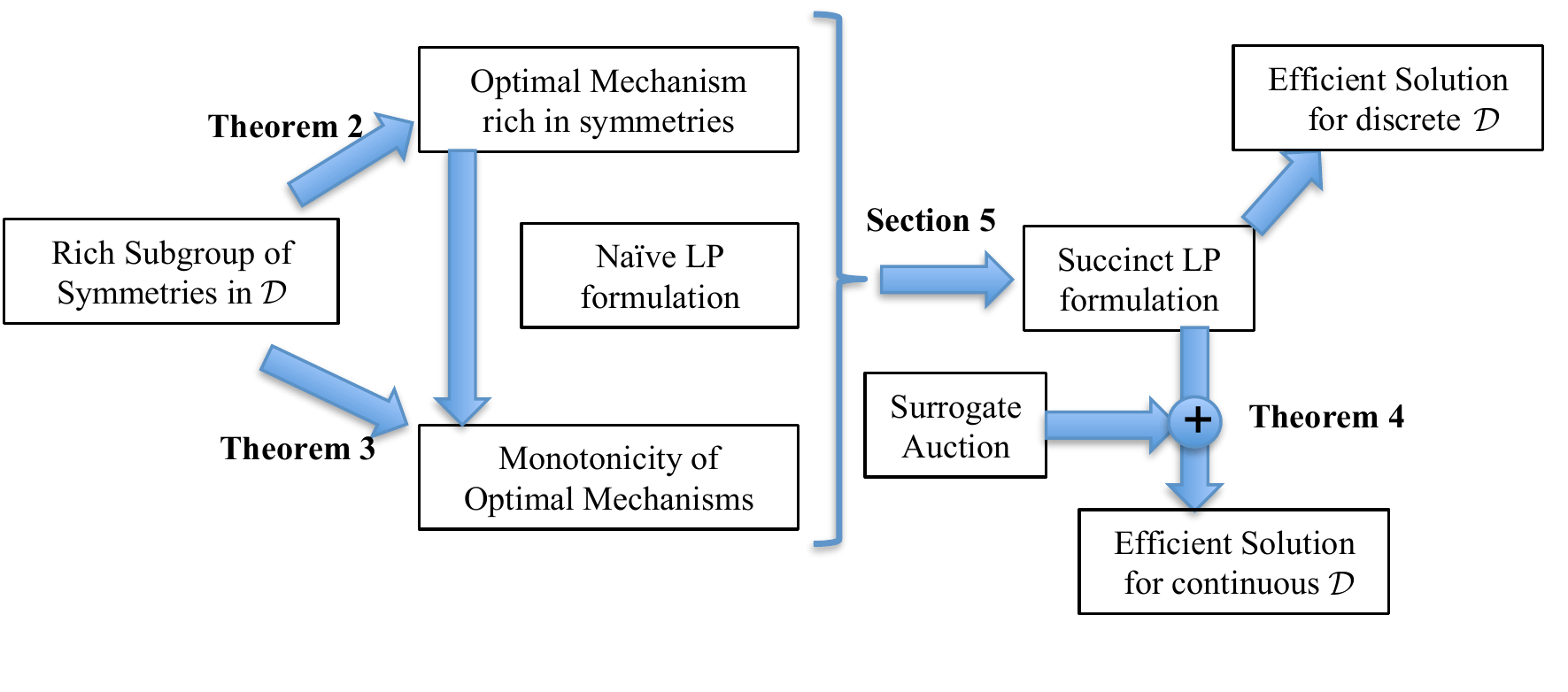}\\
    \caption{Our Proof Structure}\label{fig:structure}
  \end{figure}

\notshow{In recent work, Cai and Daskalakis~\cite{CD} provide an efficient approximation scheme for the unit-demand pricing problem of~\cite{CHK}, i.e. the single-bidder case of the optimal multidimensional mechanism design problem. Their posted price mechanism achieves a $(1-\epsilon)$-fraction of the optimal revenue of all {\em deterministic} mechanisms in polynomial time, when the values of the bidder are independent (but not necessarily identically distributed) from Monotone Hazard Rate distributions, and quasi-polynomial time, when the values are sampled from regular distributions. These results are neither subsumed, nor subsume the results in the present paper. Indeed, we are more general here in that (a) we treat the multi-bidder problem and (b) trade well against the revenue of all randomized mechanisms (instead of just the deterministic ones). On the other hand, we are less general in that, when we have a non-constant number of items, we need to assume that the distribution is item-symmetric, an assumption not needed in~\cite{CD}. Rather interestingly, the techniques of the present paper are orthogonal to those in~\cite{CD}. Here we use the randomization to enable the existence of succinct mechanisms and use linear programming to compute these mechanisms. The approach of~\cite{CD} is instead probabilistic, developing extreme value theorems to characterize the optimal solution, and designing covers of revenue distributions (viewed as random variables that depend on the values and the prices) to design efficient algorithmic solutions.}

\notshow{Other recent work \cite{BCKW,CHMS} investigates the necessity of randomness in optimal mechanism design. In~\cite{BCKW} we learn that, even if there is only a single bidder and four items, there is an unbounded gap between the optimal randomized mechanism and the optimal deterministic mechanism if arbitrary correlation is allowed among the item values. To contrast, \cite{CHMS} shows that, for any number of bidders and items, if the value of each bidder for each item is independent, then the optimal deterministic mechanism is a constant factor approximation to the optimal randomized mechanism. We also see in~\cite{CHMS} that there are in fact examples where the optimal randomized mechanism does strictly better than the optimal deterministic mechanism, even when all the values are i.i.d. These results imply that, to achieve near-optimal revenue in correlated settings or even i.i.d. settings, we are forced to explore randomized mechanisms.}

\notshow{Some recent results~\cite{BGGM,DFK} make use of the same LP. However, without any extra work, the size of the LP is inherently polynomial in the \emph{size of the support of the input distribution}. Our goal is to push beyond this limitation and get algorithms with runtime that only depends on $n$ and $m$ and \emph{not} on the support of the input distribution. In particular, our results apply to continuous and even \emph{unbounded} distributions, where the LP used elsewhere would have infinite size. Additionally, results that do not make use of this LP do not generalize naturally to accommodate budget and demand constraints. Our algorithms naturally accommodate budget and demand constraints by simply adapting the LP (discussed in App \ref{app: LP}). We aim for the best of both worlds: we get optimality and the ability to easily accommodate budget/demand constraints by using the LP, but are able to remove the dependence on the size of the support of the input distribution.

\costasnote{I don't know where to put this and if it's needed... We note that known non-programming based constant-approximation results~\cite{CHK,CHMS} do not generalize naturally to accommodate budget and demand constraints. Instead our algorithms do accommodate budget and demand constraints. We aim for the best of both worlds: we get optimality and the ability to easily accommodate budget/demand constraints by using the LP, but are able to remove the dependence on the size of the support of the input distribution.}}

\section{Symmetry Theorem}\label{sec:symmetries}
We provide the necessary definitions to understand exactly what our symmetry result is claiming.

\begin{definition}(Symmetry in a Distribution) We say that a distribution $\mathcal{D}$ has symmetry $\sigma \in S_m \times S_n$ if, for all $\vec{v} \in \mathbb{R}^{m\times n}$, $Pr_{\cal D}[\vec{v}] = Pr_{\cal D}[\sigma(\vec{v})]$. We also write ${\cal D} \equiv \sigma({\cal D})$.
\end{definition}

\begin{definition}(Symmetry in a Mechanism) We say that a mechanism respects symmetry $\sigma \in S_m \times S_n$ if, for all $\vec{v} \in \mathbb{R}^{m\times n}$, $M(\sigma(\vec{v})) = \sigma(M(\vec{v}))$.
\end{definition}

\begin{definition}(Permutation of a Mechanism) For any $\sigma \in S_m \times S_n$, and any mechanism $M$, define the mechanism $\sigma(M)$ as $[\sigma(M)](\vec{v}) = \sigma(M(\sigma^{-1}(\vec{v})))$.
\end{definition}
\notshow{\mattnote{I want to remove this. It's taking up a lot of space and I don't think it's necessary.}
While the definition for symmetry of a distribution seems very natural, the definition for symmetry of a mechanism and permuting a mechanism may seem odd. We might instead expect to see $M(\sigma(\vec{v})) = M(\vec{v})$ and $\sigma(M)(\vec{v}) = \sigma(M(\vec{v}))$. These latter definitions are in fact very unnatural. Here is a single-bidder $2$-item example to illustrate why. Suppose that $M(\langle 1,0 \rangle)$ gives item $1$ with probability $1/2$ at price $1/4$. Let $\sigma$ swap items $1$ and $2$. Then by these alternative definitions, for $M$ to respect $\sigma$, it must also have $M(\langle 0,1 \rangle)$ give item $1$ with probability $1/2$ at price $1/4$, an outcome the bidder is not willing to purchase. A more natural definition would be to ask that $M(\langle 0,1 \rangle)$ give item $2$ with probability $1/2$ at price $1/4$, exactly what our definition asks. Likewise, by this alternative definition we would have defined $[\sigma(M)](\langle 1,0 \rangle)$ to yield item $2$ with probability $1/2$ at price $1/4$, again an outcome the bidder is not willing to purchase. A more natural definition would be to have $[\sigma(M)](\langle 0,1 \rangle)$ yield item $2$ with probability $1/2$ at price $1/4$, again what our definition asks.} We proceed to state our symmetry theorem; its proof can be found in App~\ref{app:symmetries}.

\begin{prevtheorem}{Theorem}{thm:symmetry}{\bf (Restated from Sec~\ref{sec:overview})} For all $\mathcal{D}$, any BIC (respectively IC, $\epsilon$-IC, $\epsilon$-BIC) mechanism $M$ can be symmetrized into {a BIC (respectively IC,  $\epsilon$-IC, $\epsilon$-BIC) mechanism} $M'$ such that, for all $\sigma  \in S_m \times S_n$, if $\mathcal{D}$ has symmetry $\sigma$, $M'$ respects $\sigma$, and $R^M(\mathcal{D}) = R^{M'}(\mathcal{D})$.
\end{prevtheorem}

We note that Thm~\ref{thm:symmetry} is an extremely general theorem. $\mathcal{D}$ can have arbitrary correlation between bidders or items, and can be continuous. \notshow{In addition, the theorem still holds if we allow for demand or budgets constraints and other generalizations of the problem.} One might wonder why we had to restrict our theorem to symmetries in $S_m \times S_n$ and not arbitary permutations of the set $[m]\times [n]$. In fact, after reading through our proof, one can see that the same inequalities that make symmetries in $S_m \times S_n$ work also hold for symmetries in $S_{[m]\times [n]}$. However, the mechanism resulting from our proof is not a feasible one, since our transformation can violate feasibility constraints for symmetries $\sigma \notin S_m \times S_n$.

We also emphasize a subtle property of our symmetrizing transformation: the transformation takes as input a set of symmetries satisfied by ${\cal D}$ and a mechanism, and symmetrizes the mechanism so that it satisfies all symmetries in the given set of symmetries. Our transformation {\em does not work if the given set of symmetries is not a subgroup.} Luckily the maximal subset of symmetries in $S_m \times S_n$ satisfied by a value distribution is {\em always a subgroup,} and this enables our result.

\section{Optimal Symmetric Mechanisms for Discrete Distributions}\label{sec:LP}
In this section, we solve the following problem: ``Given a distribution $\mathcal{D}$ with constant support per dimension and a subgroup of symmetries $S \subseteq S_m \times S_n$ satisfied by ${\cal D}$, find a BIC mechanism $M$ that respects all symmetries in $S$ and maximizes $R^M(\mathcal{D})$.'' By Thm~\ref{thm:symmetry}, such $M$ will in fact be optimal with respect to all mechanisms. Intuitively, optimizing over symmetric mechanisms should require less work than over general mechanisms, since we should be able to exploit the symmetry constraints in our optimization. Indeed, suppose that every bidder can report $c$ different values for each item, where $c$ is some absolute constant. Then the na\"ive LP of Section~\ref{sec:overview}/App~\ref{app:LP} has size polynomial in $c^{mn}$, where  $m,n$ are the number of bidders and items respectively. In Sec~\ref{subsec:bidder symmetries} we give a simple observation that reduces the number of variables and constraints of this LP for any given $S$.  This observation in itself is sufficient to provide an efficient solution to the BIC $k$-items problem (in our constant-support-per-dimension setting), but falls short from solving the BIC $k$-bidders problem. For the latter, we need another structural consequence of symmetry, which comes in the form of a {\em strong-monotonicity} property satisfied by all symmetric BIC mechanisms. Strong-monotonicity and symmetry together enable us to obtain an efficient solution to the BIC $k$-bidders problem in Sec~\ref{subsec:item symmetries} (still for our constant-support-per-dimension setting).  We explicitly write the LPs that find the optimal BIC mechanism. Simply tacking on a {$-\epsilon\cdot \sum_j \pi_{ij}(\vec{w}_i)$} to the correct side of the BIC constraints yields an LP to find the optimal $\epsilon$-BIC mechanism for any $\epsilon$. Efficiently solving non-constant/infinite supports per dimension is postponed to Sec~\ref{sec:Algorithms}.

\subsection{Reducing the LP size for any $S$, and Solving the discrete BIC $k$-items Problem}\label{subsec:bidder symmetries}
We provide an LP formulation that works for any $S$. Our LP is the same as the na\"ive LP of Figure~\ref{fig:naive LP}, except we drop some constraints of that LP and modify its objective function as follows. Since our mechanism needs to respect every symmetry in $S$, it must satisfy
$$\phi_{ij}(\vec{v}) = \phi_{\sigma(i,j)}(\sigma(\vec{v})), \forall i,j,\vec{v},\sigma \in S\text{ and }p_i(\vec{v}) = p_{\sigma(i)}(\sigma(\vec{v})), \forall i,\vec{v}, \sigma \in S.$$
Therefore, if we define an equivalence relation by saying that $\vec{v} \sim_S \sigma(\vec{v})$, for all $\sigma \in S$, we only need to keep variables $\phi_{ij}(\vec{v}),p_i(\vec{v})$ for a single representative from each equivalence class. We can then use the above equalities to substitute for all non-representative $\vec{v}$'s into the na\"ive LP. This will cause some constraints to become duplicates. If we let $E$ denote the set of representatives, then we are left with the LP of Figure~\ref{fig:succinct k-items LP} in App \ref{app:succinct LPs}, after removing duplicates. In parentheses at the end of each type of variable/constraint is the number of {distinct} variables/constraints {of that type}.

\begin{lemma} \label{lem:simple reduction works for k-items} The LP of Fig.~\ref{fig:succinct k-items LP} in App~\ref{app:succinct LPs} has polynomial size for the BIC $k$-items problem, if the support of every marginal of the value distribution is an absolute constant.
\end{lemma}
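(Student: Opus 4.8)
The plan is to count variables and constraints of the reduced LP in Figure~\ref{fig:succinct k-items LP} directly, and show each count is polynomial in $m$ and $n$ when every marginal of $\mathcal{F}$ has support size at most some absolute constant $c$. In the BIC $k$-items problem the number of items is the constant $k$, and the relevant symmetry group is $S = \{1\}\times S_k$ (we may permute the $k$ items, since $\mathcal{F}$ is arbitrary over $\R^k$ and need not be item-symmetric — so really only bidder symmetries are available; but the key point is that the number of items is constant regardless). So the quantity we must bound is the number of distinct type vectors each bidder can report, together with the number of equivalence classes of full profiles under $S$.

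First I would observe that since each bidder's value for each of the $k$ items lies in a set of size at most $c$, each bidder has at most $c^k$ possible type vectors — a constant, call it $N := c^k$. A full profile $\vec v \in \R^{m\times k}$ is then specified by choosing, for each of the $m$ bidders, one of these $N$ types, so there are at most $N^m$ profiles before quotienting. This is exponential in $m$, so the naive count is not enough; the gain comes from the quotient by the symmetry group. Here I would invoke the standard stars-and-bars bound: if the only symmetries are bidder permutations (the subgroup of $S$ acting on $[m]$), then an equivalence class of profiles is determined by the multiset of bidder types, i.e. by how many bidders have each of the $N$ possible types. The number of such multisets is $\binom{m+N-1}{N-1} = O(m^{N-1})$, which is polynomial in $m$ since $N$ is a constant. (If $S$ also permutes items, the number of classes only shrinks, so the bound still holds.) Hence $|E| = O(m^{N-1})$.

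Next I would go through the LP of Figure~\ref{fig:succinct k-items LP} type by type. The allocation variables $\phi_{ij}(\vec v)$ range over $i\in[m]$, $j\in[k]$, and $\vec v\in E$, giving $O(m\cdot k\cdot m^{N-1}) = O(m^N)$ variables; the price variables $p_i(\vec v)$ give $O(m\cdot m^{N-1})$. The feasibility constraints (each item allocated at most once in expectation; each bidder gets at most $C$ items in expectation) number $O(k\cdot m^{N-1})$ and $O(m\cdot m^{N-1})$ respectively. The BIC constraints, after the substitution and deduplication, range over a bidder $i$, a true type $\vec v_i$ (one of $N$ values), and a reported type $\vec w_i$ (one of $N$ values), with the expectation over $\vec v_{-i}$ expanded as a finite sum over the $O(m^{N-1})$ type-multisets of the remaining bidders — this is $O(m\cdot N^2\cdot m^{N-1}) = O(m^N)$ constraints, each a finite linear combination of the variables. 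The non-negativity and normalization constraints are of the same order. Since $N = c^k$ and both $c$ and $k$ are absolute constants, every count is a fixed polynomial in $m$, and the coefficients (which are probabilities computed from $\mathcal{F}$, or sums of such, and are poly-size by the input-model assumptions) are computable in polynomial time; therefore the LP has polynomial size.

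The only genuinely substantive step is the quotient-counting bound $|E| = O(m^{N-1})$: everything else is bookkeeping. The subtlety to be careful about is that the substitution $\phi_{ij}(\vec v) = \phi_{\sigma(i,j)}(\sigma(\vec v))$ must be performed consistently, which is exactly why we need $S$ to be a subgroup (so that the equivalence classes are well-defined and a representative can be canonically chosen, e.g. the lexicographically least profile in each orbit); this was already flagged in Section~\ref{sec:symmetries}. I expect no difficulty there. I would close by noting that this LP solves the discrete BIC $k$-items problem but, as remarked before the lemma, the analogous reduction for $k$-bidders is not polynomial by this argument alone — there the bottleneck is $n$ large with the number of item-type-multisets $\binom{n+c-1}{c-1}$ being polynomial in $n$ but the number of bidders is a constant $k$, so per-bidder a type is a multiset over $n$ items which is fine, yet the BIC deviation constraints and the interaction across the $k$ bidders blow up — which is precisely the gap that Theorem~\ref{thm:monotone} is brought in to close in Section~\ref{subsec:item symmetries}.
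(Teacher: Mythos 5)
Your proof is correct and follows essentially the same approach as the paper's: both rely on the observation that, under the subgroup of bidder permutations, an equivalence class of profiles is determined by the multiset of bidder types, of which there are polynomially many in $m$ because the number of types $N = c^k$ is an absolute constant (you use the sharper stars-and-bars bound $\binom{m+N-1}{N-1}$, the paper uses the cruder $(m+1)^{c^n}$, but both suffice). One small slip worth fixing in the write-up: you open by writing $S = \{1\}\times S_k$, i.e.\ item permutations, but then immediately (and correctly) argue that $\mathcal F$ need not be item-symmetric and that only bidder permutations are available — the relevant subgroup is $S_m \times \{1\}$, and the rest of your counting argument is indeed carried out with respect to that group.
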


\subsection{Strong-Monotonicity, and Solving the discrete BIC $k$-bidders Problem}\label{subsec:item symmetries}

Unfortunately, the reduction of the previous section is \emph{not} strong enough to make the LP polynomial in the number of items $n$, even if $S$ contains all item permutations and there is a constant number of bidders. This is because a bidder can deviate to an exponential number $c^n$ of types, and our LP needs to maintain an exponential number of BIC constraints. To remedy this, we prove that every item-symmetric {BIC} mechanism for bidders sampled from an item-symmetric distribution satisfies a natural monotonicity property:

\begin{definition}(Strong-Monotonicity of a BIC mechanism) A BIC or $\epsilon$-BIC mechanism is said to be {\em strongly monotone} if for all $i,j,j'$, $v_{ij} \geq v_{ij'} \Rightarrow \pi_{ij}(\vec{v_i}) \geq \pi_{ij'}(\vec{v_i})$. That is, bidders expect to receive their favorite items more often.
\end{definition}
\begin{prevtheorem}{Theorem}{thm:monotone}{\bf (Restated from Sec~\ref{sec:overview})} If $M$ is BIC and $\mathcal{D}$ and $M$ are both item-symmetric, then $M$ is strongly monotone. If $M$ is $\epsilon$-BIC and $\mathcal{D}$ and $M$ are both item-symmetric, there exists a {$\epsilon$-BIC} mechanism of the same expected revenue that is strongly monotone.
\end{prevtheorem}\\
\noindent The proof of Thm~\ref{thm:monotone} can be found in App~\ref{app:proof LP}. We note again that our notion of strong-monotonicity is different than the notion of cyclic-monotonicity that holds more generally, but is not sufficient for obtaining efficient algorithms. Instead strong-monotonicity suffices due to the following:
\begin{observation}\label{obs:monotone} When playing {an item-symmetric,} strongly monotone BIC mechanism, bidder $\vec{v_i}$ has no incentive to report any $\vec{w_i}$ with $w_{ij} > w_{ij'}$ unless $v_{ij} \geq v_{ij'}$.
\end{observation}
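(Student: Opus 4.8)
The plan is to argue by contraposition, combining strong-monotonicity with the item-symmetry of $M$ and the BIC constraints. Suppose bidder $i$ has true type $\vec{v_i}$ and contemplates reporting $\vec{w_i}$ with $w_{ij} > w_{ij'}$ for some pair of items $j, j'$, yet $v_{ij} < v_{ij'}$. I want to exhibit an alternative report that is at least as good for the bidder, so there is never a strict reason to use $\vec{w_i}$. The natural candidate is $\vec{w_i}'$, the report obtained from $\vec{w_i}$ by swapping coordinates $j$ and $j'$ (so $w'_{ij} = w_{ij'}$ and $w'_{ij'} = w_{ij}$, all other coordinates unchanged); call the underlying transposition $\tau \in S_n$.

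First I would use the fact that $M$ is item-symmetric: since $\tau$ is an item permutation and $M$ respects $\tau$, the outcome of $M$ on $\vec{w_i}'$ (interim, i.e. in expectation over $\vec{v}_{-i} \sim \mathcal{D}_{-i}$, using that $\mathcal{D}_{-i}$ is also item-symmetric so the expectation is unchanged) is exactly the $\tau$-permuted outcome of $M$ on $\vec{w_i}$. In particular $\pi_{ij}(\vec{w_i}') = \pi_{ij'}(\vec{w_i})$, $\pi_{ij'}(\vec{w_i}') = \pi_{ij}(\vec{w_i})$, all other marginals of $i$ are unchanged, and crucially the expected price is the same: the interim price charged for reporting $\vec{w_i}'$ equals that for reporting $\vec{w_i}$. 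Next, strong-monotonicity of $M$ applied to the report $\vec{w_i}$ (whose coordinates satisfy $w_{ij} > w_{ij'}$) gives $\pi_{ij}(\vec{w_i}) \geq \pi_{ij'}(\vec{w_i})$. Now I compare the bidder's interim utility under the two reports: the price terms cancel, so the difference in utility is
\[
\sum_\ell v_{i\ell}\bigl(\pi_{i\ell}(\vec{w_i}') - \pi_{i\ell}(\vec{w_i})\bigr) = v_{ij}\bigl(\pi_{ij'}(\vec{w_i}) - \pi_{ij}(\vec{w_i})\bigr) + v_{ij'}\bigl(\pi_{ij}(\vec{w_i}) - \pi_{ij'}(\vec{w_i})\bigr) = (v_{ij'} - v_{ij})\bigl(\pi_{ij}(\vec{w_i}) - \pi_{ij'}(\vec{w_i})\bigr),
\]
which is a product of two nonnegative factors (using $v_{ij'} > v_{ij}$ and the monotonicity inequality), hence $\geq 0$. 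So reporting $\vec{w_i}'$ is weakly better for the bidder than reporting $\vec{w_i}$; repeating this argument (sorting the coordinates of the report to agree with the order of the true values) shows the bidder can always restrict attention to reports $\vec{w_i}$ whose coordinate order is consistent with that of $\vec{v_i}$, which is exactly the claim.

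I do not expect a genuine obstacle here — the statement is essentially a one-line consequence of combining Theorem~\ref{thm:monotone}'s conclusion (strong-monotonicity of the item-symmetric BIC mechanism $M$), the definition of item-symmetry of $M$, and item-symmetry of $\mathcal{D}_{-i}$. The only point requiring a little care is bookkeeping: making sure the interim price is genuinely invariant under permuting the reported coordinates by $\tau$ (this is where item-symmetry of the mechanism, as opposed to merely strong-monotonicity, is used), and making sure that when $\tau$ is applied the expectation over $\vec{v}_{-i}$ can be taken inside without changing anything because $\mathcal{D}_{-i}$ is itself invariant under $\tau$. A secondary subtlety is that the argument should be phrased so that it applies verbatim to the $\epsilon$-BIC case if needed, but since the observation only asserts a property of a fixed mechanism $M$ and makes no reference to incentive constraints of $M$ itself, the BIC/$\epsilon$-BIC distinction does not actually enter the proof of the observation.
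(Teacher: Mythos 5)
Your argument is correct, and it matches what the paper's authors (implicitly) have in mind — the paper states this as an Observation without a proof, and your swap argument is the natural one to fill in. The key steps check out: item-symmetry of $M$ together with item-symmetry of $\mathcal{D}_{-i}$ gives $\pi_{ij}(\vec{w}_i') = \pi_{ij'}(\vec{w}_i)$, $\pi_{ij'}(\vec{w}_i') = \pi_{ij}(\vec{w}_i)$ and equal interim prices; strong-monotonicity applied to the report $\vec{w}_i$ gives $\pi_{ij}(\vec{w}_i) \geq \pi_{ij'}(\vec{w}_i)$; and the utility gap factors as $(v_{ij'}-v_{ij})\bigl(\pi_{ij}(\vec{w}_i)-\pi_{ij'}(\vec{w}_i)\bigr)\geq 0$. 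Your side remark that BIC-ness of $M$ is only used via Theorem~\ref{thm:monotone} to guarantee strong-monotonicity, and not directly in the observation's proof, is also accurate and is exactly why the paper can re-use the observation for $\epsilon$-BIC mechanisms in the LP of Figure~\ref{fig:succinct k-bidders LP}.
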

\begin{lemma} \label{lem: succinct LP for k-bidders} There exists a polynomial-size LP for the BIC $k$-bidders problem, if the support of every marginal of the value distribution is an absolute constant. The LP is shown in Fig.~\ref{fig:succinct k-bidders LP} of App \ref{app:succinct LPs}.
\end{lemma}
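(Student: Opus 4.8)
I would exhibit an LP, polynomial in $n$ with $k$ and the marginal‑support bound $c$ treated as constants, whose feasible region is (a succinct encoding of) the item‑symmetric, strongly‑monotone BIC mechanisms for $\mathcal{D}=\mathcal{F}_1\times\cdots\times\mathcal{F}_k$, and whose objective is expected revenue. Together with Theorem~\ref{thm:symmetry} (the optimal revenue is attained by a mechanism respecting the diagonal action of $S_n$ on the items, since each $\mathcal{F}_i$ is item‑symmetric) and Theorem~\ref{thm:monotone} (such a mechanism may be taken strongly monotone at no loss in revenue), this yields the lemma. \emph{Succinct encoding:} an $S_n$‑symmetric mechanism behaves identically, up to relabelling items, on all profiles in a common $S_n$‑orbit, and an orbit is just the multiset of ``columns'' $(v_{1j},\dots,v_{kj})_{j\in[n]}$; each column takes one of at most $c^k$ values, so there are only $\binom{n+c^k-1}{c^k-1}=O(n^{c^k-1})$ orbits, and within an orbit all items of a common column‑class are treated identically (a transposition of two such items fixes the profile). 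Hence it suffices to keep, for each orbit representative $E$, each bidder $i\in[k]$ and each column‑class $\kappa$ appearing in $E$, a variable $\phi^E_{i\kappa}\in[0,1]$ (marginal probability bidder $i$ gets a class‑$\kappa$ item) and $p^E_i$ (bidder $i$'s expected payment) — polynomially many variables. A solution is turned into a mechanism by extending through the $S_n$‑action, and the feasibility constraints ``$\sum_i\phi^E_{i\kappa}\le 1$ for each $\kappa$'' and ``$\sum_\kappa(\#\text{ class-}\kappa\text{ items in }E)\,\phi^E_{i\kappa}\le C_i$'' suffice, by Birkhoff--von Neumann (App.~\ref{app:feasible}), to realize them as a feasible randomized allocation.

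\textbf{Interim quantities as short linear forms.} Revenue, the monotonicity constraints, and the incentive constraints are naturally written in terms of $\pi_{ij}(\vec v_i)=\mathbb{E}_{\vec v_{-i}}[\phi_{ij}(\vec v_i;\vec v_{-i})]$ and expected prices rather than the $\phi^E_{i\kappa}$ directly. I would show these are explicit polynomial‑size linear forms in the $\phi^E_{i\kappa},p^E_i$: conditioning on bidder $i$'s type $\vec v_i$ (equivalently, by item‑symmetry, on its value multiset) and on the value of a single column $j$, the orbit of the whole profile is determined by the joint type‑multiset of the remaining $n-1$ columns, whose distribution—using independence across bidders and item‑symmetry of each $\mathcal{F}_{i'}$—is a polynomial‑size mixture of multivariate‑hypergeometric laws with coefficients computable from the constant‑support marginals. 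With this, the objective $\sum_i\mathbb{E}_{\vec v_i}[\text{expected payment}]$, the $O(n^{c-1})$‑many single‑bidder types' strong‑monotonicity constraints ``$v_{ij}\ge v_{ij'}\Rightarrow\pi_{ij}(\vec v_i)\ge\pi_{ij'}(\vec v_i)$'', and the ex‑interim IR constraints all become polynomially many polynomial‑size linear (in)equalities.

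\textbf{Pruning the incentive constraints, and correctness.} I would include a BIC constraint for bidder $i$ only for deviations $\vec w_i$ that are order‑consistent with the true type $\vec v_i$ (i.e. $w_{ij}>w_{ij'}\Rightarrow v_{ij}\ge v_{ij'}$). For a strongly‑monotone, item‑symmetric mechanism, any deviation $\vec w_i$ is weakly dominated by the rearrangement of $\vec w_i$ that is comonotone with $\vec v_i$: expected prices are permutation‑invariant, and by strong monotonicity $\vec\pi_i(\vec w_i)$ is comonotone with $\vec w_i$, so comonotone rearrangement only increases $\vec v_i\cdot\vec\pi_i(\cdot)$ by the rearrangement inequality — this is Observation~\ref{obs:monotone}, whose argument uses only item‑symmetry and strong monotonicity, not BIC, hence applies to any feasible LP point after extension. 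For the same reason two order‑consistent deviations with a common value multiset give equal expected utility, so there are only $O(n^{c-1})\cdot O(n^{c-1})$ distinct such constraints. Thus the LP has polynomial size; its optimum equals $R^{OPT}(\mathcal D)$ (no revenue lost, by Theorems~\ref{thm:symmetry} and~\ref{thm:monotone}: an optimal item‑symmetric BIC mechanism is strongly monotone and satisfies every constraint), and every optimal point yields in polynomial time a genuine BIC mechanism of that revenue, since the order‑consistent BIC constraints together with strong monotonicity imply all BIC constraints. Appending $-\epsilon\sum_j\pi_{ij}(\vec w_i)$ to the right‑hand sides gives the $\epsilon$‑BIC variant.

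\textbf{Main obstacle.} The delicate part is the two‑sided losslessness of the drastic pruning of incentive constraints: strong monotonicity must be enforced \emph{explicitly} in the LP so that an arbitrary feasible point extends to a genuinely BIC mechanism (the ``LP point $\Rightarrow$ valid mechanism'' direction), while Theorem~\ref{thm:monotone} is exactly what keeps this lossless in the ``optimal mechanism $\Rightarrow$ LP point'' direction. The other place requiring care is the bookkeeping showing that $\pi_{ij}$ and revenue remain polynomial‑size linear functionals of the compressed variables — i.e. that the multivariate‑hypergeometric coefficients above are only polynomially many and are efficiently computable from the constant‑support marginals.
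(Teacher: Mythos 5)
Your proposal is correct and follows essentially the same strategy as the paper: restrict to item-symmetric mechanisms (Theorem~\ref{thm:symmetry}), enforce strong monotonicity explicitly in the LP and use Observation~\ref{obs:monotone} to prune the BIC constraints to comparisons among sorted/order-consistent types only, and then count the number of $S_n$-orbits of profiles and of sorted single-bidder types to bound the LP size polynomially in $n$ when $k$ and the marginal support $c$ are constants. Your column-class compression and the explicit treatment of the hypergeometric bookkeeping are minor presentational refinements; they match, respectively, the paper's bounds $|E|\le(n+1)^{c^m}$, $|E_i|\le(n+1)^c$ and its precomputed ${\rm aux}$ weights in Fig.~\ref{fig:succinct k-bidders LP}.
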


{We note that Theorem \ref{thm:monotone} is also true for IC and $\epsilon$-IC mechanisms with the appropriate definition of strong-monotonicity. The definition and proof are given in Appendix \ref{sec: IC results}.}

\section{Efficient Mechanisms for General Distributions}\label{sec:Algorithms}
We use the results of Sec \ref{sec:LP} to prove Thm~\ref{thm:additive}. First, it is not hard to see that discretizing the value distribution to multiples of $\delta$, for sufficiently small $\delta = \delta(\epsilon)$, and applying Lemmas~\ref{lem:simple reduction works for k-items} and~\ref{lem: succinct LP for k-bidders} yields an algorithm for computing an $\epsilon$-BIC $\epsilon$-optimal mechanism for the $k$-items and $k$-bidders problems. The resulting technical difficulty is turning these mechanisms into being $0$-BIC. To do this, we employ a non-trivial modification of the construction in~\cite{HKM} to improve the truthfulness of the mechanism at the cost of a small amount of revenue. We present our construction and its challenges in Sec~\ref{sec:true PTAS}.
\subsection{A Warmup: $\epsilon$-Truthful Near-Optimal Mechanisms} \label{sec:epsilon-truthful}

\paragraph{Discretization:} Let ${\cal D}$ be a valid input to the BIC $k$-items or the BIC $k$-bidders problem. For each $i$, create a new distribution $\mathcal{D}'_i$ that first samples a bidder from $\mathcal{D}_i$, and rounds every value down to the nearest multiple of $\delta$. Let ${\cal D}'$ be the product distribution of all ${\cal D}'_i$'s. {Let also $T$ denote the maximum number of items that can be awarded by a feasible mechanism.} We show the following lemma whose proof can be found in App~\ref{app:bi-criterion}.

\begin{lemma}\label{lem:deltaIC}
For all $\delta$, let $M'$ be the optimal {$\delta$}-BIC mechanism for ${\cal D}'$. Then $R^{M'}(\mathcal{D}') \geq R^{OPT}(\mathcal{D}) - {\delta T}$. Moreover, let $M$ be the mechanism that on input $\vec{v}$ rounds every $v_{ij}$ down to the nearest multiple $v_{ij}'$ of $\delta$ and implements the outcome $M'(\vec{v}')$. Then $M$ is {$2\delta$}-BIC for bidders sampled from $\mathcal{D}$, and has revenue at least $R^{OPT}(\mathcal{D})-{\delta T}$.
\end{lemma}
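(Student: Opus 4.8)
First, I would establish the revenue bound $R^{M'}(\mathcal{D}') \geq R^{OPT}(\mathcal{D}) - \delta T$. The natural way to do this is to exhibit \emph{some} $\delta$-BIC mechanism for $\mathcal{D}'$ whose revenue on $\mathcal{D}'$ is at least $R^{OPT}(\mathcal{D}) - \delta T$; since $M'$ is the \emph{optimal} such mechanism, the bound follows. Let $M^*$ be the revenue-optimal (truly BIC) mechanism for $\mathcal{D}$. Consider the mechanism $\tilde M$ for $\mathcal{D}'$ that, on reported rounded profile $\vec{v}'$, simply runs $M^*(\vec{v}')$ (treating the rounded values as if they were the true report). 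Feasibility is inherited since the allocation rule is unchanged. The revenue of $\tilde M$ on $\mathcal{D}'$ equals the revenue of $M^*$ on $\mathcal{D}'$, which differs from $R^{M^*}(\mathcal{D}) = R^{OPT}(\mathcal{D})$ only because prices are evaluated at rounded inputs; but $M^*$ being BIC for $\mathcal{D}$ gives no immediate relation there, so instead I would argue directly: couple $\vec{v} \sim \mathcal{D}$ with its rounding $\vec{v}' \sim \mathcal{D}'$, and use that $M^*$ run on $\vec v'$ collects exactly the same expected payment as $M^*$ run honestly on a bidder population distributed as $\mathcal D'$. The real content is the incentive claim: a bidder in $\tilde M$ with true (rounded) value $\vec{v}'_i$ contemplating a deviation to $\vec{w}'_i$ gains, relative to truth, exactly what a bidder with value $\vec v_i'$ gains in $M^*$ by the same deviation — which is $\leq 0$ by BIC of $M^*$ — \emph{except} that the set of profiles $\vec v_{-i}'$ they face is distributed as $\mathcal D_{-i}'$ rather than $\mathcal D_{-i}$. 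So $\tilde M$ need not be exactly BIC. The fix is the usual one: bound the per-unit-item distortion. Since $\vec v_i' $ and the unrounded $\vec v_i$ differ coordinatewise by at most $\delta$, and utilities are linear in values with allocation probabilities in $[0,1]$ summing to at most $T$ over all (bidder,item) pairs awarded, the change in any bidder's expected utility from the rounding — in either the truthful or the deviating term — is at most $\delta$ times the expected number of items that report receives. This is precisely the slack allowed by the paper's definition of $\delta$-BIC (Definition~\ref{def:BIC}), so $\tilde M$ is $\delta$-BIC for $\mathcal{D}'$; and the revenue lost is at most $\delta T$ (one $\delta$ per item that could be awarded, since each item contributes at most $\delta$ to the price distortion). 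Hence $R^{M'}(\mathcal D') \geq R^{\tilde M}(\mathcal D') \geq R^{OPT}(\mathcal D) - \delta T$.

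Next, for the second part, I would analyze the mechanism $M$ on $\mathcal{D}$ that rounds the report $\vec v$ down to $\vec v'$ and implements $M'(\vec v')$. Its revenue: when $\vec v \sim \mathcal D$, the induced rounded profile is exactly distributed as $\mathcal D'$, so $R^M(\mathcal D) = R^{M'}(\mathcal D') \geq R^{OPT}(\mathcal D) - \delta T$, giving the revenue claim. For truthfulness on $\mathcal D$: a bidder with true value $\vec v_i$ who reports truthfully is treated by $M$ as having value $\vec v_i'$ (the rounding of $\vec v_i$); a bidder who deviates to $\vec w_i$ is treated as having $\vec w_i'$. So the deviation available in $M$ is exactly a deviation $\vec v_i' \to \vec w_i'$ inside $M'$ — \emph{but the bidder's true value is $\vec v_i$, not $\vec v_i'$}, so the utility she derives from any allocation $\vec\phi$ is $\vec v_i \cdot \vec\phi$, which exceeds the $\vec v_i' \cdot \vec\phi$ that $M'$'s $\delta$-BIC guarantee is phrased in terms of. I would split the gain from deviating into: (i) the gain measured with value vector $\vec v_i'$, which is at most $\delta \cdot \sum_j \pi_{ij}^{M'}(\vec w_i')$ by $\delta$-BIC of $M'$; plus (ii) the correction $(\vec v_i - \vec v_i')\cdot(\vec\phi_i(\text{deviate}) - \vec\phi_i(\text{truth}))$, whose absolute value is at most $\delta$ times the expected total number of items involved, hence at most $\delta \cdot \sum_j \pi_{ij}(\vec w_i)$ after bounding the allocation to the deviating report. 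Adding (i) and (ii) yields a total incentive violation of at most $2\delta \cdot \sum_j \pi_{ij}(\vec w_i)$ (up to the bookkeeping of which report's $\pi$ appears), i.e.\ $M$ is $2\delta$-BIC for $\mathcal D$ in the paper's normalization.

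The main obstacle I anticipate is not any single inequality but getting the \emph{direction and the normalization of the $\delta$-slack} exactly right — in particular making sure that in part (ii) the distortion is controlled by $\sum_j \pi_{ij}$ of the correct report (the one appearing in Definition~\ref{def:BIC}'s error term) rather than an unbounded quantity, and similarly that the revenue loss is genuinely $O(\delta T)$ and not $O(\delta)$ per \emph{type} (which would be catastrophic). Both hinge on the same structural fact: at most $T$ items are ever awarded, so $\sum_{i,j}\pi_{ij} \le T$ and each item can distort value or price by at most $\delta$. I would make that observation explicit up front and reuse it for every bound. One more subtlety worth flagging: the lemma asserts $M'$ exists and is well-defined (an optimal $\delta$-BIC mechanism) — this relies on the LP formulation of Section~\ref{sec:LP} together with the fact that $\mathcal D'$ has multiples-of-$\delta$ support, so the feasible region is a nonempty compact polytope; I would cite that rather than re-prove it.
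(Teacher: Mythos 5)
Your second half (showing $M$ is $2\delta$-BIC by splitting the deviation gain into the $\delta$-BIC slack of $M'$ at the rounded report plus the correction $(\vec{v}_i-\vec{v}'_i)\cdot(\vec\phi(\text{deviate})-\vec\phi(\text{truth}))$, each controlled by $\delta\sum_j\pi_{ij}$ of the deviating report) matches the computation the paper performs inside Lemma~\ref{lem:coupling}, and is fine. The first half has a genuine gap. You define $\tilde M(\vec{v}'):=M^*(\vec{v}')$ and assert the revenue drops by at most $\delta T$ ``since each item contributes at most $\delta$ to the price distortion.'' But rounding a report by $\delta$ bounds the change in a bidder's \emph{value} for a fixed lottery, not the change in the \emph{price} $p_i^{M^*}(\cdot)$---the latter is just the mechanism's output at a new input, and it can jump discontinuously. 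Take one item, one bidder, $v$ uniform on $\{0.4,\;0.8+\delta/2\}$ with $\delta$ dividing $0.4$: the optimal $M^*$ posts price $0.8+\delta/2$ and earns $0.4+\delta/4$; after rounding down both types fall strictly below the posted price and $\tilde M$ earns $0$, a loss of about $0.4$, not $\delta T=\delta$. (Also $\vec{v}'$ need not lie in $\text{supp}(\mathcal D)$, where $M^*$ is undefined and its BIC/IR constraints were never enforced.)

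The paper's proof never evaluates $M^*$ at the rounded vector. It passes through the rounded-\emph{up} distribution $\mathcal D''$: Lemma~\ref{lem:coupling} builds a $\delta$-BIC mechanism for $\mathcal D''$ that \emph{resamples a type in $\text{supp}(\mathcal D)$ consistent with the coupling} and runs $M^*$ there; this preserves $R^{OPT}(\mathcal D)$ exactly (the resampled input has law $\mathcal D$) and keeps IR because $\vec v''\ge\vec v$ coordinatewise, with at most a $\delta$-per-item IC slip. Then Lemma~\ref{lem:add by delta} shifts $\mathcal D''$ down by exactly $\delta$ to reach $\mathcal D'$ while charging $\delta$ less per expected item received; those explicit rebates are the actual source of the $-\delta T$ term, not any bound on ``price distortion'' from rounding. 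To repair your part one you would need both ingredients: map the rounded input back into $\text{supp}(\mathcal D)$ before running $M^*$, and explicitly discount prices by $\delta$ per item to restore IR on the rounded-down types.
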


Now notice that our algorithms of Sec \ref{sec:LP} allow us to find  an optimal {$\delta$}-BIC mechanism $M'$ for $\mathcal{D}'$. So an application of Lemma~\ref{lem:deltaIC} allows us to obtain a {$2\delta$}-BIC mechanism for ${\cal D}$ whose revenue is at least $R^{OPT}(\mathcal{D})-{\delta T}$.

\subsection{Truthful Near-Optimal Mechanisms: Proof of Theorems~\ref{thm:epsilon-BIC to BIC} and~\ref{thm:additive}}\label{sec:true PTAS}

{We start this section with describing our $\epsilon$-BIC to BIC transformation result (Thm~\ref{thm:epsilon-BIC to BIC}) arguing that it can be implemented efficiently in the BIC $k$-items and $k$-bidders settings.~{\footnote{{We will explicitly describe the transformation for the BIC $k$-items and $k$-bidders settings. For an arbitrary setting where $m$ bidders sample their valuation vectors for $n$ items independently (but not necessarily identically) from $[0,1]^n$ (allowing correlation among items), simply employ the BIC $k$-items transformation, replacing $k$ with $n$.}}} Combining our transformation  with the results of the previous sections we obtain Thm~\ref{thm:additive} in the end of this section. Our transformation is inspired by~\cite{HKM}, but has several important differences. We explicitly describe our transformation, point out the key differences between our setting and that considered in \cite{HKM}, and outline the proof of correctness, postponing the complete proof of Theorem~\ref{thm:epsilon-BIC to BIC} to Appendix~\ref{app:true PTAS}.}

\paragraph{Algorithm Phase $1$: Surrogate Sale}
\begin{enumerate}
\item Recall from the statement of Theorem~\ref{thm:epsilon-BIC to BIC} that ${\cal D}'$ samples values that are integer multiples of $\delta$ and that ${\cal D}$ and ${\cal D}'$ can be coupled so that, whenever we have $\vec{v}$ sampled from $\mathcal{D}$ and $\vec{v}'$ sampled from $\mathcal{D}'$, we have $v_{ij} \geq v'_{ij} \geq v_{ij}-\delta$, for all $i,j$. Moreover, $M_1$ is an $\epsilon$-BIC mechanism for $\mathcal{D}'$, for some $\epsilon$.

\item {Modify $M_1$ to multiply all prices it charges by a factor of $(1-\eta)$. Call $M$ the  mechanism resulting from this modification. Interpret the $\eta$-fraction of the prices given back as rebates.}
\item For each bidder $i$, create $r-1$ \emph{replicas} sampled i.i.d. from $\mathcal{D}_i$ and $r$ \emph{surrogates} sampled i.i.d. from $\mathcal{D}'_i$. Use {$r = ({\eta \over \delta})^2 \cdot m^2 \cdot \hat{\beta}$, where $\hat{\beta} = ({1 \over \delta}+1)^k$, for the $k$-items transformation, and $\hat{\beta}=(n+1)^{1/\delta +1}$, for the $k$-bidders transformation.}
\item Ask each bidder to report $\vec{v_i}$. For $k$-bidders only: Fix a permutation $\sigma$ such that ${v_{i\sigma^(j)} \geq v_{i\sigma^(j+1)}}, \forall j$. For each  surrogate and replica $\vec{w_i}$, permute $\vec{w_i}$ into $\vec{w_i}'$ satisfying ${w'_{i\sigma^(j)} \geq w'_{i\sigma^(j+1)},\forall j}$.
\item Create a weighted bipartite graph with replicas (and bidder $i$) on the left and surrogates on the right. The weight of an edge between replica (or bidder $i$) with type $\vec{r_i}$ and surrogate of type $\vec{s_i}$ is $\vec{r_i}$'s utility for the expected outcome of $\vec{s_i}$ when playing $M$ (where the expectation is taken over the randomness of $M$ and of the other bidders assuming they are sampled from $\mathcal{D}'_{-i}$).
\item Compute the VCG matching and prices. If a replica (or bidder $i$) is unmatched in the VCG matching, add an edge to a random unmatched surrogate. The surrogate selected for bidder $i$ is whoever she is matched to.
\end{enumerate}
\paragraph{Algorithm Phase $2$: Surrogate Competition}
\begin{enumerate}
\item Let $\vec{s_i}$ denote the surrogate chosen to represent bidder $i$ in phase one, and let $\vec{s}$ denote the entire surrogate profile. Have the surrogates $\vec{s}$ play $M$.
\item If bidder $i$ was matched to their surrogate through VCG, charge them the VCG price and award them $M_i(\vec{s})$. (Recall that this has both an allocation and a price component; the price  is added onto the VCG price.) If bidder $i$ was matched to a random surrogate after VCG, award them nothing and charge them nothing.
\end{enumerate}

{There are several differences between our transformation and that of~\cite{HKM}. First, observe that, because ${\cal D'}$ and $M_1$ are explicitly given as input to our transformation (via an exact sampling oracle from ${\cal D}'_i$ and explicitly specifying the outcome awarded to every type $\vec{v}_i$ sampled from ${\cal D}'_i$, for all $i$), we do not have to worry about approximation issues in calculating the edge weights of our VCG auctions in Phase 1. Second, in \cite{HKM}, the surrogates are taking part in an algorithm rather than playing a mechanism, and every replica has non-negative {value} for the outcome of an algorithm because there are no prices charged. Here, however, replicas may have negative {value} for the outcome of a mechanism because there are prices charged. Therefore, some edges may have negative weights, and the VCG matching may not be perfect. We have modified $M$ to give rebates (phase $1$, step $2$) so that the VCG matching cannot be far from perfect, and show that we do not lose too much revenue from unmatched bidders. Finally, in the $k$-bidders problem, the vanilla approach  that does not permute sampled replicas and surrogates (like we do in phase $1$, step $4$ of our reduction) would require exponentially many replicas and surrogates to preserve revenue. To maintain the computational efficiency of our reduction, we resort to sampling only polynomially many replicas/surrogates and permuting them according to the permutation induced by the bidder's reported values. This may seem like it is giving a bidder control over the distribution of replicas and surrogates sampled for her. We show, exploiting the monotonicity results of Sec \ref{sec:LP}, that our construction  is still BIC despite our permuting the replicas and surrogates. We overview the main steps of the proof of Thm~\ref{thm:epsilon-BIC to BIC} and give its complete proof in App~\ref{app:true PTAS}. We conclude this section with the proof of Thm~\ref{thm:additive}.

\medskip \begin{prevproof}{Theorem}{thm:additive}
Choose $\mathcal{D}'$ to be the distribution that samples from ${\cal D}$ and  rounds  every $v_{ij}$ down to the nearest multiple of $\delta$. Let then $M_1$  be the optimal $\delta$-BIC mechanism for $\mathcal{D}'$ as computed by the algorithms of Section~\ref{sec:LP}. By Lemma~\ref{lem:deltaIC},  $R^{M_1}(\mathcal{D}') \geq R^{OPT}(\mathcal{D}) - \delta T$. Applying Thm~\ref{thm:epsilon-BIC to BIC} we obtain a BIC mechanism $M_2$ such that

{\begin{align}
R^{M_2}({\cal D}) &\ge (1-\eta) \cdot R^{M_1}({\cal D}') - \frac{3\delta}{\eta}T \\
&\ge R^{OPT}(\mathcal{D}) - \eta \cdot R^{OPT}(\mathcal{D}) - (1-\eta) \delta T - \frac{3\delta}{\eta}T. \label{eq:grand 1}
\end{align} 

Notice that  $R^{OPT}(\mathcal{D}) {\le T}$. Hence, choosing  $\eta=\epsilon$ and $\delta = \epsilon^2$, \eqref{eq:grand 1} gives
\begin{align}
R^{M_2}({\cal D}) &\ge  R^{OPT}(\mathcal{D}) - O(\epsilon \cdot k)~~~~\text{(for $k$-items)}; \text{and}\\
R^{M_2}({\cal D}) &\ge  R^{OPT}(\mathcal{D}) - O(\epsilon \cdot \sum_i C_i)~~~~\text{(for $k$-bidders)}.
\end{align}
The proof of Theorem~\ref{thm:additive} is concluded by noticing that $\sum_i C_i \le k \max_i C_i$ and $k$ is an absolute constant.
}\end{prevproof}

\appendix


\section{Input Distribution Model}\label{app:input model}

We discuss two  models for accessing a value distribution ${\cal D}$, and explain what modifications are necessary, if any, to our algorithms to work with each model:

\begin{itemize}

\item {\bf Exact Access:} We are given access to a sampling oracle as well as an oracle that exactly integrates the pdf of the distribution over a specified region.

\item {\bf  Sample-Only Access:} We are given access to a sampling oracle and nothing else.

\end{itemize}

The presentation of the paper focuses on the first model. In this case, we can exactly evaluate probabilities of events without any special care. If we have sample-only access to the distribution, we need to be a bit more careful, and proceed to sketch what modifications are necessary to the LP of Section~\ref{sec:LP} and the reduction of Section~\ref{sec:Algorithms} to obtain our results. For this discussion to be meaningful, the reader should be familiar with the notation of Sections~\ref{sec:LP} and~\ref{sec:Algorithms} and their appendices. Let $\{E_{\ell}\}_{\ell}$ denote the partition of $[0,1]^{nm}$ induced by rounding and symmetries. That is, let $\vec{v} \sim \vec{w}$ if there exists a symmetry $\sigma$ that $\mathcal{D}$ has, and there exist integers $c_{11},\ldots,c_{mn}$ such that $c_{ij}\delta > v_{\sigma(i,j)},w_{ij} \geq (c_{ij}-1)\delta$, for all $i,j$;  let then $\{E_{\ell}\}_{\ell}$ denote the partition of $[0,1]^{mn}$ induced by this equivalence relation. In the settings we consider, we have counted at most $\max\{n,m\}^{{(1/\delta+1)}^{\min\{n,m\}}}$ different $E_{\ell}$'s. Hence, we can take $\tilde{O}\left(1/\zeta^2 \cdot \max\{n,m\}^{{(1/\delta+1)}^{\min\{n,m\}}}\right)$ samples from the oracle to simultaneously estimate all probabilities $\{\Pr[\vec{v} \in E_{\ell}]\}_{\ell}$ to within additive accuracy $\zeta$, with high probability. W.l.o.g. we can assume that the estimated probabilities $\{\widehat{\Pr}[\vec{v} \in E_{\ell}]\}_{\ell}$ sum to exactly $1$ (as our estimator is just the histogram of the equivalence classes in which the samples from ${\cal D}$ have fallen). Given these estimates, let ${\cal D}_{\delta,\zeta}$ be the discrete distribution that samples an event $E_{\ell}$ with probability $\widehat{\Pr}[\vec{v} \in E_{\ell}]$, and then outputs an arbitrary vector whose coordinates are all integer multiples of $\delta$ within $E_{\ell}$, after having permuted that vector according to a random $\sigma \in {\cal S}$ where ${\cal S} \subseteq S_m \times S_n$ is the set of symmetries satisfied by ${\cal D}$. Given ${\cal D}_{\delta,\zeta}$ we modify our algorithm as follows. First, we apply the algorithms of Section~\ref{sec:LP} to the distribution ${\cal D}_{\delta,\zeta}$ (that is known explicitly) to compute the optimal mechanism $M_1$ for this distribution. Then we skip the rounding of Section~\ref{sec:epsilon-truthful}; and most importantly, in our reduction of Section~\ref{sec:true PTAS} we make sure to sample surrogates from ${\cal D}_{\delta,\zeta}$ and \emph{not} sample from $\mathcal{D}$ and then round down to multiples of $\delta$. This is because we need the expected outcomes of $M_1$ when played by surrogates sampled by $\mathcal{D}_{\delta,\zeta}$ to be exactly as they were computed by the LPs of Section~\ref{sec:LP}. On the other hand, we still sample replicas directly from $\mathcal{D}$. The error coming from using ${\cal D}_{\delta,\zeta}$ instead of the discretized (to multiples of $\delta$) version of ${\cal D}$ in the reduction of Section~\ref{sec:true PTAS} can be folded into the revenue approximation error. With the appropriate choice of $\zeta$ we can still obtain Theorem~\ref{thm:additive} and Corollary~\ref{cor:MHR}. We skip further details.


\section{A Na\"ive Linear Programming Formulation} \label{app:LP}

Let $\cal D$ be the joint distribution of all bidders' values for all items; this distribution is supported on some subset of $\mathbb{R}^{m \times n}$, where $m$ is the number of bidders and $n$ is the number of items. It has long been known that finding the optimal randomized BIC (or IC) mechanism is merely solving a linear program of size polynomial in $n,m,$ and $|{\rm supp}(\mathcal{D})|$.~\footnote{Such formulation is folklore and appears among other places in~\cite{BCKW,BGGM,DFK}.} The simple linear program to find the optimal BIC mechanism for a distribution with finite support $\mathcal{D}$ where bidder $i$ has demand $C_i$ and budget $B_i$ is shown in Figure \ref{fig:naive LP}. Set $B_i = +\infty$ for bidders with no budget constraints.

\begin{figure}[h]
\colorbox{MyGray}{
\begin{minipage}{\textwidth} {
\noindent\textbf{Variables:}
\begin{itemize}
\item $\phi_{ij}(\vec{v})$, for all bidders $i$, items $j$, and bidder profiles $\vec{v} \in {\rm supp}(\mathcal{D})$, denoting the probability that bidder $i$ receives item $j$ when the bidder profile is $\vec{v}$.
\item $\pi_{ij}(\vec{v}_i)$, for all bidders $i$, items $j$, and types $\vec{v}_i \in {\rm supp}(\mathcal{D}_i)$ for bidder $i$,  denoting the expected probability that bidder $i$ receives item $j$ when reporting type $\vec{v}_i$ (where the expectation is taken with respect to the types of the other bidders).
\item $p_i(\vec{v})$, for all bidders $i$, and bidder profiles $\vec{v} \in {\rm supp}(\mathcal{D})$, denoting the expected price that bidder $i$ pays when the bidder profile is $\vec{v}$.
\item $q_i(\vec{v}_i)$, for all bidders $i$, and types $\vec{v}_i \in {\rm supp}(\mathcal{D}_i)$  for bidder $i$, denoting the expected price that bidder $i$ pays when reporting type $\vec{v}_i$ (where the expectation is taken with respect to the types of the other bidders and the randomness in the mechanism).
\end{itemize}
\textbf{Constraints:}
\begin{itemize}
\item $\pi_{ij}(\vec{v}_i) = \sum_{\vec{w}|\vec{w}_i = \vec{v}_i} Pr[\vec{w} | \vec{w}_i = \vec{v}_i]\phi_{ij}(\vec{w})$, for all $i,j,\vec{v}_i$, guaranteeing that $\pi_{ij}(\vec{v}_i)$ is computed correctly.
\item $q_i(\vec{v}_i) = \sum_{\vec{w}|\vec{w}_i = \vec{v}_i} Pr[\vec{w}|\vec{w}_i = \vec{v}_i]p_i(\vec{w})$, for all $i,\vec{v}_i$, guaranteeing that $q_i(\vec{v}_i)$ is computed correctly.
\item $0 \leq \phi_{ij}(\vec{v}) \leq 1$, for all $i,j,\vec{v}$, guaranteeing that each $\phi_{ij}(\vec{v})$ is a probability.
\item $\sum_i \phi_{ij}(\vec{v}) \leq 1$, for all $j,\vec{v}$, guaranteeing that no item is awarded more than once in expectation.
\item $\sum_j \phi_{ij}(\vec{v}) \leq C_i$, for all $i,\vec{v}$, guaranteeing that no bidder is awarded more than $C_i$ items in expectation.
\item $p_i(\vec{v}) \leq B_i$, for all $i,\vec{v}$, guaranteeing that no bidder ever pays more than $B_i$ in expectation.
\item $\sum_{j} v_{ij}\pi_{ij}(\vec{v}_i) - q_i(\vec{v}_i) \geq 0$, for all $i,\vec{v}_i$, guaranteeing that the mechanism is ex-interim individually rational (IR).
\item $\sum_j v_{ij}\pi_{ij}(\vec{v}_i) - q_i(\vec{v}_i) \geq \sum_j v_{ij}\pi_{ij}(\vec{v}'_i) - q_i(\vec{v}'_i)$, for all $i,\vec{v}_i$, and $\vec{v}'_i$, guaranteeing that the mechanism is BIC.
\end{itemize}
\textbf{Maximizing:}
\begin{itemize}
\item $\sum_{i,\vec{v}} p_i(\vec{v})Pr[\vec{v}]$, the expected revenue.\\
\end{itemize}}
\end{minipage}} \caption{Na\"ive LP for bidders with demand and budget constraints.}\label{fig:naive LP}
\end{figure}

A simple application of the Birkhoff-Von Neumann theorem tells us that as long that marginals $\phi_{ij}(\vec{v})$ satisfy the demand and supply constraints in expectation, then we can find in polynomial time a distribution over allocations that satisfies the demand and supply constraints deterministically and induces these marginals. In addition, a nice trick allows us to switch between ex-post IR and ex-interim IR with no hurt in the value of the LP. These methods are described in Appendix \ref{app:feasible} and \ref{app:ex-post IR}, respectively.

\section{Feasible Randomized Allocations}\label{app:feasible}
Here, we show how to efficiently turn the $\phi$s of a mechanism into an actual randomized outcome. We start with unit-demand bidders (i.e. $C_i=1$ for all $i$) and explain what modifications are necessary for non-unit demand bidders. We note that this procedure was also used in \cite{DFK}, but we include it again here for completeness. Given a collection $\{\phi_{ij}\}_{i,j}$ we explicitly find a distribution over feasible deterministic outcomes (a deterministic outcome assigns each item to some bidder, or possibly the trash) that assigns bidder $i$ item $j$ with probability $\phi_{ij}$ using the Birkhoff-Von Neumann decomposition of a doubly stochastic matrix. To do this, we put the $\phi_{ij}$'s into a matrix, $\Phi$. We observe that $\Phi$ is almost doubly-stochastic, except that the sums of rows or columns can be less than $1$, and $\Phi$ isn't square. We can change $\Phi$ into $\Phi'$ that is doubly-stochastic in the following way: First, add dummy items or dummy bidders to make $\Phi$ square. Next, step through each entry of $\Phi$ one by one and increase $\Phi_{ij}$ as much as possible without making row $i$ or column $j$ sum to greater than $1$. Now we have a $\Phi'$ that is doubly stochastic.

Next, we run a constructive algorithm for the Birkhoff-Von Neumann theorem (\cite{JDM}) to decompose $\Phi'$ into the weighted sum of at most $(\max\{m,n\})^2$ permutation matrices in ${\rm poly}(\max\{m,n\})$ time. Now our sampling scheme is as follows. Pick a permutation matrix with probability equal to it's weight in the decomposition of $\Phi'$, and call this matrix $P$. If $P_{ij} = 1$, then give bidder $i$ item $j$ with probability $\Phi_{ij}/\Phi'_{ij}$.

For any $i,j$, let's explicitly compute the probability that bidder $i$ gets item $j$ in this sampling procedure. The probability that $P_{ij} = 1$ is exactly $\Phi'_{ij}$. And the probability that bidder $i$ gets item $j$ is exactly $\Phi_{ij}/\Phi'_{ij}$ times the probability that $P_{ij} = 1$, which is exactly $\Phi_{ij}$.

\paragraph{Handling Non-Unit Demand Bidders.} If bidder $i$ has demand $C_i$ (instead of $1$), we can replace her in the matrix $\Phi$ with $\min(n,C_i)$ copies (where $n$ is the number of items), each receiving at most one item in expectation. Then we can run the same decomposition and give each bidder all the items awarded to her copies. This solution still runs in polynomial time, and always awards each bidder at most $C_i$ items.

\section{Modifications Required for Ex-Post Individually Rational Mechanisms} \label{app:ex-post IR}
Here, we describe how to turn an ex-interim IR mechanism into an ex-post IR mechanism. If $M$ is ex-interim IR, then we just have $\sum_j v_{ij}\pi_{ij}(\vec{v}_i) \geq q_i(\vec{v}_i)$ for all $i,\vec{v}_i$. Our modification is this: Let $c_i(\vec{v}_i) := q_i(\vec{v}_i)/(\sum_j v_{ij}\pi_{ij}(\vec{v}_i))$, for some specific $i, \vec{v}_i$. Then whenever bidder $i$ receives bundle $J$ when his bid was $\vec{v}_i$, charge him $\sum_{j \in J} c_i(\vec{v}_i) \cdot v_{ij}$. This is clearly ex-post IR because $c_i(\vec{v}_i) \leq 1$. Also, let's compute the expected price bidder $i$ pays when bidding $\vec{v}_i$:

$$\sum_{j} c_i(\vec{v}_i)v_{ij}\pi_{ij}(\vec{v}_i) = c_i(\vec{v}_i)\sum_j v_{ij} \pi_{ij}(\vec{v}_i) = q_i(\vec{v}_i).$$

So we can do this simple transformation to turn an ex-interim IR mechanism into an ex-post IR mechanism without any loss in revenue. One should observe that this transformation may cause bidders to sometimes pay more than their budget (even though the budget constraint is still respected in expectation after our transformation). Unfortunately, this problem is unavoidable in the following sense: the optimal ex-post IR mechanism that respects budget constraints may make strictly less revenue than the optimal ex-interim IR mechanism that respects budget constraints. Here is a simple example that illustrates this on a single item and two bidders. Each bidder always values the item at $10$, but has a budget of $5$. Then the optimal ex-interim IR mechanism that respects budgets is to give the item to each player with probability $1/2$ and charge them $5$ no matter who gets the item. The optimal ex-post IR mechanism that respects budgets is to give the item to each player with probability $1/2$ and charge the winner $5$.

%
%

\section{Proof of Theorem~\ref{thm:symmetry}}\label{app:symmetries}

We provide a proof of our symmetry theorem (Theorem~\ref{thm:symmetry}). The outline of our approach is this:
\begin{enumerate}
\item We show that for an arbitrary $\sigma$, if $M$ is BIC and $\mathcal{D}$ has symmetry $\sigma$, $\sigma(M)$ is BIC and $R^M(\mathcal{D}) = R^{\sigma(M)}(\mathcal{D})$.
\item We extend the result to distributions of permutations. That is, let $\mathcal{G}$ be any distribution over permutations in $S_n \times S_m$ that only samples $\sigma$ such that $\mathcal{D}$ has symmetry $\sigma$. Then define the mechanism $\mathcal{G}(M)$ to first sample a $\sigma$ from $\mathcal{G}$, then use $\sigma(M)$. Then if $M$ is BIC, then $\mathcal{G}(M)$ is BIC and $R^{\mathcal{G}(M)}(\mathcal{D}) = R^M(\mathcal{D})$.
\item We show that if $\mathcal{G}$ uniformly samples a permutation from a {\em subgroup} $S \subseteq S_m \times S_n$, then $\mathcal{G}(M)$ respects every permutation in $S$, for all subgroups $S$ and mechanisms $M$.
\item We put everything together and observe that if $\mathcal{G}$ uniformly samples from the subgroup of symmetries that $\mathcal{D}$ has, then for any mechanism $M$, we can create $\mathcal{G}(M)$ that has the same expected revenue as $M$ and respects every symmetry that $\mathcal{D}$ has.
\end{enumerate}
Next we prove the above steps  one-by-one. We prove more general statements catering also for IC, $\epsilon$-IC and $\epsilon$-BIC mechanisms.
\begin{lemma}\label{lem:truthful} If $M$ is an arbitrary IC (BIC,$\epsilon$-IC, $\epsilon$-BIC) mechanism, then for any $\sigma \in S_m \times S_n$:
\begin{enumerate}
\item $\sigma(M)$ is an IC (BIC,$\epsilon$-IC,$\epsilon$-BIC) mechanism; and
\item $R^{\sigma(M)}(\mathcal{D}) = \sum_{\vec{v} \in \text{supp}(\mathcal{D})} R^M(\sigma^{-1}(\vec{v}))Pr[\vec{v} \leftarrow \mathcal{D}]$.
\end{enumerate}

Furthermore, if $\mathcal{D}$ has symmetry $\sigma$, then $R^M(\mathcal{D}) = R^{\sigma(M)}(\mathcal{D})$.
\end{lemma}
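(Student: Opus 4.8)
The plan is to unwind the definitions $[\sigma(M)](\vec v) = \sigma(M(\sigma^{-1}(\vec v)))$ and verify each claim directly, using the fact that $\sigma$ acts on a value profile $\vec v$ by permuting both bidder-coordinates and item-coordinates. First I would establish the key bookkeeping identity: for a profile $\vec v$ and a bidder $i$, the outcome that $\sigma(M)$ awards bidder $i$ on input $\vec v$ is exactly the outcome that $M$ awards bidder $\sigma_1^{-1}(i)$ on input $\sigma^{-1}(\vec v)$, with its allocation vector permuted by $\sigma_2$ and the same price. Concretely, $\phi^{\sigma(M)}_{ij}(\vec v) = \phi^M_{\sigma^{-1}(i,j)}(\sigma^{-1}(\vec v))$ and $p^{\sigma(M)}_i(\vec v) = p^M_{\sigma_1^{-1}(i)}(\sigma^{-1}(\vec v))$. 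Feasibility of $\sigma(M)$ is then immediate since permuting bidders and items preserves the supply and demand constraints (this is exactly why we need $\sigma \in S_m \times S_n$ and not an arbitrary permutation of $[m]\times[n]$).

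For part (1), I would check the BIC inequality for $\sigma(M)$ at an arbitrary bidder $i$ and types $\vec v_i, \vec w_i$. Using the identity above and the change of variables $\vec v_{-i} \mapsto \vec v'_{-i}$ where $\vec v' = \sigma^{-1}(\vec v)$ (so $\vec v'_{-\sigma_1^{-1}(i)}$ ranges over $\mathcal D_{-\sigma_1^{-1}(i)}$ — here one uses that the marginal of $\mathcal D$ over the other bidders, pushed through $\sigma^{-1}$, is the corresponding marginal; this holds because $\sigma_1$ just relabels bidders), the utility of bidder $i$ under $\sigma(M)$ for reporting $\vec w_i$ equals the utility of bidder $\sigma_1^{-1}(i)$ under $M$ for reporting the permuted type $\sigma_2^{-1}(\vec w_i)$, against the appropriately permuted type $\vec v_i' = \sigma_2^{-1}(\vec v_i)$. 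Since $\langle \sigma_2^{-1}(\vec v_i), \sigma_2^{-1}(\vec \phi)\rangle = \langle \vec v_i, \vec\phi\rangle$, the BIC (or IC) constraints of $M$ transfer verbatim to those of $\sigma(M)$; for the $\epsilon$-versions one checks that the error term $\epsilon v_{\max}\sum_j \pi_{ij}(\vec w_i)$ is also invariant under the relabeling, since $\sum_j$ ranges over all items. Part (2) is a one-line computation: $R^{\sigma(M)}(\mathcal D) = \sum_{\vec v} \big(\sum_i p^{\sigma(M)}_i(\vec v)\big)\Pr_{\mathcal D}[\vec v] = \sum_{\vec v}\big(\sum_i p^M_i(\sigma^{-1}(\vec v))\big)\Pr_{\mathcal D}[\vec v] = \sum_{\vec v} R^M(\sigma^{-1}(\vec v))\Pr_{\mathcal D}[\vec v]$, where $R^M(\vec v) := \sum_i p^M_i(\vec v)$ denotes the revenue on a fixed profile. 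Finally, when $\mathcal D$ has symmetry $\sigma$, we have $\Pr_{\mathcal D}[\vec v] = \Pr_{\mathcal D}[\sigma^{-1}(\vec v)]$, so reindexing the sum in part (2) by $\vec u = \sigma^{-1}(\vec v)$ gives $R^{\sigma(M)}(\mathcal D) = \sum_{\vec u} R^M(\vec u)\Pr_{\mathcal D}[\vec u] = R^M(\mathcal D)$.

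I do not expect a genuine obstacle here; the lemma is essentially definitional once the right change-of-variables identity is set up. The one place demanding care is making sure that the relabeling of the ``other bidders'' distribution $\mathcal D_{-i}$ is handled correctly when $\sigma_1$ is nontrivial on bidders — one must invoke that $\mathcal D$ is $\sigma$-invariant (or at least that the statement of part (1) makes no distributional claim, so only the pointwise feasibility/IC structure matters, and the distributional part is isolated to the ``furthermore'' clause). A secondary subtlety worth stating explicitly is the continuous case: when $\mathcal D$ has infinite support the sums above should be read as integrals against the law of $\mathcal D$, and the change of variables is just the pushforward identity $\sigma^{-1}_*\mathcal D = \mathcal D$; nothing else changes.
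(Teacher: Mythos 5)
Your proposal takes essentially the same route as the paper: establish the change-of-variables identity $U(\vec v_i,[\sigma(M)]_i(\vec x)) = U(\sigma^{-1}(\vec v_i), M_{\sigma^{-1}(i)}(\sigma^{-1}(\vec x)))$ (equivalently, your bookkeeping identity on $\phi$ and $p$), observe that it transfers the incentive constraints of $M$ to $\sigma(M)$, compute revenue directly for part (2), and reindex for the ``furthermore.'' So the overall structure matches.

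One wrinkle deserves flagging, because your write-up wavers on it. The claim that the marginal of $\mathcal D$ on the other bidders pushes forward correctly ``because $\sigma_1$ just relabels bidders'' is \emph{not} true in general. After the change of variables, the expectation in the BIC inequality for $\sigma(M)$ at bidder $i$ becomes an expectation over $\sigma^{-1}_*\mathcal D_{-i}$, which matches $\mathcal D_{-\sigma_1^{-1}(i)}$ only when $\mathcal D$ is $\sigma$-invariant; a relabeling by itself does not preserve the product $\mathcal D_2\times\cdots\times\mathcal D_m$ unless the $\mathcal D_j$ being permuted into each other are actually equal. Concretely, one can build a two-bidder, one-item example with $\mathcal D_1\neq\mathcal D_2$, a BIC-but-not-IC $M$, and $\sigma$ swapping the bidders, for which $\sigma(M)$ is \emph{not} BIC for $\mathcal D$. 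So part (1) as literally stated is false for BIC/$\epsilon$-BIC with arbitrary $\sigma$; it holds either for IC/$\epsilon$-IC (no distributional dependence, which is what the paper's displayed inequalities actually verify), or under the added hypothesis that $\mathcal D$ has symmetry $\sigma$, which is the only regime in which the lemma is subsequently used (Corollary~\ref{cor:important}, Lemma~\ref{lem:final lemma symm}, Theorem~\ref{thm:symmetry}). Your hedge --- ``one must invoke that $\mathcal D$ is $\sigma$-invariant'' --- is the correct resolution; the alternative you offer (``part (1) makes no distributional claim'') does not apply to the BIC cases. The paper's own proof is terse at exactly this point: it proves IC pointwise and asserts the BIC/$\epsilon$-BIC cases ``fall exactly through,'' which sweeps the same change-of-measure step under the rug. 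So your proposal matches the paper in both its structure and its (harmless, since $\sigma$ is always a $\mathcal D$-symmetry in applications) imprecision; but if you want a fully rigorous statement, you should restrict part (1) in the BIC cases to $\sigma$ with $\mathcal D\equiv\sigma(\mathcal D)$.
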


\begin{prevproof}{Lemma}{lem:truthful}
It is clear that part $2$ is true given part $1$. If all bidders play truthfully, then on bidder profile $\vec{v}$, $\sigma(M)$ makes revenue equal to exactly $R^M(\sigma^{-1}(\vec{v}))$. Therefore the sum exactly computes the expected revenue. The last part of the lemma is also clear given part $2$. If $\mathcal{D}$ has symmetry $\sigma$, then the sum exactly computes $R^M(\mathcal{D})$.

Now we prove part $1$. We do this by explicitly examining the value of bidder $i$ whose true type is $\vec{v}_i$ for reporting any other $\vec{w}_i$ when the rest of the bids are fixed. Let $\vec{v}$ denote the profile of bids when everything besides bidder $i$ is fixed and he reports $\vec{v}_i$, and let $\vec{w}$ denote the profile of bids when everything besides bidder $i$ is the same as $\vec{v}$, but bidder $i$ reports $\vec{w}_i$ instead. By the definition of $\sigma(M)$ we have the following equation. (Recall that $U(\vec{v}_i,M_i(\vec{w}))$ denotes the utility of a bidder with type $\vec{v}_i$ for the expected outcome $M_i(\vec{w})$. We also slightly abuse notation with $\sigma$; when we write $\sigma(i)$ we mean the restriction of the permutation $\sigma$ to $[m]$, etc.)

$$U(\vec{v}_i,[\sigma(M)]_i(\vec{x})) = U(\sigma^{-1}(\vec{v}_i),M_{\sigma^{-1}(i)}(\sigma^{-1}(\vec{x}))).$$
{This holds because $\sigma(M)$ on input $\vec{x}$ offers bidder $i$ the permuted by $\sigma$ lottery offered to bidder $\sigma^{-1}(i)$ by $M$ on bid vector $\sigma^{-1}(\vec{x})$ and charges him the price charged to bidder $\sigma^{-1}(i)$ by $M$ on input $\sigma^{-1}(\vec{x})$. }

Now, because $M$ is an IC mechanism, we know that:
$$U(\sigma^{-1}(\vec{v}_i),M_{\sigma^{-1}(i)}(\sigma^{-1}(\vec{v}))) \geq U(\sigma^{-1}(\vec{v}_i),M_{\sigma^{-1}(i)}(\sigma^{-1}(\vec{w}))).$$

And combining this inequality with the above equality, we get exactly that:
$$U(\vec{v}_i,[\sigma(M)]_i(\vec{v})) \geq U(\vec{v}_i,[\sigma(M)]_i(\vec{w})).$$

Since the above was shown for all $i$ and for all $\vec{v} = (\vec{v}_i~;~\vec{v}_{-i})$ and $\vec{w} = (\vec{w}_i~;~\vec{v}_{-i})$, we get that $\sigma(M)$ is an IC mechanism. If instead of being IC $M$ were BIC, $\epsilon$-IC, or $\epsilon$-BIC, the incentive guarantee we have for $M$ still falls exactly through for $\sigma(M)$.
\end{prevproof}

\begin{corollary}\label{cor:important} Let $\mathcal{G}$ denote any distribution over elements of $S_m \times S_n$. For an IC (BIC,$\epsilon$-IC,$\epsilon$-BIC) mechanism $M$, let $\mathcal{G}(M)$ denote the mechanism that samples an element $\sigma$ from $\mathcal{G}$, and then uses the mechanism $\sigma(M)$. Then for all $\cal G$:

\begin{enumerate}
\item $\mathcal{G}(M)$ is an IC (BIC,$\epsilon$-IC,$\epsilon$-BIC) mechanism; and
\item $R^{\mathcal{G}(M)}(\mathcal{D}) = \sum_{\sigma} R^{\sigma(M)}(\mathcal{D}) Pr[\sigma \leftarrow \mathcal{G}]$.
\end{enumerate}

Furthermore, if $\mathcal{G}$ samples only $\sigma$ such that $\mathcal{D}$ has symmetry $\sigma$, then $R^M(\mathcal{D}) = R^{\mathcal{G}(M)}(\mathcal{D})$.
\end{corollary}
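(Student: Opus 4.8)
The plan is to reduce everything to Lemma~\ref{lem:truthful} by exploiting the fact that $\mathcal{G}(M)$ is simply a mixture of the mechanisms $\{\sigma(M)\}_{\sigma}$, together with the observation that feasibility, the incentive constraints, the individual-rationality constraints, and the revenue objective are all \emph{linear} in (equivalently, preserved under convex combinations of) a mechanism's outcome description $(\vec{\phi}(\cdot),p(\cdot))$. Note first that $\mathcal{G}(M)$ is a legitimate randomized direct-revelation mechanism: it samples $\sigma\sim\mathcal{G}$ and then runs $\sigma(M)$, with $\sigma$ drawn independently of the bidders' reports.

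The first step is to record the averaging identities describing $\mathcal{G}(M)$. Since $\mathcal{G}(M)$ draws $\sigma\sim\mathcal{G}$ and then runs $\sigma(M)$, its expected outcome for bidder $i$ on report $\vec{w}$ satisfies $\phi^{\mathcal{G}(M)}_{ij}(\vec{w}) = \mathbb{E}_{\sigma\sim\mathcal{G}}[\phi^{\sigma(M)}_{ij}(\vec{w})]$, $p^{\mathcal{G}(M)}_i(\vec{w}) = \mathbb{E}_{\sigma\sim\mathcal{G}}[p^{\sigma(M)}_i(\vec{w})]$, and hence $\pi^{\mathcal{G}(M)}_{ij}(\vec{w}_i) = \mathbb{E}_{\sigma\sim\mathcal{G}}[\pi^{\sigma(M)}_{ij}(\vec{w}_i)]$. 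Because $U(\vec{v}_i,M_i(\vec{w})) = \vec{v}_i\cdot\vec{\phi}_i(\vec{w}) - p_i(\vec{w})$ is affine in the outcome, we also get $U(\vec{v}_i,[\mathcal{G}(M)]_i(\vec{w})) = \mathbb{E}_{\sigma\sim\mathcal{G}}[U(\vec{v}_i,[\sigma(M)]_i(\vec{w}))]$. Feasibility of $\mathcal{G}(M)$ is immediate: each $\sigma(M)$ with $\sigma\in S_m\times S_n$ is feasible (permutations in $S_m\times S_n$ carry feasible allocations to feasible allocations), and a randomization over feasible mechanisms is feasible.

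For part~1, fix $i,\vec{v}_i,\vec{w}_i$. By Lemma~\ref{lem:truthful}(1), every $\sigma$ in the support of $\mathcal{G}$ gives a mechanism $\sigma(M)$ that is BIC (resp. IC, $\epsilon$-IC, $\epsilon$-BIC), so each satisfies the relevant incentive inequality; e.g. in the $\epsilon$-BIC case, $\mathbb{E}_{\vec{v}_{-i}}[U(\vec{v}_i,[\sigma(M)]_i(\vec{v}_i;\vec{v}_{-i}))] \ge \mathbb{E}_{\vec{v}_{-i}}[U(\vec{v}_i,[\sigma(M)]_i(\vec{w}_i;\vec{v}_{-i}))] - \epsilon v_{\max}\sum_j \pi^{\sigma(M)}_{ij}(\vec{w}_i)$. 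Taking $\mathbb{E}_{\sigma\sim\mathcal{G}}$ of both sides and invoking the averaging identities above — in particular that the slack term averages to $\epsilon v_{\max}\sum_j \pi^{\mathcal{G}(M)}_{ij}(\vec{w}_i)$, and in the $\epsilon$-IC case that the slack term $\epsilon v_{\max}\sum_j\phi_{ij}(\vec{w}_i;\vec{v}_{-i})$ also averages correctly — yields the incentive inequality for $\mathcal{G}(M)$; the same averaging applied to the IR inequality $\mathbb{E}_{\vec{v}_{-i}}[U(\vec{v}_i,M_i(\vec{v}))]\ge 0$ shows $\mathcal{G}(M)$ inherits individual rationality. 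For part~2, $R^M(\mathcal{D}) = \sum_{i,\vec{v}} p_i(\vec{v})\Pr[\vec{v}]$ is linear in the prices, so $R^{\mathcal{G}(M)}(\mathcal{D}) = \mathbb{E}_{\sigma\sim\mathcal{G}}[R^{\sigma(M)}(\mathcal{D})] = \sum_{\sigma} R^{\sigma(M)}(\mathcal{D})\Pr[\sigma\leftarrow\mathcal{G}]$. Finally, if $\mathcal{G}$ is supported only on symmetries of $\mathcal{D}$, the last clause of Lemma~\ref{lem:truthful} gives $R^{\sigma(M)}(\mathcal{D}) = R^M(\mathcal{D})$ for each such $\sigma$, and since the weights $\Pr[\sigma\leftarrow\mathcal{G}]$ sum to $1$ we conclude $R^{\mathcal{G}(M)}(\mathcal{D}) = R^M(\mathcal{D})$.

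I do not expect a genuine obstacle here; the only points needing (minor) care are verifying that the $\epsilon$-slack terms in the $\epsilon$-BIC/$\epsilon$-IC definitions are themselves linear in the mechanism, so that they average correctly under $\mathcal{G}$, and recalling that feasibility of the mixture holds precisely because $\mathcal{G}$ is supported on $S_m\times S_n$ — the same restriction flagged after Theorem~\ref{thm:symmetry}. Everything else is linearity of expectation applied on top of Lemma~\ref{lem:truthful}.
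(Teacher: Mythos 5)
Your proposal is correct and follows the same route as the paper: both reduce to Lemma~\ref{lem:truthful} and observe that the incentive inequalities, the IR inequality, and the revenue objective are all linear in the mechanism's outcome description, so that averaging over $\sigma\sim\mathcal{G}$ preserves them. The paper's proof is terse and you supply the missing bookkeeping (the averaging identities for $\phi,p,\pi,U$, the linearity of the $\epsilon$-slack terms, feasibility of the mixture), but there is no substantive difference in approach.
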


\begin{prevproof}{Corollary}{cor:important}
It is clear that, because each $\sigma(M)$ is an IC mechanism, randomly sampling an IC mechanism will result into an IC mechanism. The second claim is also clear by linearity of expectation. The final claim is clear because $R^{\sigma(M)}(\mathcal{D}) = R^M(\mathcal{D})$, if ${\cal D}$ has symmetry $\sigma$, so taking a weighted average of them will still yield $R^M(\mathcal{D})$. We can replace IC by BIC, $\epsilon$-IC, or $\epsilon$-BIC in our argument.
\end{prevproof}

\begin{lemma} \label{lem:final lemma symm}Let $\mathcal{G}$ sample a permutation uniformly at random from a subgroup $S$ of $S_m \times S_n$. Then $\mathcal{G}(M)$ respects every permutation in $S$.
\end{lemma}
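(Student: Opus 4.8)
The goal is to show that if $\mathcal{G}$ is the uniform distribution over a subgroup $S \leq S_m \times S_n$, then the randomized mechanism $\mathcal{G}(M)$ respects every $\tau \in S$, i.e. $[\mathcal{G}(M)](\tau(\vec{v})) = \tau([\mathcal{G}(M)](\vec{v}))$ for all $\vec{v}$. The plan is to unfold the definition of $\mathcal{G}(M)$ as a distribution over deterministic-from-the-randomness-perspective mechanisms $\{\sigma(M)\}_{\sigma \in S}$, each chosen with probability $1/|S|$, and then track how applying $\tau$ to the input permutes this family of mechanisms. Concretely, I would first recall that $[\sigma(M)](\vec{v}) = \sigma(M(\sigma^{-1}(\vec{v})))$ by the definition of permutation of a mechanism. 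Then I would compute, for a fixed $\sigma$, the outcome $[\sigma(M)](\tau(\vec{v})) = \sigma(M(\sigma^{-1}(\tau(\vec{v})))) = \sigma(M((\sigma^{-1}\tau)(\vec{v})))$.

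The key step is a change-of-variable in the uniform average. Writing $\rho = \tau^{-1}\sigma$ (equivalently $\sigma = \tau\rho$), as $\sigma$ ranges uniformly over $S$, so does $\rho$, since $S$ is a subgroup (closure under the group operation and inverses is exactly what makes left-multiplication by $\tau^{-1}$ a bijection of $S$). Substituting, $[\sigma(M)](\tau(\vec{v})) = [(\tau\rho)(M)](\tau(\vec{v})) = \tau\rho\big(M((\tau\rho)^{-1}\tau(\vec{v}))\big) = \tau\big(\rho(M(\rho^{-1}(\vec{v})))\big) = \tau\big([\rho(M)](\vec{v})\big)$, where I have used the fact that $(\tau_1\tau_2)(M) = \tau_1(\tau_2(M))$ — a routine identity about composing permutation actions that I would state and verify in one line. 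Therefore the distribution over outcomes produced by $\mathcal{G}(M)$ on input $\tau(\vec{v})$ — namely $\{[\sigma(M)](\tau(\vec{v}))\}_{\sigma \in S}$ with uniform weights — equals, term by term after reindexing by $\rho$, the distribution $\{\tau([\rho(M)](\vec{v}))\}_{\rho \in S}$ with uniform weights, which is precisely $\tau$ applied to the output distribution of $\mathcal{G}(M)$ on input $\vec{v}$. This gives $[\mathcal{G}(M)](\tau(\vec{v})) = \tau([\mathcal{G}(M)](\vec{v}))$ as distributions over outcomes.

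The main obstacle — really the only substantive point — is making sure the subgroup hypothesis is used correctly and is genuinely needed: the argument hinges on $\tau^{-1}S = S$, which fails for an arbitrary symmetry set that is not closed under multiplication. I would emphasize this, echoing the remark after Theorem~\ref{thm:symmetry} that the transformation "does not work if the given set of symmetries is not a subgroup," and note that this is harmless in our application because the maximal set of symmetries of any value distribution is automatically a subgroup of $S_m \times S_n$. A secondary, purely bookkeeping point is to be careful that "respects $\tau$" is a statement about the randomized outcome $M(\vec{v})$ as a distribution (allocation probabilities and prices), so the equality of distributions over outcomes established above is exactly the required conclusion; no concentration or approximation is involved. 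With Lemma~\ref{lem:final lemma symm} and Corollary~\ref{cor:important} in hand, taking $\mathcal{G}$ uniform over the full subgroup of symmetries of $\mathcal{D}$ immediately yields Theorem~\ref{thm:symmetry}.
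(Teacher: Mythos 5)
Your proof is correct and is essentially the same as the paper's: both unfold $\mathcal{G}(M)$ as the uniform average $\frac{1}{|S|}\sum_{\sigma\in S}\sigma(M(\sigma^{-1}(\cdot)))$ and reindex the sum by $\rho=\tau^{-1}\sigma$, using that left-multiplication by $\tau^{-1}$ is a bijection of the subgroup $S$. Your added remarks on where the subgroup hypothesis enters and on equality of outcome distributions are accurate but not a different argument.
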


\begin{prevproof}{Lemma}{lem:final lemma symm} For any $\vec{v}$, the outcome $\mathcal{G}(M)(\vec{v})$ is:

$$\mathcal{G}(M)(\vec{v}) = \sum_{\sigma \in S} \frac{\sigma(M(\sigma^{-1}(\vec{v})))}{|S|}.$$

 Because $S$ is a subgroup, for any $\tau \in S$, we can write:

$$\mathcal{G}(M)(\tau(\vec{v})) = \sum_{\sigma \in S} \frac{\tau\sigma(M( (\tau\sigma)^{-1}(\tau(\vec{v}))))}{|S|} = \sum_{\sigma \in S} \frac{\tau\sigma(M( \sigma^{-1}(\vec{v})))}{|S|} = \tau(\mathcal{G}(M)(\vec{v})).$$
This is exactly the statement that $\mathcal{G}(M)$ respects symmetry $\tau$. So all $\tau \in S$ are respected by $\mathcal{G}(M)$.
\end{prevproof}

The final step in the proof of Theorem~\ref{thm:symmetries} is just observing that if $S$ denotes the set of symmetries of $\mathcal{D}$, then $S$ is in fact a subgroup because the definition of symmetry immediately yields that if $\mathcal{D}$ has symmetries $\sigma$ and $\tau$, it also has symmetries $\sigma^{-1}$ and $\sigma \tau$.

\section{Proofs Omitted from Section \ref{sec:LP}}\label{app:proof LP}

\begin{prevproof}{Lemma}{lem:simple reduction works for k-items}
Suppose that $S$ contains all bidder permutations and that every marginal of ${\cal D}$ has size at most $c$. We claim that the set of representative value vectors has size {$|E| \leq (m+1)^{c^n}$}. Indeed, each equivalence class is uniquely determined by the number of bidders of each type. There are $c^n$ types of bidders, and {$0$ up to at most $m$} bidders per type, so {$(m+1)^{c^n}$} equivalence classes of value vectors in total. Hence, the LP of Figure~\ref{fig:succinct k-items LP} has $O(n(m+1)^{c^n})$ variables and constraints. If both $c$ and $n$ are constants, the size of the LP is polynomial.
\end{prevproof}

\begin{prevproof}{Theorem}{thm:monotone}
Assume for a contradiction that $M$ is item-symmetric and BIC but not strongly monotone. Let then $\vec{v}_i^*$ be a bidder profile that breaks strong-monotonicity, i.e. $\pi_{ij}(\vec{v}_i^*) > \pi_{ij'}(\vec{v}_i^*)$ and $v_{ij}^* < v_{ij'}^*$ for a pair of items $j$, $j'$. We show that bidder $i$ of type $\vec{v}_i^*$ can strictly increase her expected utility by swapping her values for items $j$ and $j'$. Indeed, because $\mathcal{D}$ is item-symmetric, we know that the distribution of complete bidder profiles satisfies the following for any item permutation $\sigma$:
$$Pr[ \vec{w}|\vec{w}_i = \vec{v}^*_i] = Pr[\sigma(\vec{w})|\vec{w}_i = \sigma(\vec{v}^*_i)].$$
Now because $M$ is item-symmetric, we also know that $M(\sigma(\vec{w})) = \sigma(M(\vec{w}))$, for all $\vec{w}$. Putting these together, we see that we must have $\pi_{ij}(\vec{v}_i^*) = \pi_{i\sigma(j)}(\sigma(\vec{v}_i^*))$, for all item permutations $\sigma$. Letting $\sigma$ be the permutation that swaps items $j$ and $j'$ shows that when bidder $i$ of type $\vec{v}_i^*$ swaps her values for $j$ and $j'$, she simply switches $\pi_{ij}(\vec{v}_i^*)$ and $\pi_{ij'}(\vec{v}_i^*)$, strictly increasing her utility.

For the second part of the theorem, suppose  that $M$ is any item-symmetric $\epsilon$-BIC mechanism that is not strongly monotone. Again let $\vec{v}^*_i$ be a bidder profile that breaks strong-monotonicity with $\pi_{ij}(\vec{v}^*_i) > \pi_{ij'}(\vec{v}^*_i)$ and $v_{ij}^* < v_{ij'}^*$ for a pair of items $j$ and $j'$. Let then $M'$ be the mechanism that does the following.  If bidder $i$ reports $\vec{w}_i = \tau(\vec{v}^*_i)$, for some $\tau \in S_n$, then pick a random such $\tau$, swap the bidder's values for items $\tau(j)$ and $\tau(j')$, and run $M$. Otherwise just run $M$. It is clear that, $M'$ has the exact same expected revenue as $M$, when played truthfully, because $\mathcal{D}$ is item-symmetric. Observe also that we have added no new alternatives for dishonest bidders to consider reporting, maintained the item symmetry of the mechanism, and made some bidders strictly happier for the outcomes that the mechanism gives them. Therefore, $M'$ is still item-symmetric and $\epsilon$-BIC and makes the same revenue as $M$, when played truthfully. Additionally we have corrected one violation of strong-monotonicity. Iterating this process for a finite number of steps will yield an $\epsilon$-BIC item-symmetric strongly monotone mechanism with the same expected revenue as $M$.
\end{prevproof}

\begin{prevproof}{Lemma}{lem: succinct LP for k-bidders}
Recall the definition of the set $E$ of representative bidder profiles from Section~\ref{subsec:bidder symmetries}/Appendix~\ref{app:succinct LPs}. In addition to this set we define, for each bidder $i$, a set of representative types, $E_i$, for this bidder. $E_i$ contains only value vectors $\vec{v}_i$ satisfying $v_{i1} \geq v_{i2} \geq \ldots \geq v_{in}$. By Observation \ref{obs:monotone}, if {a mechanism is item-symmetric} and no type in $E_i$ wishes to misreport another type in $E_i$, for all $i$, then the mechanism is BIC. Our resulting LP is shown in Figure~\ref{fig:succinct k-bidders LP}. In total, we have {$O(n|E|\sum_i|E_i|)$} variables and {$O(n |E| \sum_i |E_i|^2)$} constraints. Given that the value distribution is symmetric under every item permutation, we have that {$|E| \leq (n+1)^{c^m}$}. Indeed, there are $c^m$ possible ways the $m$ bidders like an item, and the question is how many items are liked in each of the possible ways.  We also know that {$|E_i| \leq (n+1)^c$}, again because choosing the number of items valued by a bidder at each of the $c$ possible values uniquely determines an element of $E_i$. It follows that when $c$ and $m$ are constants the size of the LP is polynomial.
\end{prevproof}

\section{Succinct LP Formulations} \label{app:succinct LPs}

Figures~\ref{fig:succinct k-items LP} and~\ref{fig:succinct k-bidders LP} show the succinct LPs that can be used to compute the optimal mechanisms for the BIC $k$-items and the BIC $k$-bidders problems respectively. Details for satisfying the supply and demand constraints with probability $1$, and ex-post IR modifications are exactly the same as in Appendices~\ref{app:feasible} and~\ref{app:ex-post IR}. In both figures, $E$ denotes a set of representatives under the equivalence relation defined by bidder or item symmetries. In Figure \ref{fig:succinct k-bidders LP}, $E_i$ denotes the set of types for bidder $i$ such that $v_{i1} \geq \ldots \geq v_{in}$.
\begin{figure}[h!]
\colorbox{MyGray}{
\begin{minipage}{\textwidth}
{\noindent\textbf{Variables:}
\begin{itemize}
\item $\phi_{ij}(\vec{v})$, for all bidders $i$, items $j$, and bidder profiles $\vec{v} \in E$ ($mn|E|$).
\item $\pi_{ij}(\vec{v}_i)$, for all bidders $i$, items $j$, and possible types for bidder $i$, $\vec{v}_i$ ($mnc^n$).
\item $p_i(\vec{v})$, for all bidders $i$, and bidder profiles $\vec{v} \in E$ ($m|E|$).
\item $q_i(\vec{v}_i)$, for all bidders $i$, possible types for bidder $i$, $\vec{v}_i$ ($mc^n$).
\end{itemize}
\textbf{Constraints:}
\begin{itemize}
\item (Precomputed Weights): ${\rm aux}(i',j',i,j, \vec{w}, \vec{v}_i) = \sum_{\sigma \in S: \sigma(\vec{w})_i = \vec{v}_i \wedge \sigma^{-1}(i,j)=(i',j')} \Pr[\sigma(\vec{w})~|~\sigma(\vec{w})_i =\vec{v}_i]$, for all bidders $i, i'$, items $j, j'$, bidder profiles $\vec{w} \in E$, and types $\vec{v}_i$.~\footnote{{For the BIC $k$-items problem, these weights can be computed efficiently. If $j \neq j'$, then ${\rm aux}(i',j',i,j,\vec{w},\vec{v}_i) = 0$, as there is no $\sigma\in S$ such that $\sigma^{-1}(j)\neq j$. If $\vec{w}_{i'} \neq \vec{v}_i$, then ${\rm aux}(i',j',i,j,\vec{w},\vec{v}_i) = 0$. Otherwise, ${\rm aux}(i',j',i,j,\vec{w},\vec{v}_i) = \frac{(m-1)! \cdot Pr[\vec{w}]}{Pr[\vec{x}_{i'} \leftarrow \mathcal{D}_{i'},\vec{x}_{i'} = \vec{v}_i]}$.}}
\item $\pi_{ij}(\vec{v}_i)= \sum_{\vec{w} \in E} \sum_{i',j'}  \phi_{i',j'}(\vec{w}) {\rm aux}(i',j',i,j,\vec{w}, \vec{v}_i)$,~\footnote{Justification:
\begin{align*}
\pi_{ij}(\vec{v}_i) &= \sum_{\vec{w} \in E} \sum_{\sigma \in S: \sigma(\vec{w})_i = \vec{v}_i} Pr[\sigma(\vec{w})~|~ \sigma(\vec{w})_i = \vec{v}_i]\phi_{\sigma^{-1}(i,j)}(\vec{w})\\
&= \sum_{\vec{w} \in E} \sum_{i',j'}  \phi_{i',j'}(\vec{w}) \sum_{\sigma \in S: \sigma(\vec{w})_i = \vec{v}_i~\wedge~\sigma^{-1}(i,j)=(i',j')} Pr[\sigma(\vec{w})~|~ \sigma(\vec{w})_i = \vec{v}_i] \\
&= \sum_{\vec{w} \in E} \sum_{i',j'}  \phi_{i',j'}(\vec{w}) \cdot {\rm aux}(i',j',i,j,\vec{w}, \vec{v}_i).
\end{align*}
}, for all $i,j,\vec{v}_i$ ($mn c^n$).
\item $q_i(\vec{v}_i)= \sum_{\vec{w} \in E} \sum_{i',j'}  p_{i'}(\vec{w}) {\rm aux}(i',j',i,1,\vec{w}, \vec{v}_i)$, for all $i, \vec{v}_i \in E_i$ ($m c^n$).
\item $0 \leq \phi_{ij}(\vec{v}) \leq 1$, for all $i,j,\vec{v} \in E$ ($mn|E|$).
\item $\sum_i \phi_{ij}(\vec{v}) \leq 1$, for all $j,\vec{v} \in E$ ($n|E|$).
\item $\sum_j \phi_{ij}(\vec{v}) \leq C_i$, for all $i,\vec{v} \in E$ ($m|E|$).
\item $p_i(\vec{v}) \leq B_i$ for all $i,\vec{v} \in E$ ($m|E|$).
\item $\sum_{j} v_{ij}\pi_{ij}(\vec{v}_i) - q_i(\vec{v}_i) \geq 0$, for all $i,\vec{v}_i$ ($mc^n$).
\item $\sum_j v_{ij}\pi_{ij}(\vec{v}_i) - q_i(\vec{v}_i) \geq \sum_j v_{ij}\pi_{ij}(\vec{v}'_i) - q_i(\vec{v}'_i)$, for all $i,\vec{v}_i,\vec{v}'_i$ ($mc^{2n}$).
\end{itemize}
\textbf{Maximizing:}\\
$\sum_{i,\vec{v} \in E} p_i(\vec{v})Pr[\cup_{\sigma \in S} \sigma(\vec{v}))]$.}
\end{minipage}
}\caption{{Succinct BIC k-items LP. In parentheses at the end of each type of variable/constraint is an upper bound on the number of such variables/constraints.}}\label{fig:succinct k-items LP}
\end{figure}

\begin{figure}[h!]
\colorbox{MyGray}{
\begin{minipage}{\textwidth} {
\noindent\textbf{Variables:}
\begin{itemize}
\item $\phi_{ij}(\vec{v})$, for all bidders $i$, items $j$, and bidder profiles $\vec{v} \in E$ ($mn|E|$).
\item $\pi_{ij}(\vec{v}_i)$, for all bidders $i$, items $j$, and $\vec{v}_i \in E_i$ {($n\sum_i |E_i|$).}
\item $p_i(\vec{v})$, for all bidders $i$, and bidder profiles $\vec{v} \in E$ ($m|E|$).
\item $q_i(\vec{v}_i)$, for all bidders $i$, $\vec{v}_i \in E_i$ {($\sum_i |E_i|$).}
\end{itemize}
\textbf{Constraints:}
\begin{itemize}
\item (Precomputed Weights): ${\rm aux}(i',j',i,j, \vec{w}, \vec{v}_i) = \sum_{\sigma \in S: \sigma(\vec{w})_i = \vec{v}_i \wedge \sigma(i,j)^{-1}=(i',j')} \Pr[\sigma(\vec{w})~|~\sigma(\vec{w})_i =\vec{v}_i]$, for all bidders $i, i'$, items $j, j'$, bidder profiles $\vec{w} \in E$, and types $\vec{v}_i$.~\footnote{{For the BIC $k$-bidders problem, these weights can be computed efficiently. If $i \neq i'$, then ${\rm aux}(i',j',i,j,\vec{w},\vec{v}_i) = 0$. If $\vec{w}_i \not \sim_S \vec{v}_i$, then ${\rm aux}(i',j',i,j,\vec{w},\vec{v}_i) = 0$. If $\vec{w}_{ij'} \neq \vec{v}_{ij}$, then ${\rm aux}(i',j',i,j,\vec{w},\vec{v}_i) = 0$. Otherwise, let the $c$ possible values for the items be $u_1,\ldots,u_c$. Let $n_k$ denote the number of items $j$ with $\vec{v}_{ij} = u_k$. Then there are $\prod_k n_k!$ different permutations $\sigma$ such that $\sigma(\vec{w})_i = \vec{v}_i$. If $v_{ij} = u_a$, then the number of permutations such that $\sigma(\vec{w})_i = \vec{v}_i$ and $\sigma(i,j') = (i,j)$ is $(\prod_k n_k!)/n_a$. Then ${\rm aux}(i',j',i,j,\vec{w},\vec{v}_i) = \frac{\prod_k n_k! Pr[\vec{w}]}{n_aPr[\vec{x}_i \leftarrow \mathcal{D}_i, \vec{x}_i = \vec{v}_i]}$}.}
\item $\pi_{ij}(\vec{v}_i)= \sum_{\vec{w} \in E} \sum_{i',j'}  \phi_{i',j'}(\vec{w}) \cdot {\rm aux}(i',j',i,j,\vec{w}, \vec{v}_i)$, for all $i,j,\vec{v}_i$ ($mn c^n$)
\item $q_i(\vec{v}_i)= \sum_{\vec{w} \in E} \sum_{i',j'}  p_{i'}(\vec{w}) \cdot {\rm aux}(i',j',i,1,\vec{w}, \vec{v}_i)$, for all $i, \vec{v}_i \in E_i$ ($\sum_i |E_i|$).
\item $0 \leq \phi_{ij}(\vec{v}) \leq 1$, for all $i,j,\vec{v} \in E$ ($mn|E|$).
\item $\sum_i \phi_{ij}(\vec{v}) \leq 1$, for all $j,\vec{v} \in E$ ($n|E|$).
\item $\sum_j \phi_{ij}(\vec{v}) \leq C_i$, for all $i,\vec{v} \in E$ ($m|E|$).
\item $p_i(\vec{v}) \leq B_i$, for all $i,\vec{v} \in E$ ($m|E|$).
\item $\sum_{j} v_{ij}\pi_{ij}(\vec{v}_i) - q_i(\vec{v}_i) \geq 0$, for all $i,\vec{v}_i \in E_i$ {($\sum_i |E_i|$).}
\item $\sum_j v_{ij}\pi_{ij}(\vec{v}_i) - q_i(\vec{v}_i) \geq \sum_j v_{ij}\pi_{ij}(\vec{v}'_i) - q_i(\vec{v}'_i)$, for all $i,\vec{v}_i \in E_i,\vec{v}'_i \in E_i$ {($\sum_i |E_i|^2$).}
\item $\pi_{ij}(\vec{v}_i) \geq \pi_{i(j+1)}(\vec{v}_i)$, for all $i,j,\vec{v}_i \in E_i$ {($n \sum_i |E_i|$).}
\end{itemize}
\textbf{Maximizing:}\\
$\sum_{i,\vec{v} \in E} p_i(\vec{v})Pr[\cup_{\sigma \in S} \sigma(\vec{v}))]$.}
\end{minipage}} \caption{Succinct BIC $k$-bidders LP. In parentheses at the end of each type of variable/constraint is an upper bound on the number of such variables/constraints. }\label{fig:succinct k-bidders LP}
\end{figure}

\section{Omitted Proofs from Section~\ref{sec:epsilon-truthful}}
\label{app:technical} \label{app:bi-criterion}

In this section we relate the optimal revenue achievable under ``similar'' value distributions, and provide the proof of Lemma~\ref{lem:deltaIC}. {Throughout the section, we let $T$ denote the maximum number of items that can be awarded by a feasible mechanism. For $k$-items this is $k$, for $k$-bidders this is $\min\{n,\sum_i C_i\}$.}


\begin{lemma}\label{lem:add by delta} Suppose that $\mathcal{D}$ and $\mathcal{D'}$ can be coupled so that, with probability $1$, whenever $\vec{v} \leftarrow \mathcal{D}$ and $\vec{v}'\leftarrow \mathcal{D'}$ are jointly sampled under the coupling, it holds that $v'_{ij} = v_{ij} + \delta$, for all $i,j$. Then for all $\epsilon$, $R_\epsilon^{OPT}(\mathcal{D}) \geq R_\epsilon^{OPT}(\mathcal{D'}) - {\delta T}$.
\end{lemma}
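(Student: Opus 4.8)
The plan is to take any $\epsilon$-BIC mechanism $M'$ for $\mathcal{D}'$ with revenue close to $R_\epsilon^{OPT}(\mathcal{D}')$ and transport it to an $\epsilon$-BIC mechanism $M$ for $\mathcal{D}$ that loses at most $\delta T$ in revenue. The natural construction is the ``shift'' mechanism: on input $\vec{v}$ sampled from $\mathcal{D}$, let $M$ add $\delta$ to every coordinate to obtain $\vec{v}'$ (which, under the coupling, is exactly how a sample from $\mathcal{D}'$ looks), run $M'$ on $\vec{v}'$, implement the same allocation $\vec{\phi}$, but \emph{reduce} the price charged to bidder $i$ by $\delta \sum_j \phi_{ij}(\vec{v}')$ (equivalently, give each bidder a rebate of $\delta$ per unit of expected allocation she receives). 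First I would check feasibility: the allocation rule is unchanged, so all supply/demand constraints carry over verbatim.

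Next I would verify that $M$ is $\epsilon$-BIC for $\mathcal{D}$. The key identity is that for a bidder with true type $\vec{v}_i$ in $M$, reporting $\vec{w}_i$ gives utility
$$
U(\vec{v}_i, M_i(\vec{w}_i;\vec{v}_{-i})) = \vec{v}_i\cdot\vec{\phi}_i - \big(p_i - \delta\textstyle\sum_j\phi_{ij}\big) = (\vec{v}_i+\delta\vec{1})\cdot\vec{\phi}_i - p_i = U(\vec{v}_i+\delta\vec{1}, M'_i(\vec{w}_i+\delta\vec{1};\vec{v}_{-i}+\delta\vec{1})),
$$
where $\vec{\phi}_i,p_i$ are the allocation and price of $M'$ on $\vec{w}_i+\delta\vec 1$ against the shifted others. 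Taking expectation over $\vec{v}_{-i}\sim\mathcal{D}_{-i}$, and using that shifting $\vec{v}_{-i}$ by $\delta\vec 1$ yields a draw from $\mathcal{D}'_{-i}$ under the coupling, the incentive constraints of $M$ at $(\vec{v}_i,\vec{w}_i)$ become exactly the incentive constraints of $M'$ at $(\vec{v}_i+\delta\vec 1,\vec{w}_i+\delta\vec 1)$. Since $M'$ is $\epsilon$-BIC and the error term $\epsilon v_{\max}\sum_j\pi_{ij}(\vec{w}_i)$ only depends on the (unchanged) allocation, $M$ inherits $\epsilon$-BIC. Individual rationality is preserved similarly, since the rebate only increases bidder utility.

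Finally I would bound the revenue loss. The revenue of $M$ on $\mathcal{D}$ equals the revenue of $M'$ on $\mathcal{D}'$ minus the total rebate $\mathbb{E}[\sum_i \delta\sum_j\phi_{ij}(\vec{v}')]$. Since any feasible mechanism awards at most $T$ items total, $\sum_{i,j}\phi_{ij}(\vec{v}') \le T$ pointwise, so the total expected rebate is at most $\delta T$. Hence $R^{M}(\mathcal{D}) \ge R^{M'}(\mathcal{D}') - \delta T$; taking $M'$ to approach the optimum and noting $R^M(\mathcal{D}) \le R_\epsilon^{OPT}(\mathcal{D})$ gives $R_\epsilon^{OPT}(\mathcal{D}) \ge R_\epsilon^{OPT}(\mathcal{D}') - \delta T$. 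The only mildly delicate point is making sure the $\epsilon$-BIC error term transfers cleanly under the paper's specific (allocation-weighted) definition of $\epsilon$-BIC, but since the rebate leaves the allocation untouched this is immediate; there is no real obstacle here, the lemma is essentially a bookkeeping argument built around the price-rebate trick.
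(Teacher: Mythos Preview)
Your proposal is correct and follows essentially the same approach as the paper: define $M(\vec{v})=M'(\vec{v}+\delta\ind)$ with a price rebate of $\delta\sum_j\phi'_{ij}(\vec{v}+\delta\ind)$, observe the utility identity $U(\vec{v}_i,M_i(\vec{w}_i;\vec{v}_{-i}))=U(\vec{v}_i+\delta\ind,M'_i(\vec{w}_i+\delta\ind;\vec{v}_{-i}+\delta\ind))$, and bound the total rebate by $\delta T$. Your write-up is in fact slightly more thorough than the paper's (you explicitly address feasibility, IR, and the allocation-weighted $\epsilon$-BIC error term), but the construction and argument are identical.
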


\begin{prevproof}{Lemma}{lem:add by delta} Throughout this proof, $\ind$ represents the all-ones vector. Let $M'$ be any $\epsilon$-BIC mechanism for $\mathcal{D'}$. We define a new mechanism $M$ for ${\cal D}$ such that $M(\vec{v}) = M'(\vec{v}+\delta \cdot \ind)$, except that we lower the price paid by bidder $i$ by $\delta \cdot \sum_j \phi'_{ij}(\vec{v}+\delta \cdot \ind)$, i.e. $\delta$ times the expected number of items given to bidder $i$ by $M'(\vec{v}+\delta \cdot \ind)$). Then, for all $i$, $\vec{v}_i, \vec{w}_i, \vec{v}_{-i}$ and corresponding $\vec{v}'_i := \vec{v}_i +\delta \cdot \ind, \vec{w}'_i:=\vec{w}_i +\delta \cdot \ind, \vec{v}'_{-i}:=\vec{v}_{-i}+\delta \cdot \ind$, we have:

$$U(\vec{v_i},M_i(\vec{w}_i~;~\vec{v}_{-i})) = U(\vec{v_i}',M'_i(\vec{w}'_i~;~\vec{v}'_{-i})).$$
Hence because $M'$ is $\epsilon$-BIC under ${\cal D}'$, $M$ is $\epsilon$-BIC under ${\cal D}$. It is also clear that the difference in expected revenue of the two mechanisms under the two distributions is exactly $\delta$ times the expected number of items given out by $M'$, which is at most {$T$}.
\end{prevproof}

\begin{lemma}\label{lem:coupling} Suppose that $\mathcal{D} = \times_i {\cal D}_i$ and $\mathcal{D'}= \times_i {\cal D}'_i$ are product distributions over bidders and suppose that, for all $i$, there is a coupling of ${\cal D}_i$ and ${\cal D}'_i$ so that, with probability $1$, if $\vec{v}_i \leftarrow \mathcal{D}_i$ and $\vec{v}'_i\leftarrow \mathcal{D}_i'$ are jointly sampled under this coupling, it holds that $v_{ij} \leq v'_{ij} \leq v_{ij} + \delta$, for all $j$. Then $R_{{\delta} + \epsilon}^{OPT}(\mathcal{D'}) \geq R_{\epsilon}^{OPT}(\mathcal{D})$, for all $\epsilon$.
\end{lemma}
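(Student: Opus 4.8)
The plan is to pick an optimal $\epsilon$-BIC (ex-interim IR) mechanism $M$ for $\mathcal{D}$ and ``pull it back'' through the coupling to obtain a $(\delta+\epsilon)$-BIC mechanism $M'$ for $\mathcal{D}'$ of the same revenue; this gives $R_{\delta+\epsilon}^{OPT}(\mathcal{D}') \ge R^{M'}(\mathcal{D}') = R^{M}(\mathcal{D}) = R_\epsilon^{OPT}(\mathcal{D})$. Concretely, for each $i$ fix a conditional distribution $\gamma_i(\cdot\mid\vec{v}'_i)$ of $\vec{v}_i$ given $\vec{v}'_i$ under the given coupling of $\mathcal{D}_i,\mathcal{D}'_i$; a draw $\vec{v}_i\sim\gamma_i(\cdot\mid\vec{v}'_i)$ then satisfies $v_{ij}\le v'_{ij}\le v_{ij}+\delta$ for all $j$ with probability $1$, and $\vec{v}'_i\sim\mathcal{D}'_i$ followed by $\vec{v}_i\sim\gamma_i(\cdot\mid\vec{v}'_i)$ reproduces $\mathcal{D}_i$. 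Define $M'$ by: on reported profile $\vec{v}'$, independently draw $\vec{v}_i\sim\gamma_i(\cdot\mid\vec{v}'_i)$ for each $i$, run $M(\vec{v})$, and give bidder $i$ the outcome $M_i(\vec{v})$. Since $M'$ is a randomization over runs of the feasible mechanism $M$, it is feasible.

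\textbf{Revenue.} When everyone reports truthfully, the internally generated $\vec{v}$ is distributed exactly as $\mathcal{D}$: this uses the marginal property of each $\gamma_i$ together with the product structure $\mathcal{D}'=\times_i\mathcal{D}'_i$, $\mathcal{D}=\times_i\mathcal{D}_i$, so that independent $\vec{v}'_i\sim\mathcal{D}'_i$ pushed through independent $\gamma_i$'s give independent $\vec{v}_i\sim\mathcal{D}_i$. Hence $R^{M'}(\mathcal{D}')=R^{M}(\mathcal{D})$, and it remains to check $M'$ is $(\delta+\epsilon)$-BIC and ex-interim IR.

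\textbf{Incentives.} Write $\pi_{ij},q_i$ for the interim allocation/price of $M$ against $\mathcal{D}_{-i}$. Using again the product structure, a short computation shows that in $M'$ a bidder of true type $\vec{v}'_i$ who reports $\vec{w}'_i$ gets interim utility $\mathbb{E}_{\vec{w}_i\sim\gamma_i(\cdot\mid\vec{w}'_i)}[\vec{v}'_i\cdot\vec{\pi}_i(\vec{w}_i)-q_i(\vec{w}_i)]$ and interim item count $\sum_j\pi'_{ij}(\vec{w}'_i)=\mathbb{E}_{\vec{w}_i\sim\gamma_i(\cdot\mid\vec{w}'_i)}[\sum_j\pi_{ij}(\vec{w}_i)]$ (here $\pi'$ denotes $M'$'s quantities). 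The key estimate is that for \emph{any} value vector $\vec{w}_i$ and any pair $(\vec{v}_i,\vec{v}'_i)$ with $v_{ij}\le v'_{ij}\le v_{ij}+\delta$,
\begin{align*}
\vec{v}'_i\cdot\vec{\pi}_i(\vec{w}_i)-q_i(\vec{w}_i)
&\le \big(\vec{v}_i\cdot\vec{\pi}_i(\vec{w}_i)-q_i(\vec{w}_i)\big)+\delta\textstyle\sum_j\pi_{ij}(\vec{w}_i)\\
&\le \big(\vec{v}_i\cdot\vec{\pi}_i(\vec{v}_i)-q_i(\vec{v}_i)\big)+(\epsilon+\delta)\textstyle\sum_j\pi_{ij}(\vec{w}_i)\\
&\le \vec{v}'_i\cdot\vec{\pi}_i(\vec{v}_i)-q_i(\vec{v}_i)+(\epsilon+\delta)\textstyle\sum_j\pi_{ij}(\vec{w}_i),
\end{align*}
where the first and last steps use $\vec{v}_i\le\vec{v}'_i\le\vec{v}_i+\delta\ind$ and $\vec{\pi}_i\ge 0$, and the middle step is $\epsilon$-BIC of $M$ for a bidder of type $\vec{v}_i$ (recall $v_{\max}=1$). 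Taking $\mathbb{E}_{\vec{v}_i\sim\gamma_i(\cdot\mid\vec{v}'_i)}$ of the right-hand side and then $\mathbb{E}_{\vec{w}_i\sim\gamma_i(\cdot\mid\vec{w}'_i)}$ of both sides yields precisely the $(\epsilon+\delta)$-BIC inequality of Definition~\ref{def:BIC} for $M'$. Ex-interim IR of $M'$ is immediate: $\vec{v}'_i\ge\vec{v}_i$ gives $\vec{v}'_i\cdot\vec{\pi}_i(\vec{v}_i)-q_i(\vec{v}_i)\ge \vec{v}_i\cdot\vec{\pi}_i(\vec{v}_i)-q_i(\vec{v}_i)\ge 0$ by IR of $M$, then average over $\gamma_i(\cdot\mid\vec{v}'_i)$.

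\textbf{Main obstacle.} The subtlety to get right is that a deviating report $\vec{w}'_i$ is mapped inside $M'$ to a random $\vec{w}_i$ with no coupling relation to the true $\vec{v}'_i$, which is why the key estimate must hold for arbitrary $\vec{w}_i$: one must first pass from $\vec{v}'_i\cdot\vec{\pi}_i(\vec{w}_i)$ to $\vec{v}_i\cdot\vec{\pi}_i(\vec{w}_i)$ (losing $\delta\sum_j\pi_{ij}(\vec{w}_i)$) before invoking $M$'s incentive guarantee, and the per-item normalization $\sum_j\pi_{ij}(\vec{w}_i)$ appearing in the incentive error of Definition~\ref{def:BIC} is exactly what makes the $\delta$- and $\epsilon$-losses combine into a clean $(\epsilon+\delta)$-error. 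The product structure is also doing real work: it is what lets the expectation over the other bidders' reports, after being pushed through the $\gamma_{j}$'s, collapse to an expectation over $\mathcal{D}_{-i}$, so that $M$'s interim quantities $\pi_{ij},q_i$ are what appear. (Note the contrast with Lemma~\ref{lem:add by delta}, where the exact shift by $\delta\ind$ allowed a deterministic re-mapping and price discount; here the one-sided, coordinate-wise-bounded coupling forces the randomized pull-back above.)
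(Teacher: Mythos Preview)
Your proof is correct and follows essentially the same approach as the paper: define $M'$ by pulling each reported type $\vec{v}'_i$ back through the conditional coupling $\gamma_i(\cdot\mid\vec{v}'_i)$ (the paper calls this $f_i^R$) and running $M$; use the product structure to reduce $M'$'s interim quantities to averages of $M$'s $\pi_{ij},q_i$ over $\mathcal{D}_{-i}$; and then combine the $\delta$-shift (from $\vec{v}_i\le\vec{v}'_i\le\vec{v}_i+\delta\ind$) with the $\epsilon$-BIC guarantee of $M$ to get $(\epsilon+\delta)$-BIC. Your pointwise three-line chain followed by two expectations is a slightly cleaner packaging of the same computation the paper carries out directly in expectation, and your explicit check of ex-interim IR is a welcome addition the paper leaves implicit.
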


\begin{prevproof}{Lemma}{lem:coupling}
For all $i$, the coupling whose existence is certified in the statement of the lemma, implies the existence of a (possibly randomized) mapping $f_i^R$ such that the distribution that samples $\vec{v}_i'$ from ${\cal D}_i'$ and outputs the pair $(\vec{v}_i,\vec{v}_i')$, where $\vec{v}_i$ is a random vector sampled from $f_i^R(\vec{v}_i')$, is a valid coupling of ${\cal D}_i$ and ${\cal D}'_i$ satisfying $v_{ij} \le v_{ij}' \le v_{ij}+\delta$, for all $j$, with probability $1$. Let then $f^R$ be the random mapping which on input $\vec{v}'$ samples, for all $i$, a random $\vec{v}_i$ from $f^R_i(\vec{v}_i')$ and outputs $(\vec{v}_1,\ldots,\vec{v}_m)$.

Now consider any $\epsilon$-BIC mechanism $M$ for $\mathcal{D}$, and define the mechanism $M'$ for ${\cal D}'$, which on input $\vec{v}'$ samples a random $\vec{v}$ from $f_R(\vec{v}')$ and outputs $M(\vec{v})$. It is obvious that $R^{M'}(\mathcal{D'}) = R^M(\mathcal{D})$. To conclude the proof of the lemma it suffices to show that $M'$ is $(\epsilon+{\delta})$-BIC for bidders sampled from $\mathcal{D'}$.  Indeed, we have from the fact that $M$ is $\epsilon$-BIC that for all $i$, $\vec{v}_i$ and $\vec{w}_i$:

$$\mathbb{E}_{\vec{v}_{-i} \sim {\cal D}_{-i}}[U(\vec{v_i},M_i(\vec{v}))] \geq \mathbb{E}_{\vec{v}_{-i} \sim {\cal D}_{-i}} [U(\vec{v_i},M_i(\vec{w}_i~;~\vec{v}_{-i}))] - \epsilon {\cdot \sum_j \pi_{ij}(\vec{w}_i)}.$$

Now fix $i$, $\vec{v}_i'$ and $\vec{w}_i'$. We have:
\begin{align}
\mathbb{E}_{\vec{v}'_{-i} \sim {\cal D}'_{-i}} [U(\vec{v_i}',M_i'(\vec{w}_i'~;~\vec{v}'_{-i}))] &= \mathbb{E}_{\vec{v}'_{-i} \sim {\cal D}'_{-i}} [U(\vec{v_i}', \mathbb{E}_{\vec{w}_i \sim f^R_i(\vec{w}_i'), \vec{v}_{-i} \sim f^R_{-i}(\vec{v}'_{-i})} M_i(\vec{w}_i~;~\vec{v}_{-i}))]\notag\\
&= \mathbb{E}_{\vec{v}'_{-i} \sim {\cal D}'_{-i}, \vec{w}_i \sim f^R_i(\vec{w}_i'), \vec{v}_{-i} \sim f^R_{-i}(\vec{v}'_{-i})} [U(\vec{v_i}', M_i(\vec{w}_i~;~\vec{v}_{-i}))] \notag\\
&= \mathbb{E}_{\vec{v}_{-i} \sim {\cal D}_{-i}, \vec{w}_i \sim f^R_i(\vec{w}_i')} [U(\vec{v_i}', M_i(\vec{w}_i~;~\vec{v}_{-i}))] \label{eq:tiresome 1}
\end{align}
Using Eq.~\eqref{eq:tiresome 1} we have:
\begin{align*}\mathbb{E}_{\vec{v}'_{-i} \sim {\cal D}'_{-i}} [U(\vec{v_i}',M_i'(\vec{v}_i'~;~\vec{v}'_{-i}))] &=\mathbb{E}_{\vec{v}_{-i} \sim {\cal D}_{-i}, \vec{v}_i \sim f^R_i(\vec{v}_i')} [U(\vec{v_i}', M_i(\vec{v}_i~;~\vec{v}_{-i}))]\\ &\ge \mathbb{E}_{\vec{v}_{-i} \sim {\cal D}_{-i}, \vec{v}_i \sim f^R_i(\vec{v}_i')} [U(\vec{v_i}, M_i(\vec{v}_i~;~\vec{v}_{-i}))]\\
&= \mathbb{E}_{\vec{v}_i \sim f^R_i(\vec{v}_i')} \mathbb{E}_{\vec{v}_{-i} \sim {\cal D}_{-i}} [U(\vec{v_i}, M_i(\vec{v}_i~;~\vec{v}_{-i}))]\\
&\ge \mathbb{E}_{\vec{w}_i \sim f^R_i(\vec{w}_i'), \vec{v}_i \sim f^R_i(\vec{v}_i')} \mathbb{E}_{\vec{v}_{-i} \sim {\cal D}_{-i}} [U(\vec{v_i}, M_i(\vec{w}_i~;~\vec{v}_{-i})) - \epsilon {\cdot \sum_j \pi_{ij}(\vec{w}_i)}]\\
&{= \mathbb{E}_{\vec{w}_i \sim f^R_i(\vec{w}_i'), \vec{v}_i \sim f^R_i(\vec{v}_i')} \mathbb{E}_{\vec{v}_{-i} \sim {\cal D}_{-i}} [U(\vec{v_i}, M_i(\vec{w}_i~;~\vec{v}_{-i}))] - \epsilon \cdot \sum_j \pi'_{ij}(\vec{w}'_i)},
\end{align*}
where for the last inequality we used that $M$ is $\epsilon$-BIC. Similarly,
\begin{align*}
\mathbb{E}_{\vec{v}'_{-i} \sim {\cal D}'_{-i}} [U(\vec{v_i}',M_i'(\vec{w}_i'~;~\vec{v}'_{-i}))] &=\mathbb{E}_{\vec{v}_{-i} \sim {\cal D}_{-i}, \vec{w}_i \sim f^R_i(\vec{w}_i')} [U(\vec{v_i}', M_i(\vec{w}_i~;~\vec{v}_{-i}))]\\ 
&\le \mathbb{E}_{\vec{v}_{-i} \sim {\cal D}_{-i}, \vec{w}_i \sim f^R_i(\vec{w}_i'), \vec{v}_i \sim f^R_i(\vec{v}_i')} [U(\vec{v_i}+\delta \ind, M_i(\vec{w}_i~;~\vec{v}_{-i}))] \\ 
&= \mathbb{E}_{\vec{v}_{-i} \sim {\cal D}_{-i}, \vec{w}_i \sim f^R_i(\vec{w}_i'), \vec{v}_i \sim f^R_i(\vec{v}_i')} [U(\vec{v_i}, M_i(\vec{w}_i~;~\vec{v}_{-i})){  + \delta \cdot \sum_j \pi_{ij}(\vec{w}_i)]}\\
&= \mathbb{E}_{\vec{w}_i \sim f^R_i(\vec{w}_i'), \vec{v}_i \sim f^R_i(\vec{v}_i')}  \mathbb{E}_{\vec{v}_{-i} \sim {\cal D}_{-i}} [U(\vec{v_i}, M_i(\vec{w}_i~;~\vec{v}_{-i}))]  + {\delta \cdot \sum_j \pi'_{ij}(\vec{w}'_i)}.
\end{align*}
Combining the above it follows that $M'$ is $(\epsilon+\delta)$-BIC.
\end{prevproof}


\begin{prevproof}{Lemma}{lem:deltaIC}
Let $\mathcal{D}$ denote the original distribution. Let $\mathcal{D}''$ denote the distribution that first samples from $\mathcal{D}$, then rounds every value up to the nearest multiple of $\delta$ (if the sampled value from ${\cal D}$ is exactly at an integer multiple of $\delta$ it is rounded up to the next integer multiple of $\delta$). Let $\mathcal{D}'$ denote the distribution that first samples from $\mathcal{D}$, then rounds every value down to the nearest multiple of $\delta$ (same definition as in the statement of Lemma \ref{lem:deltaIC}). Then it is clear that $\mathcal{D}'$ and $\mathcal{D''}$ satisfy the hypotheses of Lemma \ref{lem:add by delta}, so we have:

$$R^{OPT}_{{\delta}}(\mathcal{D}')\geq R^{OPT}_{{\delta}}(\mathcal{D''}){- {\delta T}}.$$

It is also clear that $\mathcal{D}$ and $\mathcal{D}''$ satisfy the hypotheses of Lemma \ref{lem:coupling}, so we have:

$$R^{OPT}_{{\delta}}(\mathcal{D}'') \geq {R^{OPT}(\mathcal{D})}.$$

Putting both together, we get that:

$$R^{OPT}_{{\delta}}(\mathcal{D}') \geq R^{OPT}(\mathcal{D}) - {\delta T}.$$

It follows immediately from the fact that $M'$ is {$\delta$}-BIC for consumers in $\mathcal{D}'$ that $M$ is {$2\delta$}-BIC for consumers in $\mathcal{D}$  {(via the  argument given in the proof of Lemma~\ref{lem:coupling}) and it is obvious that $R^{M'}(\mathcal{D'}) = R^M(\mathcal{D})$.} This completes the proof of the lemma.
\end{prevproof}

\section{Proof of Theorem \ref{thm:epsilon-BIC to BIC}}\label{app:true PTAS}

{Here we prove Theorem \ref{thm:epsilon-BIC to BIC}. We start with a proof outline, and justify each step separately. Unless otherwise stated, every claim applies to both reductions (for the $k$-bidders and the $k$-items problems). {Before starting, we observe that the assumption that $\mathcal{D'}$ is discrete is a \emph{simplifying} assumption and not a \emph{necessary} assumption. {In our proof of Theorem \ref{thm:epsilon-BIC to BIC} below, we point out the key modification that makes it work for continuous ${\cal D}'$ at an additional loss of $O(\frac{\delta}{\eta}T)$ in revenue.

Throughout the proofs, we will use $T_i$ to denote the maximum number of items that are possibly awarded to bidder $i$ {by a feasible mechanism}, and $T$ to denote the maximum number of items that are possibly awarded {by a feasible mechanism}. For $k$-bidders, $T_i = C_i$, {$T= \min\{n,\sum_i C_i\}$}. For $k$-items, {$T_i = \min\{k, C\}$, $T = k$}.  Here is a brief outline of our proof. Let $M_2$ denote the mechanism output by our reduction; that the output of the reduction is a valid mechanism will be justified in what follows.}


\begin{enumerate}
\item If bidder $i$ plays $M_2$ truthfully, then the distribution of surrogates matched to bidder $i$ is $\mathcal{D}'_i$.
\item For $k$-bidders, because $M$, $\mathcal{D}$ and {${\cal D}'$} are item-symmetric, no bidder gains by lying about her ordering.
\item Because each bidder is participating in a VCG auction, and the value of each edge is calculated exactly given that all other bidders tell the truth, $M_2$ is BIC.
\item The revenue we make from bidder $i$ is at least the price paid by their surrogate if bidder $i$ is matched in VCG, and $0$ otherwise.
\item There exists a high cardinality matching with positive edge weights. If VCG used this matching, we would make almost as much revenue as $M$. We show that because of the rebates (Phase One, Step 2), the VCG matching makes almost as much revenue as this matching.
\item Therefore, we make a good approximation to $R^M(\mathcal{D'})$, which is in turn a good approximation to $R^{M_1}(\mathcal{D}')$.
\end{enumerate}

\begin{lemma}[\cite{HKM}]\label{lem:HKM} If all bidders play $M_2$ truthfully, then the distribution of the surrogate chosen for bidder $i$ is exactly $\mathcal{D}'_i$.
\end{lemma}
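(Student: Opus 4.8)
The plan is to prove the statement by a symmetrization (exchangeability) argument on the replica--surrogate bipartite graph built for bidder $i$ in Phase~1. Fix $i$ and condition on truthful play. The left side of bidder $i$'s graph consists of $r$ agents --- bidder $i$ herself, reporting her true type $\vec{v}_i \sim \mathcal{D}_i$, together with the $r-1$ replicas, which are i.i.d.\ from $\mathcal{D}_i$ --- while the right side consists of the $r$ surrogates, i.i.d.\ from $\mathcal{D}'_i$, independent of everything else. Two facts drive the proof: (a) the $r$ left agents have an \emph{exchangeable} (indeed i.i.d.) joint type law, and bidder $i$ is an a~priori indistinguishable one of them; and (b) the rule that selects bidder $i$'s surrogate --- compute the VCG matching on the weighted bipartite graph, then complete it to a perfect matching by joining the still-unmatched left agents to the still-unmatched surrogates through a uniformly random bijection --- is a function that treats the $r$ left agents symmetrically and always returns a \emph{perfect} matching between left agents and surrogates.

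Concretely, I would first formalize (b): for every $\tau \in S_r$, relabeling the left agents by $\tau$ and then running the matching rule produces the matching $\pi \circ \tau$ in distribution, jointly with the surrogate types, where $\pi$ is the matching on the original labeling (the VCG matching is invariant under relabeling the left vertices, and the completion step is symmetric by construction). Combined with the i.i.d.-ness from (a), so that $(X_{\tau(1)},\dots,X_{\tau(r)}) \stackrel{d}{=} (X_1,\dots,X_r)$ jointly with the surrogate types, this yields that the random variables $Z_j$, defined as the type of the surrogate matched to left agent $j$ ($j \in [r]$), are identically distributed: taking $\tau$ to be the transposition of $1$ and $j$ gives $Z_j \stackrel{d}{=} Z_1$, and $Z_1$ is exactly the surrogate chosen for bidder $i$. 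Then I would use that the completed matching is a bijection, so the multiset $\{Z_1,\dots,Z_r\}$ equals the multiset of surrogate types $\{Y_1,\dots,Y_r\}$ with probability $1$; hence for any event $A$, $\frac{1}{r}\sum_j \mathbbm{1}[Z_j \in A] = \frac{1}{r}\sum_k \mathbbm{1}[Y_k \in A]$, and taking expectations, $\frac{1}{r}\sum_j \Pr[Z_j \in A] = \Pr_{\vec{v}_i \sim \mathcal{D}'_i}[\vec{v}_i \in A]$. Since all $\Pr[Z_j \in A]$ equal $\Pr[Z_1 \in A]$, we conclude $\Pr[Z_1 \in A] = \Pr_{\mathcal{D}'_i}[A]$, i.e.\ the surrogate chosen for bidder $i$ is distributed exactly as $\mathcal{D}'_i$.

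For the BIC $k$-bidders reduction there is the extra permutation step (Phase~1, Step~4), which I would handle by running the argument above after conditioning on the sort-order $\sigma$ of bidder $i$'s reported type. Conditioned on $\sigma$, item-symmetry of $\mathcal{D}_i$ makes bidder~$i$'s (already $\sigma$-sorted) type and the $\sigma$-sorted replicas i.i.d.\ from the common law obtained by pushing $\mathcal{D}_i$ forward through the $\sigma$-sort, and likewise the $\sigma$-sorted surrogates are i.i.d.\ from the $\sigma$-sort pushforward of $\mathcal{D}'_i$; the symmetrization argument then shows bidder $i$'s matched surrogate, in its $\sigma$-sorted form, has exactly that pushforward law. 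Finally, I would observe that the original (pre-permutation) surrogate draw, conditioned on its $\sigma$-sorted image, is uniform over the permutations of that image --- a consequence of item-symmetry of $\mathcal{D}'_i$ --- so that averaging over $\sigma$ and invoking item-symmetry of $\mathcal{D}'_i$ once more recovers that the chosen surrogate's underlying draw is distributed as $\mathcal{D}'_i$.

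The main obstacle I expect is bookkeeping rather than conceptual: stating and using the matching rule as a genuinely symmetric function of the left agents (in particular, that the completion step is a uniformly random perfect matching of the leftovers rather than an index-ordered one), and, in the $k$-bidders case, cleanly separating the randomness in the original surrogate draw from the randomness in the sorting permutation so that the two pushforward/averaging steps compose correctly. Everything else is the short exchangeability computation above, which is exactly the argument of~\cite{HKM} transplanted to our setting.
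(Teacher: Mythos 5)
Your argument is correct and is essentially the same exchangeability (change-of-sampling-order) argument the paper gives, crediting \cite{HKM}: the paper re-samples by first drawing all $r$ replicas and $r$ surrogates i.i.d., then choosing a uniformly random replica to be bidder $i$, which immediately makes the matched surrogate a uniformly random one and hence distributed as $\mathcal{D}'_i$; your phrasing via identically distributed $Z_j$'s and multiset equality is the same computation. The conditioning on $\sigma$ and the appeal to item-symmetry of both $\mathcal{D}_i$ and $\mathcal{D}'_i$ for the $k$-bidders case likewise match the paper's (somewhat terser) treatment.
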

\begin{proof} Imagine changing the order of sampling in the experiment. First, sample the $r$ surrogates i.i.d. from $\mathcal{D}'_i$, and $r$ replicas i.i.d. from $\mathcal{D}_i$. For the $k$-bidders algorithm only, pick a random ordering $\sigma$ and permute each surrogate and replica to respect that ordering. Finally, pick a replica uniformly at random among the sampled ones and decide that this was in fact bidder $i$. The distribution of surrogates, replicas, and bidder $i$ is exactly the same as that of the algorithm assuming bidder $i$ truthfully reports his type. In addition, we can compute the VCG matching once all the replicas and surrogates have been sampled before deciding which replica is bidder $i$. Because we choose a random replica to be bidder $i$, the surrogate chosen to represent her will also be chosen uniformly at random, regardless of the surrogates' types. So we can see that the process of choosing a surrogate is just sampling $r$ times from $\mathcal{D}'_i$ independently, permuting each sample to respect a random ordering (in the $k$-bidders problem only), and outputting a random surrogate among the sampled (and possibly permuted) ones. In the $k$-items reduction, the output surrogate is clearly distributed according to ${\cal D}'_i$. In the $k$-bidders reduction, because ${\cal D}'_i$ is item-symmetric and we used a random permutation to permute all sampled surrogates, we still output a surrogate distributed according to ${D}'_i$.
\end{proof}

\begin{lemma}\label{lem:ordering} In $M_2$ resulting from the $k$-bidders reduction, no bidder $i$ has an incentive to report any $\vec{w}_i$ such that there is some $j$ and $j'$ for which $w_{ij} > w_{ij'}$ when the true type $\vec{v}_i$ of the bidder satisfies $v_{ij} < v_{ij'}$.
\end{lemma}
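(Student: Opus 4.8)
The plan is to prove the contrapositive-style statement by a local swap argument: if $\vec{w}_i$ witnesses the forbidden pattern ($w_{ij}>w_{ij'}$ while $v_{ij}<v_{ij'}$) and $\tau=(j\ j')\in S_n$ is the corresponding transposition, then reporting $\tau(\vec{w}_i)$ gives bidder $i$ at least as much expected utility as reporting $\vec{w}_i$. Iterating the swap until no forbidden pair remains then shows a best response is always order-consistent with $\vec{v}_i$, which is the claim. This mirrors Observation~\ref{obs:monotone}, now applied to the composite two-phase mechanism $M_2$ rather than to $M$ directly.

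The first and main step is to establish that, in the $k$-bidders reduction, bidder $i$'s experience in $M_2$ is \emph{item-equivariant} in her report: writing $\pi^{(2)}_i(\vec{w}_i)$ and $q^{(2)}_i(\vec{w}_i)$ for her interim allocation vector and interim payment in $M_2$ when she reports $\vec{w}_i$ and all others report truthfully, I claim that for every $\tau\in S_n$,
$$\pi^{(2)}_i(\tau(\vec{w}_i)) = \tau\bigl(\pi^{(2)}_i(\vec{w}_i)\bigr), \qquad q^{(2)}_i(\tau(\vec{w}_i)) = q^{(2)}_i(\vec{w}_i).$$
To see this, couple the run on $\vec{w}_i$ with the run on $\tau(\vec{w}_i)$ by using the same i.i.d.\ draws of replicas from $\mathcal{D}_i$ and surrogates from $\mathcal{D}'_i$. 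In Phase~1, Step~4 every replica and surrogate is permuted to agree with the order of the reported vector; since $\tau(\vec{w}_i)$ has the same order as $\vec{w}_i$ up to swapping positions $j,j'$, the permuted replicas/surrogates in the second run are exactly those of the first run with $\tau$ applied (here one must fix the Step~4 tie-breaking rule consistently so that ``sort to the order of $\tau(\vec{w}_i)$'' equals ``sort to the order of $\vec{w}_i$, then apply $\tau$''). Next, because $M$ is item-symmetric and $\mathcal{D}'_{-i}$ is item-symmetric (each $\mathcal{D}'_{i'}$ is), the interim allocation and price of a surrogate $\vec{s}_i$ in $M$ satisfy $\pi_i(\tau(\vec{s}_i))=\tau(\pi_i(\vec{s}_i))$ and $q_i(\tau(\vec{s}_i))=q_i(\vec{s}_i)$ (the price scaling by $1-\eta$ and the rebates of Phase~1, Step~2, do not disturb this). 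Since $\tau$ preserves inner products, the weighted bipartite VCG graph of the second run is then the first-run graph with $\tau$ applied to every left vertex — including bidder $i$, whose node is now $\tau(\vec{w}_i)$ — and to every surrogate, with all edge weights unchanged; hence VCG returns the isomorphic matching and the identical prices, so whenever bidder $i$ is matched to $\vec{s}_i$ in the first run she is matched to $\tau(\vec{s}_i)$ in the second, for the same VCG price (unmatched bidders get $0$ either way). Finally, in Phase~2 the two surrogate profiles differ by $\tau$ applied to every bidder's block, so by item-symmetry of $M$ bidder $i$'s allocation gets permuted by $\tau$ and her total payment (VCG price plus discounted surrogate price) is unchanged. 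Averaging over the coupling gives the displayed equations.

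The second, short, step uses Theorem~\ref{thm:monotone}: we may assume $M$ is strongly monotone, and the surrogate $\vec{s}_i$ matched to bidder $i$ always has coordinates sorted consistently with her report, so $w_{ij}>w_{ij'}$ forces $s_{ij}\ge s_{ij'}$ and hence $\pi_{ij}(\vec{s}_i)\ge\pi_{ij'}(\vec{s}_i)$ pointwise; averaging over the matched surrogate (the unmatched event contributing $0$ to both sides) yields $\pi^{(2)}_{ij}(\vec{w}_i)\ge\pi^{(2)}_{ij'}(\vec{w}_i)$. Combining the two steps, and noting $\pi^{(2)}_i(\tau(\vec{w}_i))-\pi^{(2)}_i(\vec{w}_i)$ is supported on coordinates $j,j'$,
$$\bigl(\vec{v}_i\cdot\pi^{(2)}_i(\tau(\vec{w}_i)) - q^{(2)}_i(\tau(\vec{w}_i))\bigr) - \bigl(\vec{v}_i\cdot\pi^{(2)}_i(\vec{w}_i) - q^{(2)}_i(\vec{w}_i)\bigr) = (v_{ij'}-v_{ij})\bigl(\pi^{(2)}_{ij}(\vec{w}_i)-\pi^{(2)}_{ij'}(\vec{w}_i)\bigr) \ge 0,$$
since $v_{ij'}>v_{ij}$ by hypothesis and the second factor is nonnegative. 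Thus $\tau(\vec{w}_i)$ weakly dominates $\vec{w}_i$, proving the lemma. The main obstacle is the item-equivariance step: carefully pushing item-symmetry of $M$ through the interim expectation over $\mathcal{D}'_{-i}$, through the VCG matching and prices (invariance under the edge-weight-preserving graph isomorphism), and fixing the Step~4 tie-breaking so the coupling is exact; the monotonicity input and the final arithmetic are routine.
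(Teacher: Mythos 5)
Your proof is correct and uses essentially the same strategy as the paper's: couple the two runs by reusing the replica/surrogate draws, argue that after Step~4 the second run's graph is the first run's with $\tau$ applied (so VCG yields the isomorphic matching and identical prices), push item-symmetry of $M$ and $\mathcal{D}'_{-i}$ through the interim expectation to get equivariance of $\pi^{(2)}_i$ and $q^{(2)}_i$, and finish with strong monotonicity and a rearrangement-type inequality. The paper instead applies one permutation $\sigma$ taking $\vec{w}_i$'s order directly to $\vec{v}_i$'s order rather than a single transposition followed by iteration; that is only a cosmetic difference.

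One sentence is imprecise and worth fixing: ``in Phase~2 the two surrogate profiles differ by $\tau$ applied to every bidder's block'' is not literally true. The other bidders report truthfully and their Phase~1 experiments are untouched by bidder $i$'s report, so only bidder $i$'s surrogate changes from $\vec{s}_i$ to $\tau(\vec{s}_i)$; the profiles do not differ by a global $\tau$. The equivariance $\pi_i(\tau(\vec{s}_i))=\tau(\pi_i(\vec{s}_i))$ and $q_i(\tau(\vec{s}_i))=q_i(\vec{s}_i)$ still holds, but only because these are \emph{interim} quantities averaged over $\vec{s}_{-i}\sim\mathcal{D}'_{-i}$, and $\mathcal{D}'_{-i}$ is item-symmetric, so one may relabel $\vec{s}_{-i}\mapsto\tau(\vec{s}_{-i})$ inside the expectation before invoking item-symmetry of $M$. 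You correctly state exactly this when justifying the edge weights, so the argument is sound; just replace the ``every bidder's block'' phrasing with the expectation-level reasoning you already gave.
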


\begin{proof} Let $\sigma(\vec{w_i})$ be such that $\sigma(\vec{w_i})_j {\ge} \sigma(\vec{w_i})_{j'}$ if and only if $v_{ij} {\ge} v_{ij'}$. We show that bidder $i$ would be better off reporting $\sigma(\vec{w_i})$ than $\vec{w_i}$. Indeed, we can couple the outcomes of $M_2$ on $\vec{w_i}$ and $\sigma(\vec{w_i})$ in the following way. Whenever a replica is sampled for $i$ to play against, sample the same replica for both experiments. Whenever a surrogate is sampled to represent her, sample the same for both experiments. This set of replicas and surrogates will get permuted to match the ordering of $\vec{w_i}$ and $\sigma(\vec{w_i})$ respectively, so the VCG matching chosen will be exactly the same. So if we let $\vec{s_i}$ denote the surrogate chosen when the bid was $\vec{w_i}$, then $\sigma(\vec{s_i})$ is the surrogate chosen on bid $\sigma(\vec{w_i})$. Because $\sigma$ was chosen so that $\sigma(\vec{s_i})$ is ordered the same as $\vec{v_i}$ and $M$ is strongly monotone, bidder $i$ prefers to be represented by $\sigma(\vec{w_i})$ than by $\vec{w_i}$ when her type is $\vec{v_i}$. So bidder $i$ has no incentive to lie about the relative ordering of $\vec{v_i}$.
\end{proof}

\begin{corollary} $M_2$ is BIC.
\end{corollary}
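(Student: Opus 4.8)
The plan is to show that, once we condition on all bidders other than $i$ reporting truthfully, bidder $i$'s entire interaction with $M_2$ across \emph{both} phases collapses to a single VCG auction in which she bids for the right to be represented by a surrogate; VCG truthfulness then does all the work. Concretely, I would argue in three steps.

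\textbf{Step 1 (reduce to correctly-ordered reports).} In the $k$-items reduction there is nothing to do. In the $k$-bidders reduction, Lemma~\ref{lem:ordering} already says that any report $\vec{w}_i$ whose coordinate ordering disagrees with that of $\vec{v}_i$ is weakly dominated by the report $\sigma(\vec{w}_i)$ that re-sorts the coordinates to agree with $\vec{v}_i$. Hence it suffices to prove that, among all reports whose ordering agrees with that of $\vec{v}_i$ (a set containing the truthful report), reporting $\vec{v}_i$ is optimal. \textbf{Step 2 (the Phase~1 edge weights are the true surrogate-values).} Apply Lemma~\ref{lem:HKM} to each bidder $i'\neq i$: since $i'$ reports truthfully, the surrogate chosen for $i'$ is distributed exactly as $\mathcal{D}'_{i'}$, so in Phase~2 the profile $\vec{s}_{-i}$ of the other surrogates is distributed exactly as $\mathcal{D}'_{-i}$ --- which is precisely the distribution over which the weight of the edge from bidder $i$ with reported type $\vec{r}_i$ to a surrogate $\vec{s}_i$ is defined, namely $U(\vec{r}_i,\, \cdot\,)$ of the expected outcome of $\vec{s}_i$ under $M$. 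Therefore, whenever bidder $i$ reports $\vec{r}_i$ and ends up matched to $\vec{s}_i$, her actual Phase~2 expected utility (over the randomness of $M$ and of $\vec{s}_{-i}$) equals exactly that edge weight, and her \emph{true} value for being represented by $\vec{s}_i$ equals the edge weight computed with her real type $\vec{v}_i$. Finally, if bidder $i$ is left unmatched by the VCG matching she is awarded nothing and charged nothing (Phase~2, step~2), i.e.\ utility $0$ --- exactly what a zero-weight edge to a dummy surrogate would yield --- so all of Phase~1 is a VCG assignment auction on the bipartite graph augmented with the no-match option.

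\textbf{Step 3 (invoke VCG truthfulness).} Reporting $\vec{v}_i$ induces exactly bidder $i$'s true valuation vector over the sampled surrogates (by Step~2), and it is an allowed report; since VCG for the assignment problem is (dominant-strategy, hence) truthful with respect to the space of valuation vectors over the surrogate set, no report $\vec{r}_i$ gives bidder $i$ higher expected utility than $\vec{v}_i$, assuming the others are truthful. Combined with Step~1 this gives that $M_2$ is BIC. I would also record the point --- irrelevant to this corollary but relevant to Theorem~\ref{thm:epsilon-BIC to BIC} --- that the argument uses nothing about $M_1$ being merely $\epsilon$-BIC: the incentive guarantee for $M_2$ is \emph{exact} because the VCG edge weights are computed exactly from the explicitly given $\mathcal{D}'$ and $M_1$, and the $\epsilon$ only resurfaces in the revenue bound.

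I expect the main obstacle to be bookkeeping rather than depth: misreporting $\vec{v}_i$ can simultaneously change (a) bidder $i$'s own edge weights in the VCG graph, (b) the permutation applied to her own sampled replicas and surrogates (in the $k$-bidders case), and (c) \emph{nothing} about the other bidders' surrogates (these depend only on the others' truthful reports). One must see that (a) is neutralized by VCG truthfulness, (b) by Lemma~\ref{lem:ordering} together with the strong monotonicity of $M$, and (c) is exactly what makes the Phase~1 edge weights the correct expected utilities via Lemma~\ref{lem:HKM}. Articulating the reduction to a clean single-agent VCG problem is the crux of the argument.
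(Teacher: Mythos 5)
Your argument follows the paper's proof: both condition on truthful opponents, invoke Lemma~\ref{lem:HKM} to conclude that the other surrogates are distributed exactly as $\mathcal{D}'_{-i}$ so that the Phase~1 edge weights are the true (and correctly computed) expected utilities for being represented by each surrogate in Phase~2, use Lemma~\ref{lem:ordering} in the $k$-bidders case to rule out ordering-violating reports, and then appeal to VCG truthfulness. Your explicit observation that an unmatched bidder's zero utility is equivalent to a zero-weight dummy option makes the reduction to a single-agent VCG assignment auction cleaner, but the decomposition and the reliance on Lemmas~\ref{lem:HKM} and~\ref{lem:ordering} are the same as in the paper.
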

\begin{proof}
{Fix some bidder $i$ and suppose all other bidders report truthfully. By Lemma \ref{lem:HKM} the surrogates chosen for them are distributed according to $\mathcal{D}'_{-i}$. Hence, when we design the VCG auction for bidder $i$, we correctly compute the expected outcome that will be awarded to each surrogate if that surrogate is chosen for bidder $i$. So in the $k$-items reduction, bidder $i$ faces a VCG mechanism that correctly computes edge-weights; hence the bidder will play the VCG mechanism truthfully. In the $k$-bidders reduction, by Lemma \ref{lem:ordering} bidder $i$ won't misreport her ordering, but could possibly lie about her values (respecting her ordering). Nevertheless, no matter what (ordering respecting) values she reports, she won't affect the distribution of the values of the replicas and surrogates she will compete against in VCG, except for the fact that these are going to be conditioned to respect her ordering. Still, because the edge-weights in the VCG auction are computed correctly, the bidder will report her type truthfully.}
\end{proof}

Now that we know $M_2$ is BIC, we want to compare $R^{M_2}(\mathcal{D})$ to $R^M(\mathcal{D'})$. We observe first that if we are lucky and the VCG matching for every bidder is always perfect, then in fact $R^{M_2}(\mathcal{D}) \geq R^M(\mathcal{D'})$. When bidders are matched in VCG to a surrogate, they pay exactly what their surrogate paid, plus a little extra due to the VCG prices. However, because some edge weights are negative, the VCG matching may not be perfect. So how do we analyze expected revenue in this case?

We look at the expected revenue contributed from a single bidder $i$. Consider again changing the order of sampling in the experiment to sample each replica and surrogate first before deciding which replica is bidder $i$. Then if surrogate $\vec{s_i}$ is matched in VCG, there is a $1/r$ chance that its matched replica will be bidder $i$ and we make expected revenue equal to what $\vec{s_i}$ pays in $M$. If $\vec{s_i}$ is unmatched in VCG, then even if its matched replica is bidder $i$, we make no revenue. This ignores the extra possible revenue from the VCG prices, which we will continue to ignore from now on. So let $p(\vec{s_i})$ denote the expected price that a surrogate with type $\vec{s_i}$ pays in $M$ (over the randomness of the other surrogates), and let $V$ denote the set of surrogates that are matched in VCG. Then the expected revenue of $M_2$ from bidder $i$ is exactly:

$$\sum_{\vec{s_i} \in V} p(\vec{s_i})/r.$$

Recall that the expected revenue of $M$ from bidder $i$ is exactly $\sum_{\forall \vec{s_i}} p(\vec{s_i})/r$. So our goal is to bound the difference between these two sums. But of course, there is also randomness in which surrogates are sampled and what the VCG matching is, so we want to bound the difference in expectation of these two sums. We do this in two steps. First, we show that there exists a matching that is \emph{very} close to perfect in expectation and only uses positive edges. In fact, it is so close to perfect that even if we assume that the unmatched surrogates had the highest possible price we barely lose any revenue in expectation. Unfortunately, this is not necessarily the matching that VCG uses. However, because VCG maximizes social welfare, and we give surrogates a free rebate of {$\eta p_i(\vec{s_i})$}, VCG cannot unmatch too many surrogates that pay a high price. We now quantify these statements and prove them.

Define an equivalence relation on bidders and surrogates where $\vec{v_i} \sim \vec{w_i}$ if when we round both vectors down to the nearest multiple of $\delta$ we get the same vector. Observe that a replica $\vec{v_i}$ and surrogate $\vec{s_i}$ are equivalent only if $v_{ij} \geq s_{ij}$ for all $j$. Therefore, replica $\vec{v_i}$ has positive valuation for the outcome of surrogate $\vec{s_i}$.~\footnote{This is the step where it is helpful to assume that $\mathcal{D}'$ is discrete. If $\mathcal{D}'$ were not discrete, we would not necessarily have $v_{ij} \geq s_{ij}$. However, after giving an additional rebate of $\delta$ for every item received, the conclusion that $\vec{v}_i$ has positive valuation for the surrogate $\vec{s}_i$ holds, which is what matters. The extra rebates result in an extra loss of $O({\delta \over \eta} T)$ of revenue. This loss in revenue comes from: a) actually giving the rebates, which costs at most $\delta T$ in revenue and b) possibly reducing the truthfulness of $M_1$ from $\epsilon$-BIC to at worst $(\epsilon+\delta)$-BIC, which costs an additional $\frac{\delta}{\eta}T$ in revenue (replacing $\epsilon$ by $\epsilon+\delta$ in Lemma~\ref{lem:augmenting}, Corollary~\ref{cor:augmenting}, and the discussion following Lemma~\ref{lem:boundd}).} So a matching that matches replicas only to equivalent surrogates uses only positive edges. {If we let $\beta$ be the number of equivalence classes, then $\beta \le ({1 \over \delta}+1)^k$ (for $k$-items) and $\beta \le (n+1)^{1/\delta+1}$ (for $k$-bidders taking into account that we are only looking at permuted replicas/surrogoates in our transformation). We can use a lemma from \cite{HKM} directly:}

\begin{lemma}[\cite{HKM}] The expected cardinality of a maximal matching that only matches equivalent replicas and surrogates is at least $r - \sqrt{\beta r}$.
\end{lemma}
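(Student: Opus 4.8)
\emph{Proof plan.} The plan is to recast the claim as an elementary balls-into-bins estimate. Label the $\beta$ equivalence classes by $\ell\in\{1,\dots,\beta\}$, and let $q_\ell$ be the probability that a replica (equivalently, a surrogate) lands in class $\ell$. This is well defined: $\mathcal{D}'_i$ is, by construction, the push-forward of $\mathcal{D}_i$ under rounding coordinates down to multiples of $\delta$, and in the $k$-bidders reduction the common canonical-ordering permutation applied to replicas and surrogates commutes with coordinate-wise rounding, so replicas and surrogates still induce the same law on classes (item-symmetry is not even needed here, only that the same operation is applied to both families). Let $R_\ell$ and $S_\ell$ count the replicas and surrogates falling in class $\ell$; then $R_\ell,S_\ell$ are independent and each is $\mathrm{Binomial}(r,q_\ell)$. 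The bipartite ``equivalence graph'' is a disjoint union of complete bipartite graphs, one per class, so every maximal matching in it has cardinality $\sum_\ell \min(R_\ell,S_\ell) = r - \sum_\ell (R_\ell - S_\ell)^+$. It therefore suffices to prove $\mathbb{E}\bigl[\sum_\ell (R_\ell - S_\ell)^+\bigr] \le \sqrt{\beta r}$.

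First I would bound a single term. Since $R_\ell-S_\ell$ has mean $0$ and is symmetric about $0$, we have $\mathbb{E}[(R_\ell-S_\ell)^+] = \tfrac12\,\mathbb{E}|R_\ell-S_\ell|$, and by Cauchy--Schwarz $\mathbb{E}|R_\ell-S_\ell| \le \bigl(\mathbb{E}(R_\ell-S_\ell)^2\bigr)^{1/2} = \bigl(\mathrm{Var}(R_\ell)+\mathrm{Var}(S_\ell)\bigr)^{1/2} = \bigl(2 r q_\ell(1-q_\ell)\bigr)^{1/2} \le (2 r q_\ell)^{1/2}$. Summing over $\ell$ and applying Cauchy--Schwarz once more together with $\sum_\ell q_\ell = 1$ gives $\mathbb{E}\bigl[\sum_\ell (R_\ell-S_\ell)^+\bigr] \le \tfrac{1}{\sqrt{2}}\sqrt{r}\,\sum_\ell \sqrt{q_\ell} \le \tfrac{1}{\sqrt{2}}\sqrt{r}\cdot\sqrt{\beta} \le \sqrt{\beta r}$, which is exactly the claimed bound; hence the expected matching cardinality is at least $r-\sqrt{\beta r}$.

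There is essentially no obstacle in the probabilistic estimate itself — it is the argument of~\cite{HKM}, and in the write-up I would simply cite it. The only point deserving a sentence of care is the bookkeeping that makes the two binomials in each class share a parameter, i.e.\ that replicas (drawn from $\mathcal{D}_i$, rounded down, and in the $k$-bidders case further permuted to a canonical order) and surrogates (drawn from $\mathcal{D}'_i$, identically permuted) have the same distribution over equivalence classes; this is immediate from the definition of $\mathcal{D}'_i$ once one observes that coordinate-wise rounding down commutes with coordinate permutations and with descending sorting. With that in hand the lemma follows, and it supplies exactly what is needed downstream: a positive-weight matching of near-perfect expected cardinality, which is the benchmark against which the VCG matching is compared in the revenue analysis.
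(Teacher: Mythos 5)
Your proof is correct; the paper simply cites this lemma from~\cite{HKM} without reproducing the argument, and your derivation is exactly the standard one from that source: observe that the equivalence graph is a disjoint union of class-wise complete bipartite graphs, so any maximal matching has size $\sum_\ell \min(R_\ell,S_\ell) = r - \sum_\ell (R_\ell-S_\ell)^+$, then bound $\mathbb{E}\bigl[\sum_\ell (R_\ell-S_\ell)^+\bigr]$ via symmetry of $R_\ell - S_\ell$, variance of a Binomial, and Cauchy--Schwarz. Your chain of inequalities actually yields the slightly sharper constant $\sqrt{\beta r/2}$, which of course implies the stated $\sqrt{\beta r}$. The one point you rightly flag as needing a sentence --- that $R_\ell$ and $S_\ell$ share the same Binomial parameter $q_\ell$, i.e.\ replicas and surrogates induce the same law on equivalence classes --- does hold in the paper's application (Theorem~\ref{thm:additive} instantiates $\mathcal{D}'_i$ as the exact push-forward of $\mathcal{D}_i$ under coordinate-wise rounding down, and in the $k$-bidders reduction the same canonical-reordering map is applied to both families and commutes with rounding), though it is worth noting that the more general coupling hypothesis of Theorem~\ref{thm:epsilon-BIC to BIC} alone does not force the two class distributions to coincide; the paper implicitly relies on the stronger push-forward property whenever it invokes this lemma.
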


So if we denote by $X$ the set of surrogates in some maximal matching using only equivalent replicas, then because each $p(\vec{s_i})$ is at most $T_i$ we get:

$$\mathbb{E}\left[\sum_{\vec{s_i} \in X} p(\vec{s_i})/r \right] \geq \mathbb{E} \left [\sum_{\forall \vec{s_i}} p(\vec{s_i})/r \right] - T_i\sqrt{\beta / r}$$

Now we want to bound the expected difference between the expected revenue from the matching $X$ and the VCG matching $V$. We can get from $X$ to $V$ through a disjoint collection of augmenting paths and cycles. We want to show that there cannot be many augmenting paths that unmatch a surrogate with a high $p(\vec{s_i})$.

\begin{lemma} \label{lem:augmenting} Let $P$ be any augmenting path from $X$ to $V$ that unmatches surrogate $\vec{s_i}'$. Let also $S$ denote the set of surrogates in $P$. {Finally, let $\vec{s_i}''$ be the surrogate closest to the end of $P$ opposite to $\vec{s_i}'$. If $\vec{s_i}''$  is matched in both $X$ and $V$, then:
{
$${\frac{\delta+\epsilon}{\eta}}\sum_{\vec{s_i} \in S,j} \pi_{ij}(\vec{s_i}) \geq  p(\vec{s_i}');$$
otherwise
$${\frac{\delta+\epsilon}{\eta}}\sum_{\vec{s_i} \in S,j} \pi_{ij}(\vec{s_i}) \geq p(\vec{s_i}')-p(\vec{s_i}'').$$}}
\end{lemma}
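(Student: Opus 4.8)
The plan is to exploit the optimality of the VCG matching, which maximizes total edge weight over all matchings. Recall that after the rebate modification in Phase~1, Step~2, the weight of an edge between replica (or bidder) $\vec{r}_i$ and surrogate $\vec{s}_i$ equals $\vec{r}_i$'s utility for the (discounted-by-$(1-\eta)$, rebated) expected outcome of $\vec{s}_i$ under $M$. Concretely, this edge weight can be written as $\vec{r}_i \cdot \vec{\pi}(\vec{s}_i) - (1-\eta) q(\vec{s}_i) + \eta q(\vec{s}_i) \cdot(\text{rebate bookkeeping})$; the key quantitative fact I will use is that the rebate gives each surrogate an extra $\eta \cdot q(\vec{s}_i)$ (equivalently an amount comparable to $\eta \sum_j \pi_{ij}(\vec{s}_i)$ after accounting for individual rationality of $M_1$ as an $\epsilon$-BIC mechanism on $\mathcal D'$), so that the weight of every ``equivalent'' edge (replica $\sim$ surrogate) is at least $\eta \cdot q(\vec{s}_i)$ minus an $(\epsilon+\delta)$-type slack coming from the fact that $M_1$ is only $\epsilon$-BIC and the replica's values exceed the surrogate's by at most $\delta$.

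First I would set up the augmenting path $P$ from $X$ to $V$: $P$ alternates between edges in $X$ (equivalent-only, hence nonnegative weight, and in fact weight $\gtrsim \eta \sum_j \pi_{ij}(\cdot)$ on the surrogate side) and edges in $V$. Since $V$ is the maximum-weight matching and $X$ is some matching, the symmetric difference $X \triangle V$ decomposes into augmenting paths and alternating cycles, and along each such component the total $V$-weight is at least the total $X$-weight (otherwise swapping would improve $V$). I would then localize this global inequality to the single augmenting path $P$: the $V$-edges of $P$ have total weight at least the $X$-edges of $P$. Now bound the $X$-edge weights from below using the rebate: each $X$-edge incident to a surrogate $\vec{s}_i \in S$ contributes at least (roughly) $\eta \sum_j \pi_{ij}(\vec{s}_i)$ minus an $(\epsilon+\delta)$ slack times the number of items, so summing over the at most $|S|$ surrogates on $P$ gives a lower bound of the form $\eta \sum_{\vec{s}_i \in S, j} \pi_{ij}(\vec{s}_i) - (\epsilon+\delta)\sum_{\vec{s}_i\in S, j}\pi_{ij}(\vec s_i)$ — wait, more carefully: the $X$-weights are at least (value of replica for surrogate outcome) $-(1-\eta)q(\vec s_i)$, and I want to compare this to $p(\vec s_i') = q(\vec s_i')$. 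The cleanest route: the $V$-edge weights are each at most (replica's value for surrogate's bundle) $\le \sum_j \pi_{ij}(\cdot)$ in the relevant normalization, but actually I should bound them against the $X$-weights directly. The surrogate $\vec{s}_i'$ is unmatched in $X$ but matched in $V$, or vice versa — re-reading the statement, $\vec{s}_i'$ is unmatched \emph{in $V$} (the VCG matching ``unmatches'' it relative to $X$); it was matched in $X$ contributing weight $\ge \eta q(\vec{s}_i') = \eta p(\vec{s}_i')$ (up to the $\epsilon$-slack), and in $V$ it contributes $0$. The other endpoint $\vec{s}_i''$: if matched in both, its $V$-contribution loses nothing relative to the $X$-baseline beyond a controllable amount; if matched in $X$ but not $V$, we lose up to $p(\vec{s}_i'')$. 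Making $V$-weight $\ge$ $X$-weight along $P$ then rearranges to $(\epsilon+\delta)$-slack-terms over all of $S$ $\ge$ $\eta p(\vec{s}_i')$ (first case) or $\ge \eta(p(\vec{s}_i') - p(\vec{s}_i''))$ (second case), which after dividing by $\eta$ is exactly the claimed inequality with constant $\frac{\epsilon+\delta}{\eta}$.

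The main obstacle I expect is the careful bookkeeping of edge weights: tracking how the discount factor $(1-\eta)$, the rebate $\eta q(\vec{s}_i)$, the $\epsilon$-BIC slack of $M_1$ (which translates into an $\epsilon \sum_j \pi_{ij}$ additive error in what a replica ``should'' be willing to pay for its matched surrogate), and the $\delta$-gap between $\mathcal D_i$ and $\mathcal D_i'$ values all combine along the path. In particular I must verify that along the alternating path the $X$-edge lower bounds and $V$-edge upper bounds telescope correctly so that only the \emph{endpoint} surrogates $\vec{s}_i'$ and $\vec{s}_i''$ survive (the interior surrogates cancel in pairs, up to accumulating the $\frac{\epsilon+\delta}{\eta}\sum_{\vec{s}_i\in S, j}\pi_{ij}(\vec s_i)$ error), and that the ``equivalent replica'' structure of $X$ guarantees every $X$-edge I rely on is genuinely present with the claimed weight. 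The continuous-$\mathcal D'$ case is handled, as the footnote indicates, by the extra $\delta$-per-item rebate, which is why the statement carries $\delta+\epsilon$ rather than just $\epsilon$ in the numerator.
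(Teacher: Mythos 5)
Your overall architecture matches the paper's proof exactly: use VCG optimality to conclude the total weight of $V$-edges along the augmenting path $P$ is at least the total weight of $X$-edges; decompose each edge weight into an ``$M_1$-welfare'' component $\vec{r}_i\cdot\vec{\pi}(\vec{s}_i) - q(\vec{s}_i)$ and a ``rebate'' component $\eta\, q(\vec{s}_i)$; observe that the rebate contributions of the interior surrogates cancel between $X$ and $V$, so only $\vec{s}_i'$ (lost rebate) and possibly $\vec{s}_i''$ (gained rebate) survive; and bound the $M_1$-welfare gain along the path by $(\delta+\epsilon)\sum_{\vec{s}_i\in S,j}\pi_{ij}(\vec{s}_i)$, using $\epsilon$-BIC-ness of $M_1$ together with the $\delta$-gap between replica and surrogate values. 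So the plan is sound and this is the same route the paper takes.

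However, the detailed bookkeeping has three errors you would need to fix. First, your edge-weight formula $\vec{r}_i\cdot\vec{\pi}(\vec{s}_i) - (1-\eta)q(\vec{s}_i) + \eta q(\vec{s}_i)\cdot(\text{rebate bookkeeping})$ double-counts the rebate: the rebate already lives inside the $(1-\eta)$ discount, so the weight is simply $\vec{r}_i\cdot\vec{\pi}(\vec{s}_i) - (1-\eta)q(\vec{s}_i)$, which decomposes cleanly as $\big(\vec{r}_i\cdot\vec{\pi}(\vec{s}_i) - q(\vec{s}_i)\big) + \eta q(\vec{s}_i)$. Second, the rebate term is $\eta\, q(\vec{s}_i) = \eta\, p(\vec{s}_i)$, not ``comparable to $\eta\sum_j\pi_{ij}(\vec{s}_i)$'' --- these are genuinely different quantities, and conflating them is what lets you mistakenly claim the endpoint terms become $p(\vec{s}_i')$ after dividing by $\eta$; the $\sum_j\pi_{ij}$ terms should appear only in the $M_1$-welfare bound, not the rebate. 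Third, you put the $(\epsilon+\delta)$ slack on the $X$-edges (``every equivalent edge has weight at least $\eta q(\vec{s}_i)$ minus an $(\epsilon+\delta)$-type slack''), which is wrong: for an equivalent pair $r\sim s$ we have $r_j\geq s_j$ for all $j$, so by the ex-interim IR of $M_1$ the $M_1$-welfare of the $X$-edge is exactly $\geq 0$ with no slack, and the full $\eta q(\vec{s}_i)$ is the clean lower bound. The $(\epsilon+\delta)$ factor instead bounds the \emph{gain} in $M_1$-welfare when a replica is re-matched from its equivalent surrogate $s$ to some other surrogate $s'$ under $V$: $\epsilon$-BIC of $M_1$ gives $s\cdot\vec{\pi}(s) - q(s) \geq s\cdot\vec{\pi}(s') - q(s') - \epsilon\sum_j\pi_j(s')$, and $r\leq s+\delta\cdot\ind$ gives the extra $\delta\sum_j\pi_j(s')$; summing these over the $V$-edges of the path yields the $(\delta+\epsilon)\sum_{\vec{s}_i\in S,j}\pi_{ij}(\vec{s}_i)$ upper bound. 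With those corrections the telescoping you describe does go through exactly as claimed.
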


\begin{proof} { Consider the ``weight'' of this path, computed by adding the weights of the new edges and subtracting the weights of the deleted edges. Because VCG maximizes social welfare, this weight must be positive. Decompose the gain/loss in social welfare in two components: that coming from rebates and that coming without accounting for rebates, i.e. were we running mechanism $M_1$ instead of $M$. As far as the first component goes, the contribution to the welfare via rebates from a surrogate that is matched in both $X$ and $V$ does not change in the two matchings. Instead there is a loss of $\eta p(\vec{s_i}')$ of rebates-welfare because $\vec{s_i}'$ becomes unmatched and a gain of $\eta p(\vec{s_i}'')$ of rebates-welfare, if $\vec{s_i}''$ was not matched in $X$ and became matched in $V$.

Now let us upper-bound the gain in social welfare contributed by the $M_1$ component of the edge-weights. Because $M_1$ is $\epsilon$-BIC for ${\cal D}'$, a replica gains at most ${(\delta+\epsilon)} \sum_j \pi_{ij}(\vec{s_i})$ by being matched to $\vec{s_i}$ instead of her equivalent surrogate.  So  the $M_1$-welfare goes up by at most ${(\delta+\epsilon)} \sum_{\vec{s_i} \in S,j} \pi_{ij}(\vec{s_i})$ from the augmentation.

Given that the augmentaion must increase social welfare, we obtain the lemma.}
\end{proof}

This lemma, in essence, says that each time a surrogate becomes unmatched from $X$ to $V$ it ``claims'' some weight of the $\pi_{ij}$s. If we let $W_i$ denote $\sum_{\forall \vec{s_i},j} \pi_{ij}{(\vec{s_i})}$, then we get the following corollary:

{
\begin{corollary} \label{cor:augmenting}$$\sum_{\vec{s_i} \in V} p(\vec{s_i})/r \geq \sum_{\vec{s_i} \in X} p(\vec{s_i})/r - {{\frac{\delta+\epsilon}{\eta r}} W_i}.$$
\end{corollary}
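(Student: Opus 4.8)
\textbf{Proof proposal for Corollary~\ref{cor:augmenting}.}
The plan is to compare the two matchings $X$ and $V$ via the standard symmetric-difference decomposition. I would look at $X \triangle V$, whose connected components are vertex-disjoint alternating paths and cycles in the bipartite replica/surrogate graph. Cycles match exactly the same set of surrogates in $X$ and in $V$, so they contribute nothing to $\sum_{\vec{s_i}\in V}p(\vec{s_i})-\sum_{\vec{s_i}\in X}p(\vec{s_i})$; only the path components matter. Since every internal vertex of a path component is matched in both $X$ and $V$, the only surrogates whose status changes (matched in exactly one of $X,V$) are path endpoints, so each path component contains at most two such ``special'' surrogates, and they are endpoints of that path.

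Next I would classify the path components by a short parity argument on the alternation of $X$-edges and $V$-edges. A path with two surrogate endpoints has even length, hence (starting the alternation from either end) one endpoint is matched in $X$ only and the other in $V$ only; a path with one surrogate endpoint and one replica endpoint has that surrogate matched in exactly one of $X,V$; and a path with two replica endpoints has no special surrogate at all. The upshot is that \emph{no single path component can unmatch two surrogates}, so every surrogate in $X\setminus V$ is the designated $\vec{s_i}'$ of exactly one path. For each such path $P$, Lemma~\ref{lem:augmenting} applies: writing $S_P$ for the set of surrogates on $P$, it gives $\frac{\delta+\epsilon}{\eta}\sum_{\vec{s_i}\in S_P,j}\pi_{ij}(\vec{s_i})\ge p(\vec{s_i}')$ when the opposite-end surrogate $\vec{s_i}''$ is internal (matched in both), and $\frac{\delta+\epsilon}{\eta}\sum_{\vec{s_i}\in S_P,j}\pi_{ij}(\vec{s_i})\ge p(\vec{s_i}')-p(\vec{s_i}'')$ when $\vec{s_i}''$ is itself an endpoint newly matched in $V$. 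Path components that do not unmatch any surrogate only add nonnegative terms $p(\vec{s_i})\ge 0$ to $\sum_{\vec{s_i}\in V}p(\vec{s_i})$, so they can only help us.

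Finally I would sum the Lemma~\ref{lem:augmenting} inequalities over all path components that unmatch a surrogate. The components are vertex-disjoint, so $\sum_{P}\sum_{\vec{s_i}\in S_P,j}\pi_{ij}(\vec{s_i})\le\sum_{\forall\vec{s_i},j}\pi_{ij}(\vec{s_i})=W_i$; and the right-hand sides sum to at least $\sum_{\vec{s_i}\in X\setminus V}p(\vec{s_i})-\sum_{\vec{s_i}\in V\setminus X}p(\vec{s_i})$, since each ``subtracted'' $p(\vec{s_i}'')$ corresponds to a distinct surrogate of $V\setminus X$ and $p\ge 0$ lets us discard any leftover positive terms. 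Hence $\frac{\delta+\epsilon}{\eta}W_i\ge\sum_{\vec{s_i}\in X\setminus V}p(\vec{s_i})-\sum_{\vec{s_i}\in V\setminus X}p(\vec{s_i})=\sum_{\vec{s_i}\in X}p(\vec{s_i})-\sum_{\vec{s_i}\in V}p(\vec{s_i})$, and dividing by $r$ yields the corollary. The one delicate point, and the step I would spend the most care on, is the path-classification bookkeeping: verifying by parity that no path unmatches two surrogates, so that each $p(\vec{s_i}')$ for $\vec{s_i}\in X\setminus V$ is charged once and the subtracted $p(\vec{s_i}'')$ terms are exactly covered by $V\setminus X$. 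This is purely combinatorial and introduces no additional loss in revenue.
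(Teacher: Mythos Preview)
Your argument is correct and is precisely the paper's approach: the paper offers no formal proof of this corollary beyond the one-line remark that ``each time a surrogate becomes unmatched from $X$ to $V$ it claims some weight of the $\pi_{ij}$'s,'' and your decomposition of $X\triangle V$ into alternating paths and cycles, your parity classification showing no path can unmatch two surrogates, and your path-by-path application of Lemma~\ref{lem:augmenting} followed by summing over disjoint components is exactly the intended unpacking of that remark. Your bookkeeping (distinguishing the two cases of Lemma~\ref{lem:augmenting} and tracking which $p(\vec{s_i}'')$ terms get subtracted) is in fact more careful than what the paper spells out; the only assumption you invoke that the paper leaves implicit is $p(\vec{s_i})\ge 0$.
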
}

We now proceed to bound $\sum_i \mathbb{E}[W_i]$ with an easy lemma.

\begin{lemma} \label{lem:boundd} $$\sum_i \mathbb{E}[W_i]/r \leq T.$$
\end{lemma}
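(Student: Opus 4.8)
\textbf{Proof proposal for Lemma~\ref{lem:boundd}.} The plan is to simply unwind the definitions of $W_i$ and $\pi_{ij}$ and reduce the claim to the feasibility of the mechanism $M$. First, recall that for each bidder $i$ the reduction samples $r$ surrogates i.i.d.\ from $\mathcal{D}'_i$, and that $W_i = \sum_{\vec{s_i}} \sum_j \pi_{ij}(\vec{s_i})$, where the outer sum ranges over those $r$ sampled surrogates and $\pi_{ij}(\vec{s_i}) = \mathbb{E}_{\vec{s}_{-i} \sim \mathcal{D}'_{-i}}[\phi_{ij}(\vec{s_i}~;~\vec{s}_{-i})]$ is the expected probability (over the other bidders' surrogates, which by Lemma~\ref{lem:HKM} are distributed according to $\mathcal{D}'_{-i}$ when everyone plays truthfully, and over the internal randomness of $M$) that a surrogate of type $\vec{s_i}$ representing bidder $i$ receives item $j$ in $M$. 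Hence $\sum_j \pi_{ij}(\vec{s_i})$ is exactly the expected number of items awarded to that surrogate.

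The first step is to apply linearity of expectation together with the i.i.d.\ sampling: since each of the $r$ surrogates for bidder $i$ is drawn from $\mathcal{D}'_i$,
$$\frac{\mathbb{E}[W_i]}{r} = \mathbb{E}_{\vec{s_i} \sim \mathcal{D}'_i}\left[\sum_j \pi_{ij}(\vec{s_i})\right] = \mathbb{E}_{\vec{s} \sim \mathcal{D}'}\left[\sum_j \phi_{ij}(\vec{s})\right],$$
where the last equality just folds the expectation over $\vec{s}_{-i} \sim \mathcal{D}'_{-i}$ (implicit in $\pi_{ij}$) together with the expectation over $\vec{s_i} \sim \mathcal{D}'_i$ to get an expectation over the full surrogate profile $\vec{s} \sim \mathcal{D}'$. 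The second step is to sum over $i$ and swap the sums:
$$\sum_i \frac{\mathbb{E}[W_i]}{r} = \mathbb{E}_{\vec{s} \sim \mathcal{D}'}\left[\sum_{i,j} \phi_{ij}(\vec{s})\right].$$
Finally, since $M$ is a feasible mechanism, for every realized surrogate profile $\vec{s}$ and every realization of $M$'s randomness the total number of items awarded is at most $T$; taking expectations, $\sum_{i,j}\phi_{ij}(\vec{s}) \le T$ for all $\vec{s}$, which gives $\sum_i \mathbb{E}[W_i]/r \le T$ as desired.

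This argument is essentially bookkeeping, so there is no serious obstacle; the only point requiring a little care is keeping track of \emph{which} distribution each expectation is taken over — in particular, that the surrogates for the other bidders are i.i.d.\ from $\mathcal{D}'_{-i}$ (so that $\pi_{ij}$ as defined for $M$ really is the per-surrogate expected allocation under the true play of $M_2$, invoking Lemma~\ref{lem:HKM}) — and that $M$, being $M_1$ with scaled-down prices, is still a feasible allocation mechanism so that the bound $\sum_{i,j}\phi_{ij}(\vec{s}) \le T$ is available pointwise. Everything else is linearity of expectation and the i.i.d.\ structure of the surrogate sampling.
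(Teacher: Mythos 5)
Your proof is correct and is the same argument the paper gives, just spelled out in more detail: the paper's one-line proof says $\mathbb{E}[W_i]/r$ is exactly the expected number of items bidder $i$ receives from $M$, and your computation of $\mathbb{E}[W_i]/r = \mathbb{E}_{\vec{s}\sim\mathcal{D}'}\bigl[\sum_j \phi_{ij}(\vec{s})\bigr]$ followed by summing over $i$ and using pointwise feasibility is precisely the unwinding of that claim. (One small simplification: you don't actually need to invoke Lemma~\ref{lem:HKM} here — the quantity $\pi_{ij}(\vec{s_i})$ is \emph{defined} as an expectation over $\vec{s}_{-i}\sim\mathcal{D}'_{-i}$ regardless of how the mechanism is actually played, and $\mathbb{E}[W_i]$ is just an expectation over the i.i.d.\ sampling of bidder $i$'s $r$ surrogates.)
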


\begin{proof}
$\mathbb{E}[W_i]/r$ is exactly the expected number of items awarded to bidder $i$ by $M$. As $M$ can only award $T$ items, we have the desired inequality.
\end{proof}

Now we just have to put everything together and chase through some inequalities. From the work above we get that:

$$\mathbb{E}\left[\sum_{\vec{s_i} \in V} p(\vec{s_i})/r\right] \geq \mathbb{E}\left[\sum_{\forall \vec{s_i}} p(\vec{s_i})/r\right] - \left(T_i\sqrt{\beta \over r} + {  {\frac{\delta+\epsilon}{\eta r} }\mathbb{E}[W_i]}\right).$$
And when we sum this over all bidders we get:

$$R^{M_2}(\mathcal{D}) \geq R^M(\mathcal{D'}) - \sum_i \left(T_i\sqrt{\beta \over r} + { {\frac{\delta + \epsilon}{\eta r} }\mathbb{E}[W_i]}\right)$$
{$$\implies R^{M_2}(\mathcal{D}) \geq R^M(\mathcal{D'}) - \left(\sqrt{\frac{\beta}{r}} \sum_i T_i + \frac{\delta+\epsilon}{\eta}T\right)$$}
Recall that $M$ is just $M_1$ with rebates, hence: 
$$R^M(\mathcal{D'}) = \left(1-\eta\right)R^{M_1}(\mathcal{D'}).$$

Putting the above together with our choice of $r$ {and observing that $T_i \leq T$ for all $i$,} we conclude the proof of Theorem~\ref{thm:epsilon-BIC to BIC}.

\section{Proof of Corollary~\ref{cor:MHR}}\label{app:MHR}

Recall the definition of a Monotone Hazard Rate distribution.

\begin{definition}(Monotone Hazard Rate) A one-dimensional differentiable distribution $F$ satisfies the Monotone Hazard Rate Condition if $\frac{f(x)}{1-F(x)}$ is monotonically non-decreasing for all $x$ such that $F(x) < 1,$ where $f = F'$ is the probability density function.
\end{definition}

\noindent To prove Corollary~\ref{cor:MHR} we reduce the MHR case to the $[0,1]$ case. We do this by finding an appropriate $\Xi$ such that the probability that any bidder values any item above $\Xi$ is tiny. In fact, so tiny that even if we assumed the optimal mechanism could somehow extract full value from bidders when they value an item above $\Xi$, this would account for a tiny fraction of the total revenue. We make use of the following two lemmas from Cai and Daskalakis~\cite{CD}. For a distribution $F$, let $\alpha_p = \inf\{x|F(x)\geq 1-1/p\}$. Then:

\begin{lemma}[\cite{CD}]\label{lem:CD1} If $F$ is MHR, then $k\alpha_p \geq \alpha_{p^k}$, for all $p,k \geq 1$.
\end{lemma}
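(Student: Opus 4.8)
The plan is to route everything through the cumulative hazard function, whose convexity is exactly what the MHR condition buys us. Write $G := 1-F$ for the survival function. Since $F$ is a distribution over non-negative values, let $\underline{x} \ge 0$ be the left endpoint of its support, and on $\{x : F(x) < 1\}$ write $G(x) = \exp(-H(x))$ with $H(x) := \int_{\underline{x}}^{x} h(t)\,dt$ the cumulative hazard, where $h(t) = f(t)/(1-F(t))$; note $H(\underline{x}) = 0$ and $h \ge 0$, so $H \ge 0$.

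First I would establish the sub-multiplicativity $G(kx) \le G(x)^k$ for all $x \ge \underline{x}$ and all real $k \ge 1$. The MHR condition says $h$ is non-decreasing, so $H$ is convex on $[\underline{x},\infty)$. For $x > \underline{x}$, write $x$ as a convex combination $x = \lambda \underline{x} + (1-\lambda)(kx)$ with $\lambda = \frac{(k-1)x}{kx-\underline{x}} \in [0,1)$; convexity together with $H(\underline{x}) = 0$ gives $H(x) \le (1-\lambda)H(kx)$, and since $\frac{1}{1-\lambda} = \frac{kx-\underline{x}}{x-\underline{x}} \ge k$ (equivalently $(k-1)\underline{x} \ge 0$) we conclude $H(kx) \ge k\,H(x)$. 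Exponentiating yields $G(kx) \le G(x)^k$. The degenerate cases $x = \underline{x}$ and $F(x) = 1$ are trivial, since then $H(\underline{x}) = 0$ respectively $G \equiv 0$.

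Then I would finish in two lines. By definition $\alpha_p = \inf\{x : F(x) \ge 1-1/p\}$, and right-continuity of $F$ gives $F(\alpha_p) \ge 1-1/p$, i.e. $G(\alpha_p) \le 1/p$. Applying the sub-multiplicativity established above with $x = \alpha_p$,
$$G(k\alpha_p) \le G(\alpha_p)^k \le p^{-k},$$
hence $F(k\alpha_p) \ge 1 - p^{-k}$, and the definition of $\alpha_{p^k}$ as an infimum forces $\alpha_{p^k} \le k\alpha_p$, which is the claim.

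I do not expect a real obstacle here: the substantive content is the single inequality $G(kx)\le G(x)^k$ for MHR survival functions, and even that reduces to the elementary fact that a convex function vanishing at a non-negative left endpoint scales super-additively under multiplication of its argument by $k\ge 1$. The only points requiring care are the bookkeeping at the endpoint $\underline{x}$ and the use of right-continuity of $F$ to move between the infimum definition of $\alpha_p$ and the clean inequality $G(\alpha_p)\le 1/p$; neither is more than a remark.
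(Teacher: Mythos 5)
Your proof is correct, and since the paper only cites this lemma from~\cite{CD} without reproducing its proof, the relevant comparison is to the argument in that reference, which likewise rests on the log-concavity of the survival function (equivalently, convexity of the cumulative hazard) implied by MHR. Your chain $H$ convex, $H(\underline{x})=0$ $\Rightarrow$ $H(kx)\ge kH(x)$ $\Rightarrow$ $G(kx)\le G(x)^k$ $\Rightarrow$ $\alpha_{p^k}\le k\alpha_p$ is the standard route, and the calculation is sound: the convex-combination coefficient $1-\lambda=\frac{x-\underline{x}}{kx-\underline{x}}$ is indeed at most $1/k$ precisely because $(k-1)\underline{x}\ge 0$, and since $H\ge 0$ the inequality $H(x)\le(1-\lambda)H(kx)$ then gives $H(kx)\ge kH(x)$. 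Your handling of the endpoint $\underline{x}$ is slightly more careful than strictly necessary (Cai--Daskalakis effectively work with support left endpoint $0$, where the convexity step is the one-line $H(x)\le\frac1k H(kx)+(1-\frac1k)H(0)$), but it costs nothing and covers the general non-negative case. The right-continuity remark to get $F(\alpha_p)\ge 1-1/p$ from the infimum definition is the right thing to observe, and the final two lines are exactly as they should be. No gap.
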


\begin{lemma}[\cite{CD}] \label{lem:CD2} If $F$ is MHR and $X$ is a random variable distributed according to $F$, then $\mathbb{E}[X|X \geq \alpha_p]\cdot Pr[X \geq \alpha_p] \in O(\alpha_p/p)$.
\end{lemma}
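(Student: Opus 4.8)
\textbf{Proof plan for Lemma~\ref{lem:CD2}.} The plan is to reduce the claim to the standard exponential‑tail estimate enjoyed by MHR distributions, using the monotonicity of the hazard rate twice — once to control the upper tail beyond $\alpha_p$, and once to turn the normalization identity into a lower bound on the hazard rate at $\alpha_p$. Write $h(x) = f(x)/(1-F(x))$ for the hazard rate (non‑decreasing by hypothesis), and note that since $F$ is differentiable, hence continuous, $\Pr[X\ge\alpha_p] = 1-F(\alpha_p) = 1/p$ by the definition of $\alpha_p$. Thus the quantity to bound is exactly $\E[X\cdot\ind[X\ge\alpha_p]] = \int_{\alpha_p}^{\infty} x f(x)\,dx$, and it suffices to show this is $O(\alpha_p/p)$.

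First I would record two elementary facts. (i) Integration by parts gives $\int_{\alpha_p}^{\infty} x f(x)\,dx = \alpha_p\,(1-F(\alpha_p)) + \int_{\alpha_p}^{\infty}(1-F(x))\,dx = \frac{\alpha_p}{p} + \int_{\alpha_p}^{\infty}(1-F(x))\,dx$, valid once we know the boundary term $x(1-F(x))$ vanishes at infinity, which the tail bound below supplies. (ii) Since $\frac{d}{dx}\bigl[-\ln(1-F(x))\bigr] = h(x)$ wherever $F<1$, integrating from $\alpha_p$ to $x$ yields $1-F(x) = \tfrac1p\exp\!\bigl(-\int_{\alpha_p}^{x} h(t)\,dt\bigr)$, and monotonicity of $h$ gives $1-F(x) \le \tfrac1p\,e^{-h(\alpha_p)(x-\alpha_p)}$ for all $x\ge\alpha_p$. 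Here $h(\alpha_p)>0$ because $\alpha_p$ lies in the interior of the support (where $0<F<1$ for $p>1$); consequently the boundary term vanishes, and $\int_{\alpha_p}^{\infty}(1-F(x))\,dx \le \frac{1}{p\,h(\alpha_p)}$.

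Next I would lower bound $h(\alpha_p)$ using that the support is contained in $[0,\infty)$ (values are nonnegative, so $F(0)=0$): integrating $h$ over $[0,\alpha_p]$, $\ln p = -\ln(1-F(\alpha_p)) = \int_{0}^{\alpha_p} h(t)\,dt \le \alpha_p\,h(\alpha_p)$, again by monotonicity of $h$. Hence $\frac{1}{h(\alpha_p)} \le \frac{\alpha_p}{\ln p}$, and combining the pieces, $\E[X\mid X\ge\alpha_p]\cdot\Pr[X\ge\alpha_p] \le \frac{\alpha_p}{p} + \frac{\alpha_p}{p\ln p} = O(\alpha_p/p)$, the last step being clean for $p$ bounded away from $1$ (the only regime in which the statement is meaningful and is used).

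The only points requiring care are the mild regularity technicalities — absolute continuity (no atoms), $F$ strictly below $1$ on the range of interest, and $h$ finite and positive at $\alpha_p$ — together with the use of the nonnegative‑support assumption, without which the final step genuinely fails; that assumption is automatic here since we work with value distributions. I do not anticipate a real obstacle: the entire content is the two‑sided exploitation of the monotone hazard rate, yielding both the exponentially decaying tail past $\alpha_p$ and the bound $h(\alpha_p)\ge(\ln p)/\alpha_p$.
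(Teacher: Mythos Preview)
The paper does not prove this lemma; it is quoted verbatim from Cai and Daskalakis~\cite{CD} and used as a black box in Appendix~\ref{app:MHR}. So there is no in-paper proof to compare against.

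Your argument is correct and is essentially the standard proof of this fact. The two uses of monotonicity of $h$ --- upward from $\alpha_p$ to get the exponential tail $1-F(x)\le \tfrac1p e^{-h(\alpha_p)(x-\alpha_p)}$, and downward over $[0,\alpha_p]$ to get $h(\alpha_p)\ge (\ln p)/\alpha_p$ --- are exactly the right ingredients, and the integration-by-parts decomposition is the clean way to assemble them. Your caveats are also apt: the nonnegative-support assumption is genuinely needed for the second inequality (and is automatic for value distributions), and the $1/\ln p$ term means the constant in the $O(\cdot)$ is only uniform for $p$ bounded away from $1$, which is the only regime used in the paper (there $p$ is $n^{\zeta}$ or $m^{\zeta}$ with $\zeta\ge 2$). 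No gaps.
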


Set $\zeta= \lceil \log_2 k/\epsilon \rceil +1$. For $k$-bidders, let $\alpha_{i,n}$ denote $\alpha_n$ for $F = \mathcal{F}_i$, for all $i$. Let then $\Xi = \max_i \{\alpha_{i,n^{\zeta}}\}$. Then by Lemma \ref{lem:CD2}, even if we could extract full value from every bidder for each item they valued above $\Xi$, we would only make at most $O(k n \Xi/n^\zeta) = O(\epsilon \Xi)$ expected revenue. In addition, there is a trivial mechanism that makes expected revenue $\Omega(\Xi / \log{k/\epsilon})$. Just price each item at $\Xi' = \max_i\{ \alpha_{i,n}\}$ and sell them on a first-come first-served basis. By Lemma \ref{lem:CD1}, we know that $\alpha_{i,n} \geq \alpha_{i,n^{\zeta}}/\zeta$, and therefore $\Xi' \geq \Xi/\zeta$. In addition, the probability that an item gets sold is a constant (approximately $1/e$), so we make revenue $\Omega(\Xi') = \Omega(\Xi/ \log k/\epsilon)$. These two observations together tell us that we can completely ignore the revenue from extreme bidders without losing too much. So our algorithm for $k$-bidders on MHR distributions is as follows: For each $\mathcal{D}_i$, create a new distribution $\mathcal{D}'_i$ that rounds each $v_{ij}$ down to the nearest multiple of $\delta \Xi$ if $v_{ij} < \Xi$, or down to $\Xi$ if $v_{ij} > \Xi$. Find an optimal mechanism $M_1$ for the distribution $\times_i \mathcal{D}'_i$. Then sample surrogates from $\mathcal{D}'_i$ and replicas from $\mathcal{D}_i$ and go through the same reduction as for $[0,1]$ (described in Section~\ref{sec:true PTAS}). Because we sample replicas directly from $\mathcal{D}_i$, this solution will still be BIC (see Appendix \ref{app:true PTAS}). In addition, by the arguments in Appendix \ref{app:true PTAS} and the observations above, we get an additive $O(\epsilon \Xi)$ approximation to the optimal revenue. Because there is a trivial mechanism making revenue $\Omega(\Xi/\log k/\epsilon)$, this is in fact a multiplicative $(1-O(\epsilon \cdot \log k/\epsilon))$ approximation.

For $k$-items, let $\alpha_{j,m}$ denote $\alpha_m$ when $F$ is the marginal distribution of $\mathcal{F}$ for item $j$. Then let $\Xi = \max_j \{\alpha_{j,m^\zeta}\}$. Then by Lemma \ref{lem:CD2}, even if we could extract full value from every bidder for each item they valued above $\Xi$, we could only make at most $O(km\Xi/m^\zeta)=O(\epsilon \Xi)$ expected revenue. In addition, the first-come first-served mechanism that prices all items at $\Xi' = \max_j\{ \alpha_{j,m}\}$ makes $\Omega(\Xi/\log k/\epsilon)$ revenue for the same reasons that the first-come first-served mechanism in the previous paragraph made $\Omega(\Xi/\log k/\epsilon)$ revenue. Again, these observations tell us that we can completely ignore the revenue from extreme bidders without losing too much. Our alogrithm for $k$-items on MHR distributions is the same as for $k$-bidders: $\mathcal{D'}$ samples from $\mathcal{D}$ then rounds each $v_{ij}$ down to the nearest multiple of $\delta \Xi$ if $v_{ij} < \Xi$, and down to $\Xi$ otherwise. We compute the optimal mechanism for ${\cal D}'$. We then sample surrogates from $\mathcal{D}'_i$ and replicas from $\mathcal{D}_i$ and go through the same reduction as for $[0,1]$ (described in Section~\ref{sec:true PTAS}). Again, because we sample replicas directly from $\mathcal{D}$, the solution is still BIC, and by the arguments in Appendix \ref{app:true PTAS} and the observations above, we get an additive $O(\epsilon \Xi)$ approximation to the optimal revenue, which is again a multiplicative $(1-O(\epsilon \cdot \log k/\epsilon))$ approximation.

\section{Extending Theorem~\ref{thm:additive} and Corollary~\ref{cor:MHR} to IC Mechanisms} \label{sec: IC results}

As the modifications to the na\"ive LP for going from BIC to IC are trivial, we will not restate the na\"ive LP for IC here. The symmetry theorem (Theorem \ref{thm:symmetries}, Section \ref{sec:symmetries}) has already been proven for IC mechanisms. The first stop along our proof where we have to treat IC and BIC differently is the monotonicity of item-symmetric mechanisms {(for the $k$-bidders problem)}.

\begin{definition}(Strong-Monotonicity of an IC mechanism) An IC or $\epsilon$-IC mechanism is said to be strongly monotone if for all $i,j,j'$ and $\vec{v}$ such that $v_{kj} = v_{kj'}$ for all $k \neq i$, $\phi_{ij}(\vec{v}) > \phi_{ij'}(\vec{v}) \Rightarrow v_{ij} \geq v_{ij'}$.
\end{definition}

\begin{theorem}\label{thm:ICmonotone} Any item-symmetric IC mechanism is strongly monotone. For all item-symmetric $\epsilon$-IC mechanisms $M$, there exists a mechanism $M'$ of equal revenue that is strongly monotone.
\end{theorem}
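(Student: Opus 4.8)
The plan is to follow the proof of Theorem~\ref{thm:monotone} almost verbatim, replacing every expectation over $\vec{v}_{-i}$ by evaluation at a single profile. This is legitimate because IC, unlike BIC, is a pointwise (worst-case) property and no distribution enters the IC notion of strong-monotonicity; in particular the item-symmetry of $\mathcal D$ is not even needed for the IC statement.

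For the first claim, suppose $M$ is item-symmetric and IC but not strongly monotone, and let $\vec{v}$ witness the violation: $v_{kj}=v_{kj'}$ for every bidder $k\neq i$, $v_{ij}<v_{ij'}$, yet $\phi_{ij}(\vec{v})>\phi_{ij'}(\vec{v})$. Let $\sigma\in S_n$ transpose items $j$ and $j'$. The hypothesis $v_{kj}=v_{kj'}$ for $k\neq i$ is exactly what makes $\sigma(\vec{v})$ agree with $\vec{v}$ on every bidder except $i$, so $\sigma(\vec{v})=(\vec{w}_i;\vec{v}_{-i})$ where $\vec{w}_i$ is $\vec{v}_i$ with its $j$-th and $j'$-th coordinates exchanged; thus $\vec{w}_i$ is a legal unilateral deviation for bidder $i$ against $\vec{v}_{-i}$. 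Since $M$ respects $\sigma$, $M(\sigma(\vec{v}))=\sigma(M(\vec{v}))$; in particular $p_i(\sigma(\vec{v}))=p_i(\vec{v})$ (because $\sigma$ permutes items only, not bidders), and bidder $i$'s allocation at $\sigma(\vec{v})$ is $\phi_{i\cdot}(\vec{v})$ with the $j$- and $j'$-entries exchanged. A one-line computation then gives that bidder $i$'s gain from deviating to $\vec{w}_i$ is exactly $(v_{ij'}-v_{ij})(\phi_{ij}(\vec{v})-\phi_{ij'}(\vec{v}))>0$, contradicting IC. The same computation shows that $\epsilon$-IC only forces this gain to be at most $\epsilon\, v_{\max}\sum_m\phi_{im}(\vec{w}_i;\vec{v}_{-i})$, which is precisely the slack the correction below must absorb.

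For the second claim I would mirror the $\epsilon$-BIC correction of Theorem~\ref{thm:monotone}, carried out at the level of profiles rather than of bidder-$i$ reports. Pick a profile $\vec{v}^*$ witnessing a violation as above, with transposition $\sigma=(j\ j')$, and modify $M$ only on the orbit $\mathcal O=\{\pi(\vec{v}^*):\pi\in S_n\}$: set $M'(\vec{v}^*)=\mathbb{E}_{\pi\in\mathrm{Stab}(\vec{v}^*)}\big[\pi(M(\sigma(\vec{v}^*)))\big]$, extend to the rest of $\mathcal O$ by item-symmetry, and leave $M'=M$ off $\mathcal O$; the averaging over $\mathrm{Stab}(\vec{v}^*)$ plays exactly the role of the ``pick a random $\tau$'' step of Theorem~\ref{thm:monotone} and is what makes $M'$ a well-defined item-symmetric mechanism. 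Revenue is unchanged because every element of $S_n$ permutes items only, so $p_i(\sigma(\vec{v}^*))=p_i(\vec{v}^*)$ for all $i$; feasibility is immediate since each $M'(\cdot)$ is a convex combination of feasible outcomes of $M$; and the rearrangement identity from the first part shows $\phi'_{ij}(\vec{v}^*)\le \phi'_{ij'}(\vec{v}^*)$, so the violation at $\vec{v}^*$ — hence, by item-symmetry, throughout $\mathcal O$ — is removed, while the mirror computation for the reversed ordering shows none appears at $\sigma(\vec{v}^*)$. Iterating over the finitely many violating orbits (in the discrete, constant-support-per-dimension setting in which this theorem is applied there are finitely many profiles) yields a strongly monotone, item-symmetric, $\epsilon$-IC mechanism of equal revenue; one also records in passing the IC analogue of Observation~\ref{obs:monotone}.

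The main obstacle is verifying that the correction preserves $\epsilon$-IC (and that the iteration terminates). Unlike in the pure-IC step, changing bidder $i$'s allocation at $\vec{v}^*$ is coupled through feasibility to changes in the allocations of bidders $k\neq i$ at profiles of $\mathcal O$, so one cannot simply invoke ``the truthful outcome only improved''. The resolution, exactly as in the proof of Theorem~\ref{thm:monotone}, is that $M'$ awards no outcome that is not a convex combination of outcomes $M$ awards on $\mathcal O$, and that both prices and the $\epsilon$-incentive slacks $\sum_m\phi_{\ell m}(\cdot)$ are invariant under the item-permutations relating these profiles; averaging over those permutations the $\epsilon$-IC inequalities valid for $M$ then yields the $\epsilon$-IC inequality for $M'$, with the rearrangement inequality used once more to argue that, for bidder $i$ at a report in the affected orbit, the reshuffled allocation is at least as valuable to her. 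A secondary point to pin down is that the average over $\mathrm{Stab}(\vec{v}^*)$ does not re-introduce the violation: this uses that any $\pi\in\mathrm{Stab}(\vec{v}^*)$ maps the set of items that every bidder values like $j$ onto itself and likewise for $j'$, and that these two sets are disjoint because bidder $i$ strictly distinguishes $j$ from $j'$.
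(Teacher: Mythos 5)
Your proof of the first claim is correct and is essentially the paper's argument: the hypothesis $v_{kj}=v_{kj'}$ for all $k\neq i$ is exactly what makes the swap of bidder $i$'s values for $j,j'$ a legal unilateral deviation whose effect, by item-symmetry of $M$, is to transpose $\phi_{ij}(\vec v)$ and $\phi_{ij'}(\vec v)$; the one-line utility computation then contradicts IC. Good.

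For the $\epsilon$-IC fix, the paper says only ``follows the same lines,'' whereas you commit to a concrete construction; in doing so you correctly observe that the $\epsilon$-BIC fix of Theorem~\ref{thm:monotone} cannot be transplanted pointwise, since swapping bidder $i$'s report is only a permutation of the outcome when $\vec v_{-i}$ is $(j\;j')$-invariant, so localizing to the orbit $\mathcal O$ of the full violating profile is the right move. But your argument that $M'$ stays $\epsilon$-IC is not closed. Consider a bidder $\ell\neq i$ whose truthful profile $\vec v$ lies outside $\mathcal O$ but who can misreport $\vec w_\ell$ with $(\vec w_\ell;\vec v_{-\ell})\in\mathcal O$. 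Writing the stabilizer average in input form, $M'_\ell(\vec w_\ell;\vec v_{-\ell})=\mathbb{E}_\pi\big[M_\ell(\vec w_\ell;\psi_\pi(\vec v_{-\ell}))\big]$, where each $\psi_\pi$ fixes $\vec w_\ell$ and every $\vec v_k$ with $k\neq i$ but \emph{not} bidder $i$'s report. Consequently the only $\epsilon$-IC inequalities of $M$ available to average are those at the truthful points $(\vec v_\ell;\psi_\pi(\vec v_{-\ell}))$, which are different profiles from $\vec v$, and nothing relates $U(\vec v_\ell,M_\ell(\vec v))$ to $\mathbb{E}_\pi\big[U(\vec v_\ell,M_\ell(\vec v_\ell;\psi_\pi(\vec v_{-\ell})))\big]$. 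Your rearrangement observation handles bidder $i$'s truthful utility inside $\mathcal O$, not this cross-boundary deviation by $\ell\neq i$; and indeed, because $\vec v_\ell$ can distinguish items that $\vec w_\ell=\rho(\vec v^*_\ell)$ treats as equal, the reshuffle can strictly raise bidder $\ell$'s deviation utility and eat into the $\epsilon$ slack. You need either to argue this increase is bounded appropriately, or to design $M'$ so that every outcome it offers on $\mathcal O$ was already reachable under $M$ by some report of the deviating bidder against the \emph{same} $\vec v_{-\ell}$ (the usual ``no new alternatives'' device), which the stabilizer average as you have written it does not obviously guarantee.
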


\begin{proof} The proof follows the same lines as that of Theorem \ref{thm:monotone} after making a quick observation. If $v_{kj} = v_{kj'}$ for all $k \neq i$ and $\sigma$ is the permutation that swaps items $j$ and $j'$, then if bidder $i$ swaps $v_{ij}$ and $v_{ij'}$, he turns $\vec{v}$ into $\sigma(\vec{v})$, simply because $\sigma$ does not affect $\vec{v}_{-i}$. As any item-symmetric mechanism $M$ must have $M(\sigma(\vec{v})) = \sigma(M(\vec{v}))$, the rest of the proof follows that of Theorem \ref{thm:monotone} as bidder $i$ can swap the probability that he receives items $j$ and $j'$ by swapping his values.
\end{proof}

Next, we have to turn Theorem \ref{thm:ICmonotone} into monotonicity constraints for the LP {of the $k$-bidders problem}. We say that $\vec{v} \sim_i \vec{w}$ if there exists a $\sigma$ such that $\sigma(\vec{v}) = \sigma(\vec{w})$ and $\sigma(\vec{v}_k) = \vec{v}_k$ for all $k \neq i$. In other words, $\vec{w}_i$ is a permutation of $\vec{v}_i$ that is the identity on $\vec{v}_{-i}$. Then let $E_i(\vec{v}_{-i})$ denote the set of $\vec{w}$ such that {($\vec{w}_{-i} = \vec{v}_{-i}$) and ($\forall j \wedge j < j': w_{kj} = w_{kj'}  \Rightarrow w_{ij} \geq w_{ij'}$)}. I.e. {$E_i(\vec{v}_{-i})$} is a set of representatives under the equivalence relation $\sim_i$ for a fixed $\vec{v}_{-i}$. Strong-monotonicity implies then the following.

\begin{observation}\label{obs: ICmonotone} When playing an {item-symmetric}, strongly monotone IC mechanism, if $v_{kj} = v_{kj'}$ for all $k\neq i$, and $v_{ij} {>} v_{ij'}$, then bidder $i$ has no incentive report {any $\vec{w}_i$ such that $w_{ij'} > w_{ij}$.}
\end{observation}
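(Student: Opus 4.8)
The plan is to run the exchange argument of Observation~\ref{obs:monotone} at the level of a single bid profile rather than in expectation over $\vec{v}_{-i}$. Fix a bidder $i$, a profile $\vec{v}_{-i}$ with $v_{kj}=v_{kj'}$ for all $k\neq i$, and a true type $\vec{v}_i$ with $v_{ij}>v_{ij'}$. Let $\sigma\in S_n$ be the transposition of items $j$ and $j'$, and suppose bidder $i$ considers reporting some $\vec{w}_i$ with $w_{ij'}>w_{ij}$. I will show that reporting $\sigma(\vec{w}_i)$ instead yields weakly higher utility for $\vec{v}_i$ against this $\vec{v}_{-i}$, so bidder $i$ never strictly prefers a report that orders $j,j'$ oppositely to the truth; this is exactly the statement of the observation (and is the form in which it is used to keep only the representative IC constraints over $E_i(\vec{v}_{-i})$ in the $k$-bidders LP).

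First I would note that, because $\sigma$ moves only coordinates $j$ and $j'$ and $v_{kj}=v_{kj'}$ for every $k\neq i$, we have $(\sigma(\vec{w}_i)~;~\vec{v}_{-i}) = \sigma\big((\vec{w}_i~;~\vec{v}_{-i})\big)$. Since $M$ is item-symmetric it respects $\sigma$, hence $M_i(\sigma(\vec{w}_i)~;~\vec{v}_{-i}) = \sigma\big(M_i(\vec{w}_i~;~\vec{v}_{-i})\big)$: concretely the allocation probabilities of items $j$ and $j'$ are swapped, every other $\phi_{i\ell}$ is unchanged, and the price $p_i$ is unchanged. Next, I apply the strong-monotonicity condition at the reported profile $(\vec{w}_i~;~\vec{v}_{-i})$ --- permissible since at items $j,j'$ all bidders $k\neq i$ agree and $w_{ij}<w_{ij'}$ --- whose contrapositive gives $\phi_{ij}(\vec{w}_i~;~\vec{v}_{-i}) \le \phi_{ij'}(\vec{w}_i~;~\vec{v}_{-i})$. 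Writing $a := \phi_{ij}(\vec{w}_i~;~\vec{v}_{-i})$ and $b := \phi_{ij'}(\vec{w}_i~;~\vec{v}_{-i})$, the utility of type $\vec{v}_i$ from reporting $\sigma(\vec{w}_i)$ minus that from reporting $\vec{w}_i$ equals $(v_{ij}b + v_{ij'}a) - (v_{ij}a + v_{ij'}b) = (v_{ij}-v_{ij'})(b-a) \ge 0$, since $v_{ij}>v_{ij'}$ and $b\ge a$; all other allocation terms and both prices cancel. This is the claimed weak domination.

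I do not expect a genuine obstacle: this is the IC (per-profile) mirror of Observation~\ref{obs:monotone}, and its proof uses only item-symmetry together with strong-monotonicity (Theorem~\ref{thm:ICmonotone}), not the IC property itself. The only point needing care is the bookkeeping --- invoking item-symmetry of $M$ with respect to the full group $S_n$ so that $M(\sigma(\vec{v}))=\sigma(M(\vec{v}))$ applies to the particular transposition and profile above, and checking that applying $\sigma$ to the outcome permutes the $\phi_{i\ell}$'s but leaves the scalar price untouched. To obtain the statement for all of $E_i(\vec{v}_{-i})$ (and thereby justify dropping the non-representative IC constraints in the succinct $k$-bidders LP), one iterates this single-transposition step over pairs $j<j'$ with $v_{kj}=v_{kj'}$ for all $k\neq i$ but $w_{ij'}>w_{ij}$, each step correcting one such inversion without introducing new ones, exactly as in the proof of Lemma~\ref{lem: succinct LP for k-bidders}.
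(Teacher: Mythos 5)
Your proof is correct, and it is the argument the paper clearly has in mind: the paper states Observation~\ref{obs: ICmonotone} without proof, following the sentence ``Strong-monotonicity implies then the following,'' and your transposition argument is exactly the per-profile analogue of the BIC Observation~\ref{obs:monotone} implied by that remark and by the proof of Theorem~\ref{thm:ICmonotone}. The two checks you flag are indeed the only ones that matter --- that $(\sigma(\vec{w}_i)\,;\,\vec{v}_{-i}) = \sigma\bigl((\vec{w}_i\,;\,\vec{v}_{-i})\bigr)$ because $\sigma$ fixes $\vec{v}_{-i}$ (using $v_{kj}=v_{kj'}$ for $k\neq i$), and that the IC strong-monotonicity contrapositive applied \emph{at the reported profile} gives $\phi_{ij}(\vec{w}_i\,;\,\vec{v}_{-i}) \leq \phi_{ij'}(\vec{w}_i\,;\,\vec{v}_{-i})$ --- after which the rearrangement $(v_{ij}-v_{ij'})(b-a)\ge 0$ closes it. Your closing remark about iterating transpositions to land in $E_i(\vec{v}_{-i})$ is also the right way to connect the observation to Corollary~\ref{cor: ICmonotone} and the succinct IC constraints, matching the intent of the paper.
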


\begin{corollary}\label{cor: ICmonotone} If {$M$ is strongly-monotone and item-symmetric} and when playing $M$, for all $i$ and $\vec{v}_{-i}$, bidder $i$ never has (more than $\epsilon$) incentive to misreport $\vec{w}_i \in E_i(\vec{v}_{-i})$ when her true type is $\vec{v}_i \in E_i(\vec{v}_{-i})$ for any $\vec{w}_i,\vec{v}_i$, then $M$ is IC ($\epsilon$-IC).
\end{corollary}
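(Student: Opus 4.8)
The plan is to verify the IC (resp.\ $\epsilon$-IC) inequality for an \emph{arbitrary} triple $(\vec v_i,\vec w_i,\vec v_{-i})$ by reducing it to one in which both the true type and the report lie in $E_i(\vec v_{-i})$, where the hypothesis already applies. Note at the outset that every transformation used below is a permutation of the items, which leaves $\sum_j\phi_{ij}(\cdot)$ unchanged; hence it suffices to treat the exact-IC case, and the $\epsilon$-IC case follows verbatim by carrying the error term $\epsilon v_{\max}\sum_j\phi_{ij}(\vec w_i;\vec v_{-i})$ along, since it equals $\epsilon v_{\max}\sum_j\phi_{ij}(\vec w_i^{*};\vec v_{-i})$ for the reduced report $\vec w_i^{*}$.

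\textbf{Step 1 (reduce the true type).} Fix $i$ and $\vec v_{-i}$ and partition the items into blocks, where $j,j'$ lie in the same block iff $v_{kj}=v_{kj'}$ for all $k\neq i$; a permutation $\tau\in S_n$ that only permutes items within blocks fixes $\vec v_{-i}$. Given $\vec v_i$, pick such a $\tau$ with $\tau(\vec v_i)\in E_i(\vec v_{-i})$ (sort $\vec v_i$ decreasingly within each block). Because $M$ is item-symmetric and $\tau(\vec v_{-i})=\vec v_{-i}$, for every report $\vec x_i$ we have $\phi_i(\tau(\vec x_i);\vec v_{-i})=\tau(\phi_i(\vec x_i;\vec v_{-i}))$ and $p_i(\tau(\vec x_i);\vec v_{-i})=p_i(\vec x_i;\vec v_{-i})$, whence $U(\vec v_i,M_i(\vec x_i;\vec v_{-i}))=U(\tau(\vec v_i),M_i(\tau(\vec x_i);\vec v_{-i}))$. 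Applying this with $\vec x_i=\vec v_i$ and with $\vec x_i=\vec w_i$ shows the IC constraint for $(\vec v_i,\vec w_i,\vec v_{-i})$ is equivalent to that for $(\tau(\vec v_i),\tau(\vec w_i),\vec v_{-i})$, so we may assume $\vec v_i\in E_i(\vec v_{-i})$.

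\textbf{Step 2 (reduce the report).} Now $\vec v_i\in E_i(\vec v_{-i})$, so $\vec v_i$ is non-increasing in item index within each block, and $\vec w_i$ is arbitrary. Let $\vec w_i^{*}$ be obtained from $\vec w_i$ by permuting, within each block, the reported values so that they too are non-increasing in item index; then $\vec w_i^{*}\in E_i(\vec v_{-i})$ by construction. Strengthening Observation~\ref{obs: ICmonotone}, I claim bidder $i$ of true type $\vec v_i$ weakly prefers reporting $\vec w_i^{*}$ to $\vec w_i$: $\vec w_i^{*}=\rho(\vec w_i)$ for a within-block permutation $\rho$, so by item-symmetry the price is unchanged and, for each block $B$, the multiset $\{\phi_{ij}(\vec w_i^{*};\vec v_{-i}):j\in B\}$ equals $\{\phi_{ij}(\vec w_i;\vec v_{-i}):j\in B\}$; moreover by strong monotonicity at the profile $(\vec w_i^{*};\vec v_{-i})$ these probabilities are arranged within $B$ in the same weak order as the values of $\vec w_i^{*}$, hence of $\vec v_i$, so by the rearrangement inequality $\sum_{j\in B}v_{ij}\phi_{ij}(\vec w_i^{*};\vec v_{-i})\ge\sum_{j\in B}v_{ij}\phi_{ij}(\vec w_i;\vec v_{-i})$. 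Summing over blocks gives $U(\vec v_i,M_i(\vec w_i^{*};\vec v_{-i}))\ge U(\vec v_i,M_i(\vec w_i;\vec v_{-i}))$ and $\sum_j\phi_{ij}(\vec w_i^{*};\vec v_{-i})=\sum_j\phi_{ij}(\vec w_i;\vec v_{-i})$.

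Chaining the two steps: for arbitrary $(\vec v_i,\vec w_i,\vec v_{-i})$, Step 1 lets us take $\vec v_i\in E_i(\vec v_{-i})$, Step 2 produces $\vec w_i^{*}\in E_i(\vec v_{-i})$ with $U(\vec v_i,M_i(\vec w_i^{*};\vec v_{-i}))\ge U(\vec v_i,M_i(\vec w_i;\vec v_{-i}))$, the hypothesis gives $U(\vec v_i,M_i(\vec v_i;\vec v_{-i}))\ge U(\vec v_i,M_i(\vec w_i^{*};\vec v_{-i}))$ (resp.\ minus $\epsilon v_{\max}\sum_j\phi_{ij}(\vec w_i^{*};\vec v_{-i})$), and combining yields the IC (resp.\ $\epsilon$-IC) constraint for $(\vec v_i,\vec w_i,\vec v_{-i})$. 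The point requiring the most care is the \emph{order} of the reductions — Step 1 must precede Step 2, since the rearrangement-optimal report of Step 2 is forced into $E_i(\vec v_{-i})$ only once $\vec v_i$ is already the canonical representative — together with the bookkeeping of ties (the blocks coming from equalities among the other bidders' values, and equalities within $\vec v_i$, which item-symmetry handles) and of the $\sum_j\phi_{ij}$ factor through every permutation; everything else runs exactly parallel to the passage from Observation~\ref{obs:monotone} to the BIC $k$-bidders LP.
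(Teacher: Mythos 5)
Your proof is correct, and it fills in an argument that the paper itself leaves implicit — the paper states Corollary~\ref{cor: ICmonotone} without proof, relying on Observation~\ref{obs: ICmonotone}, which by itself is not quite enough (the Observation rules out out-of-order reports but does not by itself reduce an arbitrary true type $\vec v_i$ to a representative). Your two-step reduction (Step 1: move the true type into $E_i(\vec v_{-i})$ via a within-block permutation and the item-symmetry identity $U(\vec v_i, M_i(\vec x_i;\vec v_{-i})) = U(\tau(\vec v_i), M_i(\tau(\vec x_i);\vec v_{-i}))$; Step 2: move the report into $E_i(\vec v_{-i})$ via the rearrangement inequality) is exactly the right completion, and your tracking of the $\sum_j\phi_{ij}$ factor through each permutation makes the $\epsilon$-IC case go through cleanly.

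One small presentational imprecision in Step 2: you write that strong monotonicity alone makes the $\phi_{ij}(\vec w_i^*;\vec v_{-i})$ within a block respect the weak order of $\vec w_i^*$. Strong monotonicity gives only the contrapositive $w^*_{ij} < w^*_{ij'} \Rightarrow \phi_{ij} \le \phi_{ij'}$; to cover ties $w^*_{ij} = w^*_{ij'}$ you additionally need item-symmetry (the transposition of $j,j'$ fixes the profile $(\vec w_i^*;\vec v_{-i})$, hence $\phi_{ij} = \phi_{ij'}$). You acknowledge this tie-bookkeeping in your closing remark, but it would be cleaner to say explicitly in Step 2 that the monotone arrangement of the $\phi$'s within a block follows from strong monotonicity \emph{together with} item-symmetry, at which point all three sequences ($\vec v_i$, $\vec w_i^*$, and $\phi_{i\cdot}(\vec w_i^*;\vec v_{-i})$) are non-increasing on $B$ and the rearrangement inequality applies.
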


{Given Corollary~\ref{cor: ICmonotone}, we replace in the $k$-bidders LP the BIC and strong-monotonicity constraints with the following IC and strong-monotonicity constraints:}

\paragraph{IC, Strong-Monotonicity Constraints:}
$$\sum_j {v_{ij}}\phi_{ij}(\vec{v}) - p_i(\vec{v}) \geq \sum_j {v_{ij}}\phi_{ij}(\vec{w}) - p_i(\vec{w}),~~~\text{for all $i,\vec{v}\in E,\vec{w} \in E_i(\vec{v}_{-i})$;}$$
$$\phi_{ij}(\vec{v}) \ge \phi_{ij'}(\vec{v}),~~~\text{for all $i$, $j < j'$ and $\vec{v}\in E$ such that $v_{kj} = v_{kj'}$ for all $k \neq i$}.$$

\medskip {The last step is to show that there are only polynomially many IC constraints.} We show this in the following lemma:

\begin{lemma}\label{lem: numberIC} $|E_i(\vec{v}_{-i})| \leq |E|$ for all $i,\vec{v}_{-i}$.
\end{lemma}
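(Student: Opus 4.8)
The plan is to build an explicit injection $\rho : E_i(\vec{v}_{-i}) \hookrightarrow E$, from which $|E_i(\vec{v}_{-i})| \le |E|$ is immediate. Recall that in the $k$-bidders setting the relevant symmetry group $S$ consists of the permutations of the $n$ items applied simultaneously to all bidders, and that $E$ is a set of representatives for the action of $S$ on value profiles; thus every profile is $S$-equivalent to exactly one member of $E$, and I define $\rho(\vec{w})$ to be that member. The whole argument then reduces to showing that $\rho$ is injective on $E_i(\vec{v}_{-i})$, i.e. that two profiles of $E_i(\vec{v}_{-i})$ lying in a common $S$-orbit must be equal.

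The key observation concerns the block structure induced by $\vec{v}_{-i}$. Every $\vec{w} \in E_i(\vec{v}_{-i})$ satisfies $\vec{w}_{-i} = \vec{v}_{-i}$, so I partition the items into blocks by declaring $j \approx j'$ whenever $v_{kj} = v_{kj'}$ for all $k \neq i$. By the defining condition of $E_i(\vec{v}_{-i})$, for each $\vec{w}$ in this set and each block, the entries $w_{ij}$ are non-increasing as $j$ ranges over that block in increasing order of the item index. Now suppose $\vec{w}, \vec{w}' \in E_i(\vec{v}_{-i})$ and $\vec{w}' = \sigma(\vec{w})$ for some item permutation $\sigma \in S_n$ applied to all bidders simultaneously. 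Reading off the coordinates of the bidders $k \neq i$ and using $\vec{w}_{-i} = \vec{w}'_{-i} = \vec{v}_{-i}$ gives $v_{k,\sigma^{-1}(j)} = v_{kj}$ for all $k \neq i$ and all $j$; hence $\sigma^{-1}(j) \approx j$ for every $j$, i.e. $\sigma$ maps each block onto itself.

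It remains to argue that such a block-preserving $\sigma$ forces $\vec{w}_i = \vec{w}'_i$. Fix a block $B$. Since $\sigma$ permutes $B$ internally and $w'_{ij} = w_{i,\sigma^{-1}(j)}$ for $j \in B$, the tuple $(w'_{ij})_{j \in B}$ is a rearrangement of the tuple $(w_{ij})_{j \in B}$; but both of these tuples are non-increasing in the item index (both profiles lie in $E_i(\vec{v}_{-i})$), and a multiset has a unique non-increasing arrangement, so the two tuples coincide. Doing this for every block yields $\vec{w}'_i = \vec{w}_i$, and together with $\vec{w}'_{-i} = \vec{v}_{-i} = \vec{w}_{-i}$ we get $\vec{w}' = \vec{w}$. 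This establishes injectivity of $\rho$ and hence the lemma. I expect the only point needing care is the bookkeeping of the simultaneous item permutation — verifying that agreement on bidders $k \neq i$ forces $\sigma$ to stabilize the block partition, and then spelling out why the sorted representative inside each block is unique; the rest is a one-line counting argument.
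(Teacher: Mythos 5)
Your proof is correct and is essentially the paper's argument: both reduce to showing that two elements of $E_i(\vec{v}_{-i})$ in the same $S$-orbit must coincide, by observing that any $\sigma$ relating them fixes $\vec{v}_{-i}$ (hence stabilizes the blocks) and that the sorted representative of bidder $i$'s values within each block is unique. Your phrasing via an explicit map $\rho$ into $E$ is just a cleaner packaging of the same injectivity statement.
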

\begin{proof}
We prove the lemma by showing that if $\vec{w} \in E_i(\vec{v}_{-i})$, and $\sigma(\vec{w}) \in E_i(\vec{v}_{-i})$, then $\sigma(\vec{w}) = \vec{w}$. Observe first that in order to possibly have $\sigma(\vec{w}) \in E_i(\vec{v}_{-i})$, we must have $\sigma(\vec{w}_{-i}) = \vec{v}_{-i}=\vec{w}_{-i}$. In other words, if $\sigma(j) = j'$, then $w_{kj} = w_{kj'}$ for all $k \neq i$. However, in order for $\vec{w} \in E_i(\vec{v}_{-i})$, it must be the case that for all such $j,j'$, $w_{ij} > w_{ij'} \Rightarrow j < j'$, which means that there is a unique ordering of such values in $\vec{w}_i$ that will make $\vec{w} \in E_i(\vec{v}_{-i})$. Because $\vec{w}$ and $\sigma(\vec{w})$ must respect the same ordering, they must be the same vector.

Therefore, because $E_i(\vec{v}_{-i})$ contains at most one representative per equivalence class under $\sim$, and $E$ contains exactly one representative per equivalence class, we have that $|E_i(\vec{v}_{-i})| \leq |E|$.
\end{proof}

Our discretization lemma of Section~\ref{sec:epsilon-truthful}, namely Lemma~\ref{lem:deltaIC}, has exactly the same statement, replacing BIC with IC, and its proof is very similar (just removing expectations over the other bidders' types where appropriate). As we do not have an $\epsilon$-IC to IC reduction, the above changes are actually the only changes that are needed to adjust Theorem \ref{thm:additive} to $\epsilon$-IC.  The proof of Corollary \ref{cor:MHR} also works for $\epsilon$-IC, so we have shown how to obtain $\epsilon$-IC mechanisms for the settings of Theorem~\ref{thm:additive} and Corollary~\ref{cor:MHR}. As discussed in Remark \ref{rem:additive}, we can accommodate arbitrary budget constraints as we do not use an analogue of the $\epsilon$-BIC to BIC reduction, which was the only step in our BIC proof that could not accommodate budgets.

\end{document}